\tikzstyle{block}=[draw opacity=0.7,line width=1.4cm]
\tikzstyle{graphnode}=[circle, draw, fill=black!20, inner sep=0pt, minimum width=6pt]
\tikzstyle{point}=[circle, draw, fill=black!30, inner sep=0pt, minimum width=1pt]
\tikzstyle{input}=[rectangle, draw, fill=black!75,inner sep=3pt, inner ysep=3pt, minimum width=4pt]
\tikzstyle{unmatched}=[graphnode,fill=black!0]
\tikzstyle{shaded}=[graphnode,fill=black!20]
\tikzstyle{matched}=[graphnode,fill=black!100]  	
\tikzstyle{matching} = [ultra thick]
\tikzset{
    >=stealth',
    pil/.style={
           ->,
           thick,
           shorten <=2pt,
           shorten >=2pt,}
}
\tikzset{->-/.style={decoration={
  markings,
  mark=at position .5 with {\arrow{>}}},postaction={decorate}}}
\newtheorem{theorem}{Theorem}[section]
\newtheorem{lemma}[theorem]{Lemma}
\newtheorem{corollary}[theorem]{Corollary}
\newtheorem{definition}[theorem]{Definition}
\newtheorem{proposition}[theorem]{Proposition}
\newtheorem{fact}[theorem]{Fact}
\newcommand{\IGNORE}[1]{}
\newcommand{\sa}{\textsf{Sherali-Adams}}
\newcommand{\ls}{\textsf{Lov\'asz-Schrijver}}
\newcommand{\la}{\textsf{Lasserre}}
\newcommand{\iLS}{\textsf{LS}}
\newcommand{\iLSp}{\textsf{LS}$_+$}
\newcommand{\iSA}{\textsf{SA}}
\newcommand{\pop}{\mathcal{P}}
\newcommand \reals {\mathbb{R}}
\def \integers {\mathbb{Z}}
\newcommand{\xout}[1]{x\left( \delta^{out}(#1)\right)}
\newcommand{\xin}[1]{x\left( \delta^{in}(#1)\right)}
\newcommand{\atspbal}{\homog{\textsf{ATSP$_{\mathit{BAL}}$}}}
\newcommand{\atspdfj}{\homog{\textsf{ATSP$_{\mathit{DFJ}}$}}}
\newcommand{\atspbalpolytope}{\widehat{\textsf{ATSP}}_{\mathit{BAL}}}
\newcommand{\atspdfjpolytope}{\widehat{\textsf{ATSP}}_{\mathit{DFJ}}}
\newcommand{\PATSP}{\textsf{PATSP}}
\newcommand{\PATSPpolytope}{\widehat{\textsf{PATSP}}}
\newcommand{\tsp}{\textsc{TSP}}
\newcommand{\atsp}{\textsc{ATSP}}
\newcommand{\pathtsp}{\textsc{path~TSP}}
\newcommand{\pathatsp}{\textsc{path~ATSP}}
\renewcommand{\P}{$\mathbf{P}$}
\newcommand{\NP}{$\mathbf{NP}$}
\newcommand\bx {{\bf x}}
\newcommand\by {{y}}
\newcommand\bz {{z}}
\newcommand{\iLP}{\textsf{LP}}
\newcommand{\gdecomp}[1]{\mathcal{C}(#1)}
\newcommand{\gdecomplong}[2]{\mathcal{C}_{#2}(#1)}
\newcommand{\gdecompsymbol}{\widehat{\mathcal{C}}}
\newcommand{\cindset}{\mathcal{N}}
\newcommand{\fracset}{\mathcal{F}}
\newcommand{\notfracset}{\overline{\mathcal{F}}}
\newcommand{\sgn}{\mathcal{I}}
\newcommand{\yvec}[2]{y^{#1,\,#2}}	
\newcommand{\zvec}[2]{y^{#1,\,#2}}	
\newcommand{\zveconly}{y}
\newcommand{\onevec}[2]{\textbf{1}^{#1,\,#2}}		
\newcommand{\goodfrac}[2]{F^{#1}(#2)}	
\newcommand{\szgoodfrac}[2]{f^{#1}(#2)}	
\newcommand{\cindex}[1]{\textit{index}(#1)} 
\newcommand{\indfrac}{h}	
\newcommand{\tour}{\textit{tour\/}}
\newcommand{\opt}{\textit{opt\/}}
\newcommand{\dfj}{\textit{dfj\/}}
\newcommand{\saop}{\mathcal{SA}}
\newcommand{\homog}[1]{{#1}}
\begin{document}

\title{On Integrality Ratios for Asymmetric TSP \\
	in the Sherali-Adams Hierarchy
\footnote{An extended abstract of this work appeared in the proceedings of the 40th International Colloquium on Automata, Languages, and Programming ({ICALP} 2013).}
}


\author{Joseph Cheriyan\thanks{
 Dept. Comb. \& Opt., University of Waterloo, Canada,
 Email:\texttt{$\{$jcheriyan,z9gao,k2georgi$\}$@uwaterloo.ca}} 
\and Zhihan Gao$^\dagger$  
\and Konstantinos Georgiou$^\dagger$ 
\and
Sahil Singla\thanks{School of Comp.\ Sci., Carnegie Mellon University, USA, 
Email:\texttt{ssingla@cmu.edu}
}
}


\maketitle

\begin{abstract}
We study the \atsp\ (Asymmetric Traveling Salesman Problem), and our
focus is on negative results in the framework of the Sherali-Adams
(\iSA) Lift and Project method.


Our main result pertains to the standard LP (linear programming)
relaxation of \atsp, due to Dantzig, Fulkerson, and Johnson.
For any fixed integer $t\ge0$ and small $\epsilon$,
$0<\epsilon\ll{1}$,
there exists a digraph $G$ on
$\nu=\nu(t,\epsilon)=O(t/\epsilon)$
vertices such that the integrality ratio for level~$t$ of the \iSA\ system
starting with the standard LP on $G$ is
$\ge 1+\frac{1-\epsilon}{2t+3} \approx \frac43, \frac65, \frac87, \dots$.
Thus, in terms of the input size,
the result holds for any $t = 0,1,\dots,\Theta(\nu)$ levels.
Our key contribution is to identify
a structural property of digraphs that
allows us to construct fractional feasible solutions
for any level~$t$ of the \iSA\ system starting from the standard~LP.
Our hard instances are simple and satisfy the structural property.

There is a further relaxation of the standard LP called the
balanced LP, and our methods simplify considerably
when the starting LP for the \iSA\ system is the balanced~LP;
in particular, the relevant structural property (of digraphs)
simplifies such that
it is satisfied by the digraphs given by
the well-known construction of Charikar, Goemans and Karloff (CGK).
Consequently, the CGK digraphs serve as hard instances,
and we obtain an integrality ratio of  $1 +\frac{1-\epsilon}{t+1}$
for any level~$t$ of the \iSA\ system,
where $0<\epsilon\ll{1}$
and the number of vertices is
$\nu(t,\epsilon)=O((t/\epsilon)^{(t/\epsilon)})$.

Also, our results for the standard~LP extend to the \pathatsp\ (find a
min cost Hamiltonian dipath from a given source vertex to a given sink
vertex).

\paragraph{Keywords:}Asymmetric TSP, Sherali-Adams Hierarchy, Integrality Ratios
\end{abstract}

\section{Introduction}\label{sec:intro}

The Traveling Salesman Problem (\tsp) is a celebrated problem in
combinatorial optimization, with many connections to theory and
practice.  The problem is to find a minimum cost tour of a set of
cities; the tour should visit each city exactly once. The most well
known version of this probelm is the symmetric one (denoted \tsp),
where the distance (a.k.a.~cost) from city~$i$ to city~$j$ is equal to
the distance (cost) from city~$j$ to city~$i$. The more general version
is called the asymmetric TSP (denoted \atsp), and it does not have the symmetry
restriction on the costs. Throughout, we assume that the costs satisfy
the triangle inequalities, i.e., the costs are metric.

Linear programming (LP) relaxations play a central role in solving  \tsp\ 
or \atsp, both in
practice and in the theoretical setting of approximation algorithms.
Many LP relaxations are known for \atsp, see \cite{RT12} for a recent
survey. The most well-known relaxation (and the one that is most useful
for theory and practice) is due to Dantzig, Fulkerson and Johnson; we
call it the standard~LP or the DFJ~LP. It has a constraint for every
nontrivial cut, and has an indegree and an outdegree constraint for
each vertex; see Section~\ref{subsec:atsp-lp}.
There is a further relaxation of the standard~LP that is of interest; we
call it the balanced LP (Bal~LP); it is obtained from the standard~LP
by replacing the indegree and outdegree constraint at each vertex by a
balance (equation) constraint. For metric costs, the optimal value of
the standard~LP is the same as the optimal value of the balanced LP;
this is a well-known fact, see \cite{RT12}, \cite[Footnote~3]{CGK06}.

One key question in the area is the quality of the objective value
computed by the standard~LP.  This is measured by the \textit{integrality
ratio} (a.k.a.~integrality gap) of the relaxation, and is defined to be
the supremum over all instances of the integrality ratio of the
instance.  The integrality ratio of an instance $I$ is given by
$\opt(I)/\dfj(I)$, where $\opt(I)$ denotes the optimum (minimum cost of
a tour) of $I$, and $\dfj(I)$ denotes the optimal value of the standard~LP
relaxation of $I$; we assume that the optima exist and that
$\dfj(I)\not=0$.\footnote{
Although the term integrality ratio is used in two different
senses---one refers to an instance, the other to a relaxation
(i.e., all instances)---the context will resolve the ambiguity.}

For both \tsp\ and \atsp,
significant research efforts have been devoted over several decades
to prove bounds on the integrality ratio of the standard~LP.
For \tsp, methods based on Christofides' algorithm show that
the integrality ratio is $\le\frac32$,
whereas the best lower bound known on the integrality ratio is $\frac43$.
Closing this gap is a major open problem in the area.
For \atsp, a recent result of Asadpour et al.~\cite{AGMOS-soda10}
shows that the integrality ratio is $\le O(\log{n}/\log\log{n})$.
On the other hand, Charikar, et al.~\cite{CGK06} showed a
lower bound of~2 on the integrality ratio,
thereby refuting an earlier conjecture of Carr and Vempala \cite{CV04}
that the integrality ratio is $\leq\frac43$.

Lampis \cite{lampis12} and Papadimitriou and later Vempala
\cite{papad-vempala-06}, have proved
hardness-of-approximation thresholds of $\frac{185}{184}$ for \tsp\ and
$\frac{117}{116}$ for \atsp,  respectively;
both results assume that \P$\not=$\NP.
Recently, Karpinski, et al \cite{KLS-isaac13} have improved both
hardness-of-approximation thresholds to
123/122 and 75/74, respectively, assuming that \P$\not=$\NP.

Our goal is to prove lower bounds on the integrality ratios for \atsp\ 
for the tighter LP relaxations obtained by applying
the Sherali-Adams Lift-and-Project method.
Before stating our results, we present an overview of Lift-and-Project
methods.

\subsection{Hierarchies of convex relaxations}


Over the past 25 years, several methods have been developed in order
to obtain tightenings of relaxations in a systematic manner.
Assume that each variable $y_i$ is in the interval $[0,1]$,
i.e., the integral solutions are zero/one, and
let $n$ denote the number of variables in the original relaxation.
%
%
The goal is to start with a simple relaxation,
and then iteratively obtain a sequence of
stronger/tighter relaxations
such that the associated polytopes form a nested family
that contains (and converges to) the integral hull\footnote{
By the \textit{integral hull} we mean
the convex hull of the zero-one solutions
that are feasible for the original relaxation.}.


These procedures, usually called {\em Lift-and-Project}
hierarchies (or systems, or methods, or procedures),
use polynomial reasonings together with
the fact that in the 0/1 domain, general polynomials can be reduced
to multilinear polynomials (utilizing the identity $y_i^2=y_i$), and
then finally obtain a stronger relaxation by applying linearization
(e.g., for subsets $S$ of $\{1,\dots,n\}$,
the term $\prod_{i \in S} y_i$ is replaced by a variable $y_S$).
In this overview, we gloss over the Project step.
In particular,
Sherali and Adams~\cite{SA90} devised the \sa\ (\iSA) system,
Lov\'asz and Schrijver~\cite{LS91}
devised the \ls\ (\iLS) system,
and Lasserre~\cite{las02} devised the \la\ system.
See Laurent~\cite{Lau03} for a survey of these systems;
several other Lift-and-Project systems are known,
see~\cite{CM-chapter,AT-ipco11}.

The index of each relaxation in the sequence of tightened relaxations
is known as the \textit{level} in the hierarchy; the level of the
original relaxation is defined to be zero.
For each of these hierarchies and for any $t=O(1)$,
it is known that
the relaxation at level $t$ of
the hierarchy can be solved to optimality in polynomial time,
assuming that the original relaxation has
a polynomial-time separation oracle,
\cite{tourlakisthesis06}
%
%
(additional mild conditions may be needed for some hierarchies).
%
%
%
In fact, the relaxation at level $n$ is exact, i.e.,
the associated polytope is equal to the integral hull.
%

Over the last two decades,
a number of important improvements on approximation guarantees have
been achieved based on relaxations obtained from Lift-and-Project
systems.
See \cite{CM-chapter} for a recent survey of many such positive results.
%

Starting with the work of Arora et al.~\cite{ABLT06},
substantial research efforts have been devoted to
showing that tightened relaxations (for many levels) fail to reduce
the integrality ratio for many combinatorial optimization problems
(see~\cite{CM-chapter} for a list of negative results).
This task seems especially difficult for
the \iSA\ system because it strengthens
relaxations in a ``global manner;''
this enhances its algorithmic leverage for deriving positive results,
but makes it more challenging
to design instances with bad integrality~ratios.
%
%
Moreover, an integrality ratio for the \iSA\ system may be viewed as an
unconditional inapproximability result for a restricted model of
computation, whereas, hardness-of-approximation results are usually
proved under some complexity assumptions, such as {\P}$\not=${\NP}.
%
%
The \iSA\ system is known to be more powerful than
the \iLS\ system, while it is weaker than the \la\ system;
it is incomparable with the \iLSp\ system
(the positive-semidefinite version of the \ls\ system \cite{LS91}).


A key paper by Fern\'andez de la Vega and Kenyon-Mathieu \cite{FK07}
introduced a probabilistic interpretation of the \iSA\ system, and
based on this, negative results (for the \iSA\ system) have been proved
for a number of combinatorial problems;
also see
Charikar et al.~\cite{CMM09}, and Benabbas, et al.~\cite{BCGM11}.
At the moment, it is not clear that methods based on \cite{FK07} could
give negative results for \tsp\ and its variants, because the natural
LP relaxations (of \tsp\ and related problems) have ``global
constraints.''



To the best of our knowledge, there are only two previous papers with
negative results for Lift-and-Project methods applied to \tsp\ and its
variants.
Cheung~\cite{cheung05} proves an integrality ratio of
$\frac43$ for \tsp, for $O(1)$ levels of \iLSp.
For \atsp, Watson~\cite{watson11} proves an integrality ratio of
$\frac32$ for level~1 of the \ls\ hieararchy, starting from the
balanced~LP (in fact, both the hierarchies \iLS\ and \iSA\ give the
same relaxation at level~one).

We mention that Cheung's results~\cite{cheung05}
for \tsp\ do not apply to \atsp, although
at level~0, it is well known that
any integrality ratio for the standard~LP for \tsp\
applies also to the standard~LP for \atsp\ 
(this relationship does not hold for level~1 or higher).

\subsection{Our results and their significance}


Our main contribution
is a generic construction of
fractional feasible solutions
for any level~$t$ of the \iSA\ system starting from
the standard~LP relaxation of \atsp.
We have a similar but considerably simpler construction
when the starting LP for the \iSA\ system
is the balanced~LP.
Our results on integrality ratios are direct corollaries.

We have the following results pertaining to
the balanced~LP relaxation of \atsp:
We formulate a property of digraphs that we call
the good decomposition property, and
given any digraph with this property,
we construct a vector $y$ on the edges such that
$y$ is a fractional feasible solution to
the level~$t$ tightening of the balanced~LP by the \sa\ system.
Charikar, Goemans, and Karloff (CGK) \cite{CGK06}
constructed a family of digraphs for which
the balanced~LP has an integrality ratio of~2.
We show that the digraphs in the CGK family
have the good decomposition property,
hence, we obtain an integrality ratio for level~$t$ of \iSA.
In more detail,
we prove that for any integer $t\ge0$ and small enough $\epsilon>0$,
there is a digraph $G$ from the CGK family
on $\nu=\nu(t,\epsilon)=O((t/\epsilon)^{t/\epsilon})$
vertices such that the integrality ratio of the level-$t$ tightening of
Bal LP is at least
$1+\frac{1-\epsilon}{t+1} \approx 2, \frac32, \frac43, \frac54,
\dots$ (where $t=0$ identifies the original relaxation).

Our main result pertains to the standard~LP relaxation of \atsp.
Our key contribution is to identify
a structural property of digraphs that
allows us to construct fractional feasible solutions
for the level~$t$ tightening of the standard~LP by the \sa\ system.
This construction is much more difficult than the
construction for the balanced~LP.
We present a simple family of digraphs that
satisfy the structural property, and this immediately
gives our results on integrality ratios.
We prove that for any integer $t\ge0$ and small enough $\epsilon>0$,
there are digraphs $G$ on $\nu=\nu(t,\epsilon)=O(t/\epsilon)$ vertices
such that the integrality ratio of
the level~$t$ tightening of the standard~LP on $G$
is at least $1+\frac{1-\epsilon}{2t+3} \approx \frac43, \frac65,
\frac87, \frac{10}9, \dots$.
The \textit{rank} of a starting relaxation (or polytope)
is defined to be the minimum number of tightenings
required to find the integral hull (in the worst case).
An immediate corollary is that
the \iSA-rank of the standard~LP relaxation on a digraph $G=(V,E)$
is at least linear in $|V|$, whereas,
the rank in terms of the number of edges is $\Omega(\sqrt{|E|})$
(since the LP is on a complete digraph, namely, the metric completion).

Our results for the balanced~LP and for the standard~LP
are incomparable, because
the \iSA\ system starting from the standard~LP
is strictly stronger than
the \iSA\ system starting from the balanced~LP,
although both the level~zero LPs have the
same optimal value, assuming metric costs.
(In fact, there is an example on 5~vertices
\cite[Figure~4.4,~p.60]{paulthesis08} such that
the optimal values of the level~1 tightenings are different:
$9\frac13$ for the balanced~LP and $10$ for the standard~LP.)

Finally, we extend our main results to the natural relaxation of
\pathatsp\ (min cost Hamiltonian dipath from a given source vertex to a
given sink vertex), and
we obtain integrality ratios $\ge 1+\frac{2-\epsilon}{3t+4} \approx
\frac32, \frac97, \frac65, \frac{15}{13}, \dots$ for the level-$t$
\iSA\ tightenings.
Our result on \pathatsp\ is obtained by ``reducing'' from the result
for \atsp; the idea behind this comes from an analogous result of
Watson \cite{watson11} in the symmetric setting; Watson gives a method
for transforming Cheung's \cite{cheung05} result on the integrality
ratio for \tsp\ to obtain a lower bound on the integrality ratio for
\pathtsp.

The solutions given by our constructions are \textit{not}
positive semidefinite;
thus, they do not apply to the \iLSp\ hierarchy
nor to the \la\ hierarchy.

Let us assess our results, and place them in context.
Observe that our integrality ratios fade out as the
level of the \iSA\ tightening increases,
and for $t\ge 35$ (roughly)
our integrality ratio falls below the hardness~threshold of
$\frac{75}{74}$ of \cite{KLS-isaac13}.
Thus, our integrality ratios cannot be optimal, and it is possible
that an integrality ratio of~2 can be proved for $O(1)$ levels
of the \iSA\ system.

On the other hand,  our results are not restricted to $t=O(1)$.
For example, parameterized with respect to the number of vertices in
the input $\nu$, our lower bound for the standard~LP holds even for
level~$t=\Omega(\nu)$, and our lower bound for the balanced~LP
(which improves on our lower bound for the standard~LP)
holds even for level~$t=\Omega(\log\nu / \log\log\nu)$, thus giving
unconditional inapproximability results for these restricted
algorithms, even allowing super-polynomial running time.

Moreover, our results (and the fact that they are not optimal)
should be contrasted with the
known integrality~ratio results for the level~zero standard~LP,
a topic that has been studied for decades.



\section{Preliminaries}

When discussing a digraph (directed graph), we
use the terms dicycle (directed cycle), etc.,
but we use the term edge rather than directed edge or arc.
For a digraph $G=(V,E)$ and $U\subseteq{V}$,
$\delta^{out}(U)$ denotes $\{ (v,w)\in{E}: v\in U, w\not\in U\}$,
the set of edges outgoing from $U$,
and $\delta^{in}(U)$ denotes $\{ (v,w)\in{E}: v\not\in U, w\in U\}$.
For $x\in\reals^{E}$ and $S\subseteq{E}$,
$x(S)$ denotes $\sum_{e\in{S}}x_e$.

By the \textit{metric completion} of a
digraph $G=(V,E)$ with nonnegative edge costs $c\in\reals^E$,
we mean the complete digraph $G'$ on $V$
with the edge costs $c'$,
where $c'(v,w)$ is taken to be the minimum cost (w.r.t. $c$)
of a $v,w$ dipath of $G$.

An \textit{Eulerian subdigraph} of $G$ is defined as follows:
the vertex set is $V$
and the edge set is a ``multi-subset'' of $E$
(that is, each edge in $E$ occurs zero or more times)
such that (i) the indegree of every vertex equals its outdegree, and
(ii) the subdigraph is weakly connected (i.e., the underlying
undirected graph is connected). The ATSP on the metric completion $G'$
of $G$ is equivalent to finding a minimum cost Eulerian subdigraph of
$G$.

For a positive integer $t$ and a ground set $U$, let $\pop_{t}$ denote the
family of subsets of $U$ of size at most $t$, i.e.,
$\pop_{t} = \{ S :  S \subseteq U, |S| \leq t\}$.
We usually take the ground set to be the set of edges of a fixed digraph.
Now, let $G$ be a digraph, and
let the ground set (for $\pop_{t}$) be $E=E(G)$.
Let $E'$ be a subset of $E$.
Let $\onevec{E'}{t}$ denote a vector indexed by elements of $\pop_{t}$
such that for any $S \in \pop_t$, $\onevec{E'}{t}_S=1$ if $S\subseteq{E'}$,
and $\onevec{E'}{t}_S = 0$, otherwise.
Note that $\onevec{E'}{1}$ has the entry for $\emptyset$ at 1,
and the other entries give the incidence vector of $E'$.
%

We denote set difference by $-$, and
we denote the addition (removal) of a single item $e$
to (from) a set $S$ by $S+e$ (respectively, $S-e$),
rather than by $S\cup\{e\}$ (respectively, $S-\{e\}$).


\subsection{LP relaxations for Asymmetric \tsp}\label{sec:atsp-relaxations}
\label{subsec:atsp-lp}

Let $G=(V,E)$ be a digraph with nonnegative edge costs $c$. Let
$\atspdfjpolytope(G)$ be the feasible region (polytope) of the following
linear program that has a variable $x_e$ for each edge $e$ of $G$:
\begin{align}
\textup{minimize} & \sum_{e} c_e x_e & \notag \\
\textup{subject to} \quad
  & \xin{S} \geq 1,& \quad \forall S:~\emptyset \subset S \subset V \notag  \\
  & \xout{S} \geq 1,& \quad \forall S:~\emptyset \subset S \subset V \notag  \\
  & \xin{\{v\}}  = 1, \quad \xout{\{v\}} = 1, & \quad \forall v \in V \notag  \\
  & \textbf{0} \leq \bx \leq \textbf{1} & \notag
\end{align}
In particular, when $G$ is a complete digraph with metric costs, the
above linear program is the standard~LP relaxation of ATSP
(a.k.a.~DFJ~LP).

We obtain the balanced~LP (Bal~LP) from the standard~LP by
replacing the two constraints
$\xin{\{v\}}  = 1,\; \xout{\{v\}} = 1$
by the constraint
$\xin{\{v\}}  = \xout{\{v\}}$,
for each vertex $v$.
Let $\atspbalpolytope(G)$ be the feasible region (polytope) of Bal~LP.
\begin{align}
\textup{minimize} & \sum_{e} c_e x_e & \notag \\
\textup{subject to} && \notag \\
  & \xin{S} \geq 1,& \quad \forall S:~\emptyset \subset S \subset V \notag  \\
  & \xout{S} \geq 1,& \quad \forall S:~\emptyset \subset S \subset V \notag  \\
  & \xout{\{v\}} =  \xin{\{v\}},& \quad \forall v \in V \notag  \\
  & \textbf{0} \leq \bx \leq \textbf{1} & \notag
\end{align}
In particular, when $G$ is a complete digraph with metric costs, the
above linear program is the balanced~LP relaxation of ATSP.

Our construction of fractional feasible solutions exploits the
structure of the original digraph.
This is the reason for discussing the polytopes on the original digraph
(and not only on the complete digraph).
To justify this, we observe that any feasible solution for the original
digraph can be extended to a feasible solution for the complete digraph
by ``padding with zeros.'' (This argument is formalized in Section
\ref{sec:EVto0}).



\subsection{The \sa\ system}

\begin{definition}[The \sa\ system]\label{def:SA-definition}
Consider a polytope $\widehat{P} \subseteq [0,1]^n$ over the variables
$y_1,\ldots, y_n$, and its description by a system of linear
constraints of the form $\sum_{i=1}^n a_i y_i \geq{b}$;
note that the constraints $y_i\ge0$ and $y_i\le{1}$
for all $i\in\{1,\dots,n\}$
are included in the system.
The level-$t$ \sa\ tightened relaxation $\saop^t(\widehat{P})$
of $\widehat{P}$, is an \iLP\ over the variables
 $\{y_S\;:\; S\subseteq \{1,2,\ldots,n\},\; |S|\le{t+1} \}$
(thus, $y\in\reals^{\pop_{t+1}}$ where
$\pop_{t+1}$ has ground~set $\{1,2,\ldots,n\}$);
%
%
moreover, we have $y_{\emptyset} =1$.
For every constraint $\sum_{i=1}^n a_i y_i\geq{b}$
of $\widehat{P}$ and for every disjoint $S,Q \subseteq \{1,\ldots, n\}$
with $|S|+|Q|\leq t$, the following is a constraint of the
level-$t$ \sa\ relaxation.
\begin{equation}\label{equa:sa-constraint}
\sum_{i=1}^n a_i \sum_{\emptyset \subseteq T \subseteq Q}
	(-1)^{|T|} y_{S \cup T \cup \{i\} }
  \geq b \sum_{\emptyset \subseteq T \subseteq Q}
	(-1)^{|T|} y_{S \cup T }.
\end{equation}
\end{definition}

We will use a convenient abbreviation:
  \[
  z_{S,Q} := \sum_{\emptyset \subseteq T \subseteq Q} (-1)^{|T|} y_{S \cup T },
  \]
where $z_{S,Q}$ are auxiliary variables between 0 and 1. 

Informally speaking, the level-$t$ \sa\ relaxation is derived by
multiplying any constraint of the original relaxation by the high
degree polynomial 
$$\prod_{j \in S} y_i \prod_{j \in Q} (1-y_i),$$
where
$S,Q$ are disjoint subsets of $\{1,\ldots, n\}$ with $|S|+|Q|\leq t$.
After expanding the products, we obtain a polynomial of degree at most
$t+1$. Replacing any occurrences of $\prod_{i \in S} y_i$ by the
corresponding variable $y_S$ for all $S\subseteq\{1,\ldots, n\}$ gives
the constraint described in Inequality~\eqref{equa:sa-constraint}
(Definition~\ref{def:SA-definition}).

There are a number of approaches for certifying that $\by \in
\saop^t(\widehat{P})$ for a given $\by$. One popular approach is to give a
probabilistic interpretation to the entries of $\by$, satisfying
certain conditions. We follow an alternative approach, that is
standard, see \cite{Lau03}, \cite[Lemma~2.9]{tourlakisthesis06},
but has been rarely used in the context of integrality ratios.

First, we introduce some notation.
Given a polytope $\widehat{P} \subseteq [0,1]^n$,
consider the cone
$\homog{P}= \{ y_\emptyset (1, \by):~ y_\emptyset\ge0, \by \in \widehat{P}\}$.
(Throughout the paper, we use an accented symbol to denote a polytope,
e.g., $\widehat{P}$, and
the symbol (without accent) to denote the associated cone, e.g., $P$.)
It is not difficult to see
that the \iSA\ system can be applied to the cone ${P}$, so that
the projection in the $n$ original variables
can be obtained by projecting
any $y\in\saop^t({P})$ with $y_\emptyset = 1$
on the $n$ original variables.
Note that $\saop^t({P})$ is a cone, hence,
we may have $y\in\saop^t({P})$ with $y_\emptyset \not= 1$;
but if $y_\emptyset \not=0$,
we can replace $y$ by $\frac{1}{y_\emptyset} y$.
Also, note that $\saop^t(\widehat{P})=\{y:~ y_\emptyset=1, y\in
\saop^t({P})\}$ by Definition~\ref{def:SA-definition}.

For a vector $\by$ indexed by subsets of $\{1,\ldots,n\}$ of size at
most $t+1$, define a \textit{shift operator} ``$*$'' as follows:
for every $e \in \{1,\ldots,n\}$, let $e*\by$ to be a vector indexed by
subsets of $\{1,\ldots,n\}$ of size at most $t$, such that $(e*y)_S
:=y_{S + e}$.
We have the following folklore fact, \cite[Lemma~2.9]{tourlakisthesis06}.

\begin{fact}.~ \label{fact:recursive-SA}
$\by \in \saop^t({P}) ~~\textbf{if and only if}~~
e*\by \in \saop^{t-1}({P}), ~~\textbf{and}~~
\by - e*\by \in \saop^{t-1}({P}),
~~\forall e \in \{1,\ldots,n\}$.
\end{fact}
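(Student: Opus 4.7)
The plan is to set up a clean dictionary between level-$t$ and level-$(t-1)$ \iSA\ constraints for the cone $P$, and then read it in both directions. First I would unfold the definition using the shorthand $z_{S,Q}(y) := \sum_{T \subseteq Q} (-1)^{|T|} y_{S \cup T}$ and establish two algebraic identities:
\begin{align*}
z_{S,Q}(e * y) &= z_{S \cup \{e\}, Q}(y), \\
z_{S,Q}(y - e * y) &= z_{S, Q \cup \{e\}}(y) \qquad (\text{when } e \notin S \cup Q).
\end{align*}
The first identity is immediate from $(e*y)_A = y_{A \cup \{e\}}$, and the second follows by splitting the sum that defines $z_{S, Q \cup \{e\}}(y)$ according to whether $T$ contains $e$. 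I would also record how the two sides behave in the degenerate cases $e \in S$ or $e \in Q$: in each such case one side collapses to $0$ or reduces to a lower-level $z$-quantity, which keeps the boundary bookkeeping painless.

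For the forward direction, assume $\by \in \saop^t(P)$. Fix an original homogenized inequality $\sum_i a_i y_i \ge b\, y_\emptyset$ and any disjoint pair $(S,Q)$ with $|S|+|Q| \le t-1$. Substituting the first identity into the candidate level-$(t-1)$ \iSA\ constraint for $e*\by$ on $(S,Q)$ turns it into precisely the level-$t$ constraint for $\by$ on the enlarged pair $(S\cup\{e\},Q)$, a valid pair since $|S\cup\{e\}|+|Q|\le t$; similarly, the second identity converts the level-$(t-1)$ constraint for $\by-e*\by$ on $(S,Q)$ into the level-$t$ constraint for $\by$ on $(S,Q\cup\{e\})$. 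Both target constraints are assumed to hold, so $e*\by,\; \by-e*\by \in \saop^{t-1}(P)$.

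For the reverse direction, I would take an arbitrary level-$t$ \iSA\ constraint for $\by$ indexed by $(S,Q)$ and exhibit it as a consequence of the available level-$(t-1)$ constraints. If $Q \ne \emptyset$, pick any $e \in Q$ and write $Q = Q' \cup \{e\}$; the target becomes the level-$(t-1)$ constraint for $\by - e*\by$ on $(S,Q')$ by the second identity. If $Q = \emptyset$ but $S \ne \emptyset$, pick $e \in S$ and write $S = S' \cup \{e\}$; the target becomes the level-$(t-1)$ constraint for $e*\by$ on $(S',\emptyset)$ by the first identity. The residual base case $S = Q = \emptyset$ asks for $\sum_i a_i y_{\{i\}} \ge b\, y_\emptyset$; fixing any $e$ and adding the base-case level-$(t-1)$ constraints for $e*\by$ (namely $\sum_i a_i y_{\{i,e\}} \ge b\, y_{\{e\}}$) and for $\by - e*\by$ (namely $\sum_i a_i (y_{\{i\}} - y_{\{i,e\}}) \ge b\,(y_\emptyset - y_{\{e\}})$) telescopes to the desired inequality.

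The main obstacle, such as it is, is purely bookkeeping around the degenerate cases where $e$ already belongs to $S$ or $Q$: there the identities either give $0=0$ (when the enlarged pair $(S\cup\{e\},Q)$ or $(S,Q\cup\{e\})$ becomes non-disjoint, forcing every $z$-quantity in sight to vanish) or collapse to a level-$(t-1)$ constraint that is subsumed by the level-$t$ hypothesis via $\saop^t(P) \subseteq \saop^{t-1}(P)$. I would consolidate these behaviors into the statement of the two identities, after which both directions reduce to a single substitution per constraint and the recursion becomes transparent.
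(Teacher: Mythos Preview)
Your proposal is correct. However, note that the paper does not actually supply its own proof of this fact: it is stated as a ``folklore fact'' with a citation to \cite[Lemma~2.9]{tourlakisthesis06} and left unproved. So there is no in-paper argument to compare against.

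That said, your argument is exactly the standard one and is sound. The two identities
\[
z_{S,Q}(e*y) = z_{S\cup\{e\},Q}(y), \qquad z_{S,Q}(y-e*y) = z_{S,Q\cup\{e\}}(y)
\]
are correct (the second via the obvious split on whether $e\in T$), and together with the observation that $z_{S,Q}=0$ whenever $S\cap Q\neq\emptyset$ they handle all the degenerate cases. Your case split in the reverse direction (peel an element from $Q$ if possible, else from $S$, else telescope using any $e$) covers all level-$t$ constraints. The only implicit assumption is $t\ge1$ so that $\saop^{t-1}(P)$ makes sense, which is understood in the paper's usage of the fact.
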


The reader familiar with the \ls\ system may recognize the similarity
of its definition with the characterization of the \sa\ system of
Fact~\ref{fact:recursive-SA}. In fact, the \iSA\ system differs from
the \iLS\ system only in that it imposes additional consistency
conditions; namely, the moment vector $\by$, indexed by subsets of size
$t+1$, has to be fixed beforehand.
%
%
This seemingly small detail gives the \iSA\ system enhanced power
compared to the \iLS\ system.
%

\subsubsection{Eliminating Variables to 0}\label{sec:EVto0}

In our discussion of the standard~LP and the balanced~LP,
it will be convenient to restrict
the support to the edge set of a given digraph
rather than the complete digraph.
%
%
Thus, we assume that some of the variables are absent.
Formally, this is equivalent to setting
these variables in advance to zero.
As long as the nonzero variables induce a feasible solution, we are
justified in setting the other variables to zero.
%
%
The following result formalizes the arguments.

\begin{proposition}
Let $\widehat{P}$ be the feasible region (polytope) of a linear program.
Let $C$ be a set of indices (of the variables) that does not contain
the support of any ``positive constraint'' of $\widehat{P}$, where a
constraint $\sum_{i=1}^n a_i y_i\geq{b}$ of $\widehat{P}$ is called
positive if $b>0$.
Let $\widehat{P}_C$ be the feasible region (polytope) of the linear
program obtained by removing all variables with indices in $C$ from the
constraints of the linear program of $\widehat{P}$ (informally, the
new LP fixes all variables with indices in $C$ at zero).
Then, for the \iSA\ system,
for any feasible solution $\by$ to the level-$t$ tightening of
$\widehat{P}_C$, there exists a feasible solution $\by'$ to the
level-$t$ tightening of $\widehat{P}$; moreover, $\by'$ is obtained from
$\by$ by fixing variables,
 indexed by subsets 
intersecting $C$, to zero.
%
%
\end{proposition}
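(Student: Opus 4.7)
The plan is to define the extension $\by'$ by $\by'_A := \by_A$ when $A \cap C = \emptyset$ and $\by'_A := 0$ otherwise, and then verify directly that $\by'$ satisfies every defining inequality \eqref{equa:sa-constraint} of the level-$t$ \iSA\ tightening of $\widehat{P}$. A useful first step is to evaluate the auxiliary sums $z'_{A,B} := \sum_{T \subseteq B}(-1)^{|T|}\by'_{A \cup T}$ in terms of the analogous $z_{A,B}$ built from $\by$: if $A \cap C \neq \emptyset$, then every summand is zero and $z'_{A,B}=0$; otherwise only subsets $T \subseteq B \setminus C$ contribute nonzero terms, so $z'_{A,B}=z_{A,\,B \setminus C}$.

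With this formula in hand, for each constraint $\sum_i a_i y_i \geq b$ of $\widehat{P}$ and each disjoint $S,Q$ with $|S|+|Q|\le t$, I would verify the SA inequality $\sum_i a_i\, z'_{S+i,Q} \geq b\, z'_{S,Q}$ by splitting on whether $S$ meets $C$. If $S \cap C \neq \emptyset$, then every $S+i$ also meets $C$ and both sides are zero. Otherwise $S \cap C = \emptyset$, the terms with $i \in C$ vanish on the left (since $(S+i)\cap C = \{i\}$), and the inequality reduces to
\[
\sum_{i \notin C} a_i\, z_{S+i,\,Q \setminus C} \;\geq\; b\, z_{S,\,Q \setminus C},
\]
which is exactly the level-$t$ SA constraint of $\widehat{P}_C$ obtained from its constraint $\sum_{i \notin C} a_i y_i \geq b$ together with the pair $(S,\,Q \setminus C)$. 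Since $\by$ is feasible for that tightening and $|S|+|Q\setminus C|\le t$, this inequality is already available.

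The main obstacle, and the single place where the hypothesis on $C$ is used, is the degenerate subcase in which the original constraint has its whole support in $C$, so that $\sum_{i \notin C} a_i y_i \geq b$ is not an actual constraint of $\widehat{P}_C$ and the left-hand side of the reduced inequality collapses to $0$. By hypothesis no \emph{positive} constraint has this form, hence necessarily $b \leq 0$; combined with the standard Sherali-Adams non-negativity $z_{S,\,Q \setminus C}\ge 0$ (guaranteed because $0 \le y_i \le 1$ is part of the LP description), we obtain $0 \ge b\, z_{S,\,Q \setminus C}$, finishing the case. The same non-negativity fact takes care of the lifted box constraints $y_i \ge 0$ and $y_i \le 1$ for $i \in C$ (the remaining constraints of $\widehat{P}$ that formally disappear upon passing to $\widehat{P}_C$), completing the verification.
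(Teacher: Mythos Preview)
Your proof is correct and follows essentially the same approach as the paper: define $\by'$ by zero-extension, compute $z'_{S,Q}$ in terms of $z_{S,\,Q\setminus C}$, and verify each \iSA\ constraint by case analysis on whether $S$ meets $C$. You are slightly more explicit than the paper about the degenerate subcase where the original constraint's support lies entirely in $C$ (invoking $b\le 0$ together with $z_{S,\,Q\setminus C}\ge 0$), and about the box constraints for indices in $C$, but this is a minor refinement of the same argument.
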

\begin{proof}
For $\by \in  \saop^t(\widehat{P}_C)$, the ``extension'' $\by'$ of
$\by$ is defined as follows:
$$
\by'_S = \left\{
\begin{array}{ll}
\by_S &, \textrm{~if~} S\cap C  = \emptyset \\
0  &, \textrm{~otherwise~} \\
\end{array}
\right. 
$$
For the corresponding auxiliary variables $\bz$, this would imply that
$$
\bz_{S,Q}' = \left\{
\begin{array}{ll}
0  &, \textrm{~if~} S\cap C  \not = \emptyset \\
\bz_{S,Q-C}  &, \textrm{~otherwise~}. \\
\end{array}
\right. 
$$
In order to show that $\by' \in  \saop^t(\widehat{P})$, we need to
verify that for every pair of sets $S,Q$ as in
Definition~\ref{def:SA-definition}, we have
$\sum_{i=1}^n a_i   z_{S \cup \{i\},Q}' \geq b z_{S,Q}'$.

First we note that if $S \cap C \not = \emptyset$, then for every
$i$ we have $\bz'_{S \cup \{i\},Q} = \bz'_{S ,Q} = 0$, and hence the
constraint is satisfied trivially.

For the remaining case $S \cap C = \emptyset$, we have
\begin{eqnarray}
\sum_{i=1}^n a_i   z_{S \cup \{i\},Q}'
&=&
\sum_{i \in C} a_i   z_{S \cup \{i\},Q}' + \sum_{i \not \in C} a_i   z_{S \cup \{i\},Q}' \notag\\
&=&
\sum_{i \not \in C} a_i   z_{S \cup \{i\},Q}' \notag \\
&=&
\sum_{i \not \in C} a_i   z_{S \cup \{i\},Q-C} \notag \\
&\geq &
b  ~ z_{S,Q-C} \label{equa:feasibility-smaller-polytope} \\
&=&
b  ~ z_{S,Q-C}' \notag \\
&=&
b ~  z_{S,Q}' \notag,
\end{eqnarray}
where~\eqref{equa:feasibility-smaller-polytope} follows from the
validity of the corresponding constraint of $\widehat{P}_C$;
here, we use the fact that $C$ does not contain the support of any
positive constraint -- otherwise, the summation $\sum_{i\not\in{C}}(\dots)$
would be zero since the index set $\{i: i\not\in{C}\}$ would be empty,
and hence, the inequality
$0 = \sum_{i\not\in{C}}(\dots) \geq b\; z_{S,Q-C}$
would fail to hold for $b>0$ and $z_{S,Q-C} > 0$.
\end{proof}



\section{\iSA\ applied to the Balanced~LP relaxation of ATSP}
\subsection{Certifying a feasible solution} \label{section:balanced}

\begin{figure}[htb]
{
\centerline{
\begin{tikzpicture}[thin,scale=1.0]
\foreach \x in {0,1,2,...,8}
{
	\node at (\x , 0) [graphnode]{};
	\node at (\x , 1) [graphnode]{};
	\node at (\x , 2) [graphnode]{};
}
\foreach \x in {0,1,2,...,7}
{
	\draw [->, very thick] (\x +0.1 , 0) to node[auto]{$ $} (\x + 1-0.1 , 0);
	\draw [<-, very thick] (\x +0.1, 2) to node[auto]{$ $} (\x + 1-0.1 , 2);
}
\foreach \y in {0,1}
{
	\draw [<-, very thick] (0 , \y +0.1) to node[auto]{$ $} (0 , \y +1-0.1);
	\draw [->, very thick] (8 , \y +0.1) to node[auto]{$ $} (8 , \y +1-0.1);
}
\foreach \x in {0,1,2,...,7}
{
	\draw [->-] (\x +0.1 , 1+0.1) .. controls(\x + 0.4, 1.25) .. node[auto]{$ $} (\x + 1-0.1 , 1+0.1);
	\draw [->-] (\x + 1-0.1 , 1-0.1) .. controls(\x + 0.4, 0.75) .. node[auto]{$ $} (\x +0.1 , 1-0.1);
}
\end{tikzpicture}
}
\caption{ A digraph $G$ with a good decomposition given by the dicycle
with thick edges, and the length~2 dicycles $C_j$ formed by the
anti-parallel pairs of thin edges; $G-E(C_j)$ is strongly connected
for each dicycle $C_j$.}
%
\label{fig-bal}
}
\end{figure}
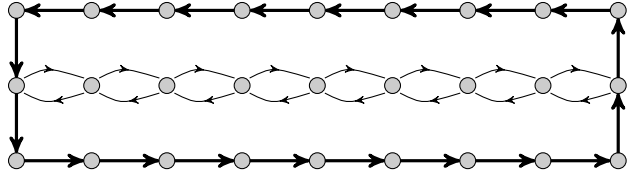

A strongly connected digraph $G=(V,E)$ is said to have
a \textit{good~decomposition} with
\textit{witness~set} $\fracset$
if the following hold
\begin{itemize}
\item[(i)]
$E$ partitions into edge-disjoint dicycles $C_1,C_2,\ldots,C_N$,
that is, there exist edge-disjoint dicycles $C_1,C_2,\ldots,C_N$
such that $E = \bigcup_{1\leq j \leq N} E(C_j)$;
let $\cindset$ denote the set of indices of these dicycles, thus
$\cindset=\{1,\dots,N\}$;
\item[(ii)]
moreover, there exists a nonempty subset $\fracset$ of $\cindset$
such that for each $j\in\fracset$
the digraph $G-E(C_j)$ is strongly connected.
\end{itemize}

Let $\notfracset$ denote $\cindset-\fracset$.
For an edge $e$, we use $\cindex{e}$ to denote the index $j$ of the
dicycle $C_j, j\in\cindset$ that contains $e$.
In this section, by a dicycle $C_i, C_j,$ etc.,
we mean one of the dicycles $C_1,\dots,C_N$, and
we identify a dicycle $C_j$ with its edge set, $E(C_j)$.
See Figure~\ref{fig-bal} for an illustration
of a good decomposition of a digraph.

Informally speaking, our plan is as follows:
for digraph $G$ that has a good decomposition with witness set $\fracset$,
we construct a feasible solution to $\saop^t(\homog{\atspbalpolytope(G)})$
by assigning the same fractional value to the edges of
the dicycles $C_j$ with $j\in\fracset$,
while assigning the value $1$
to the edges of the dicycles $C_i$ with $i\in\notfracset$
(this is not completely correct; we will refine this plan).
Let $\atspbal(G)$ be the associated cone of $\atspbalpolytope(G)$.


\begin{definition}
\label{def:goodfrac}
Let $t$ be a nonnegative integer.
For any set $S\subseteq E$ of size $\leq t+1$, and
any subset $\sgn$ of $\fracset$,
let $\goodfrac{\sgn}{S}$ denote
the set of indices $j\in \fracset-\sgn$ such that
$E(C_j) \cap S \not=\emptyset$;
moreover, let $\szgoodfrac{\sgn}{S}$ denote $|\goodfrac{\sgn}{S}|$,
namely, the number of dicycles $C_j$ with indices in $\fracset-\sgn$
that intersect $S$.
\end{definition}


\begin{definition}
\label{def:balanced}
For a nonnegative integer $t$ and
for any subset $\sgn$ of $\fracset$,
let $\yvec{\sgn}{t}$ be a vector indexed by the elements of
$\mathcal{P}_{t+1}$ and defined as follows:
\[
\yvec{\sgn}{t}_S  = \frac{t+2-\szgoodfrac{\sgn}{S}}{t+2},
	\quad \forall S \in \mathcal{P}_{t+1}
\]
%
\end{definition}

\begin{theorem}
\label{thm:balanced}
Let $G=(V,E)$ be a strongly connected digraph that
has a good decomposition,
and let $\fracset$ be the witness~set. Then
\[ \yvec{\sgn}{t} \in \saop^t(\homog{\atspbalpolytope(G)}), \quad
	\forall t\in\integers_+, \forall \sgn \subseteq \fracset. \]
\end{theorem}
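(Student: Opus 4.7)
The plan is to proceed by induction on $t$, using the recursive characterization of \iSA\ feasibility from Fact~\ref{fact:recursive-SA}: for the inductive step it suffices to check that, for every edge $e \in E$, both $e * \yvec{\sgn}{t}$ and $\yvec{\sgn}{t} - e * \yvec{\sgn}{t}$ lie in $\saop^{t-1}(\atspbal(G))$.

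For the base case $t=0$, the vector $\yvec{\sgn}{0}$ assigns weight $\tfrac{1}{2}$ to edges of dicycles $C_j$ with $j \in \fracset - \sgn$, and weight $1$ to all other edges. Balance at each vertex is automatic because $E$ decomposes into dicycles weighted uniformly on each cycle, and each dicycle has equal in- and out-degree at every vertex. For a cut $(S,V-S)$, if some crossing edge has weight $1$ we are done; otherwise every crossing edge lies in some witness dicycle $C_j$ with $j \in \fracset - \sgn$, and since $G - E(C_j)$ is strongly connected for each such $j$, at least two such witness dicycles must cross the cut, contributing total weight $\geq 2 \cdot \tfrac{1}{2} = 1$.

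For the inductive step, fix $e \in E$ and split on whether $\cindex{e} \in \fracset - \sgn$. When $\cindex{e} \notin \fracset - \sgn$, we have $\szgoodfrac{\sgn}{S+e} = \szgoodfrac{\sgn}{S}$ for every $S$, so $\yvec{\sgn}{t} - e * \yvec{\sgn}{t}$ is the zero vector, and a short rearrangement gives
\[
e * \yvec{\sgn}{t} \;=\; \tfrac{t+1}{t+2}\, \yvec{\sgn}{t-1} \;+\; \tfrac{1}{t+2}\, \yvec{\fracset}{t-1},
\]
a positive combination of vectors that lie in $\saop^{t-1}(\atspbal(G))$ by the induction hypothesis; here $\yvec{\fracset}{t-1}$ is the all-ones moment vector corresponding to the integer point that uses every edge once, feasible because $G$ is strongly connected and its edge set is a union of dicycles (hence Eulerian). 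When $\cindex{e} = j \in \fracset - \sgn$, splitting on whether $S \cap E(C_j)=\emptyset$ yields
\[
e * \yvec{\sgn}{t} \;=\; \tfrac{t+1}{t+2}\, \yvec{\sgn + j}{t-1}, \qquad
\yvec{\sgn}{t} - e * \yvec{\sgn}{t} \;=\; \tfrac{1}{t+2}\, \onevec{E - E(C_j)}{t};
\]
the first lies in $\saop^{t-1}(\atspbal(G))$ by the induction hypothesis applied to $\sgn + j \subseteq \fracset$, and the second is the scaled moment vector of the feasible integer solution $\chi_{E-E(C_j)}$, which lies in $\saop^{t'}(\atspbal(G))$ for every $t' \geq 0$.

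The most delicate step is the second case of the inductive step: one has to identify the clean decomposition $e * \yvec{\sgn}{t} = \tfrac{t+1}{t+2}\,\yvec{\sgn+j}{t-1}$ and then recognize that the residual $\yvec{\sgn}{t} - e * \yvec{\sgn}{t}$ is exactly the scaled moment vector of the integer solution obtained by dropping the witness dicycle $C_j$ from $G$. This is precisely where the hypothesis that $G - E(C_j)$ be strongly connected for every $j \in \fracset$ is essential---without it, the integer point $\chi_{E-E(C_j)}$ would violate a cut constraint of $\atspbalpolytope(G)$ and the residual would fall out of the cone.
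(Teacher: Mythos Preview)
Your proof is correct and follows essentially the same approach as the paper: induction on $t$ via Fact~\ref{fact:recursive-SA}, with the same two-case split on whether $\cindex{e}\in\fracset-\sgn$, the identity $e*\yvec{\sgn}{t}=\tfrac{t+1}{t+2}\,\yvec{\sgn+j}{t-1}$ in the fractional case, and the residual $\tfrac{1}{t+2}\,\onevec{E-E(C_j)}{t}$ recognized as the scaled moment vector of the integral solution $E-E(C_j)$. The only cosmetic differences are that you argue the base case for general $\sgn$ directly (the paper first does $\sgn=\emptyset$ and then compares), and you write the all-ones moment vector as $\yvec{\fracset}{t-1}$ rather than $\onevec{E}{t}$, which is the same object.
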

In order to prove our integrality ratio result for $\atspbalpolytope$, we will invoke Theorem~\ref{thm:balanced} for $\sgn=\emptyset$ (the more general setting of the theorem is essential for our induction proof; we give a high-level explanation in the last paragraph of the proof of Theorem~\ref{thm:balanced} below). 
Since also only the values of $\yvec{\emptyset}{t}$ indexed at singleton edges affect the integrality ratio, 
it is worthwhile to summarize all relevant quantities in the next corollary. 


\begin{corollary}
\label{coro:balanced-empty}
We have
\[ \yvec{\emptyset}{t} \in \saop^t(\homog{\atspbalpolytope(G)}), \quad \forall t\in\integers_+.
\]
Moreover, 
for each dicycle $C_j$, $j\in\cindset$, and each edge $e$ of $C_j$ we have
\begin{align}
\yvec{\emptyset}{t}_e =
 \begin{cases}
  \frac{t+1}{t+2},	&\text{if } j \in \fracset\\
  {1},			&\text{otherwise}.
 \end{cases}
\end{align}
\end{corollary}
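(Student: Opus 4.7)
The plan is to obtain the corollary as an immediate consequence of Theorem~\ref{thm:balanced} together with an elementary unpacking of Definitions~\ref{def:goodfrac} and~\ref{def:balanced}. For the containment $\yvec{\emptyset}{t} \in \saop^t(\homog{\atspbalpolytope(G)})$, I would simply specialize Theorem~\ref{thm:balanced} at $\sgn = \emptyset$, which is a valid choice since $\emptyset \subseteq \fracset$. No additional argument is needed beyond observing that the hypothesis of the theorem is satisfied.

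For the explicit formula on a singleton edge, I would start from Definition~\ref{def:balanced}, which, for $S = \{e\}$ and $\sgn = \emptyset$, reads
\[
\yvec{\emptyset}{t}_{\{e\}} = \frac{t+2-\szgoodfrac{\emptyset}{\{e\}}}{t+2},
\]
reducing the task to computing $\szgoodfrac{\emptyset}{\{e\}}$. Here the crucial structural input is condition~(i) of the good decomposition, which makes $C_1, \ldots, C_N$ pairwise edge-disjoint; hence each edge $e$ belongs to a unique dicycle $C_{\cindex{e}}$. Feeding this into Definition~\ref{def:goodfrac} with $\sgn = \emptyset$, the set $\goodfrac{\emptyset}{\{e\}}$ collects those $j \in \fracset$ with $E(C_j) \cap \{e\} \neq \emptyset$, and edge-disjointness forces this set to equal $\{\cindex{e}\}$ if $\cindex{e} \in \fracset$ and $\emptyset$ otherwise. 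Thus $\szgoodfrac{\emptyset}{\{e\}} \in \{0,1\}$, and substituting into the display above yields the two claimed values $(t+1)/(t+2)$ and $1$ in the respective cases $j \in \fracset$ and $j \in \notfracset$.

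There is no genuine obstacle in this corollary: all of the nontrivial work is absorbed into Theorem~\ref{thm:balanced}. The only point requiring a moment of care is to verify that no second dicycle can contribute to the count $\szgoodfrac{\emptyset}{\{e\}}$, and this is precisely where edge-disjointness of the decomposition (condition~(i)) is used. Everything else is a direct substitution.
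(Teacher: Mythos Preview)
Your proposal is correct and matches the paper's approach: the corollary is stated without an explicit proof, being an immediate specialization of Theorem~\ref{thm:balanced} at $\sgn=\emptyset$ together with a direct reading of Definitions~\ref{def:goodfrac} and~\ref{def:balanced} on singleton sets. Your observation that edge-disjointness of the decomposition pins down $\szgoodfrac{\emptyset}{\{e\}}\in\{0,1\}$ is exactly the computation the paper leaves implicit.
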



Informally speaking, we assign the value~1
(rather than a fractional value) to the edges of
the dicycles $C_j$ with $j\in\sgn\subseteq\fracset$.
For the sake of exposition,
we call the dicycles $C_j$ with $j\in\fracset-\sgn$
the \emph{fractional dicycles}, and
we call the remaining dicycles $C_i$
(thus $i\in\sgn\cup\notfracset$) the \emph{integral dicycles}.

\begin{proof}[Proof of Theorem \ref{thm:balanced}:~]
To prove Theorem \ref{thm:balanced}, we need to prove
$$\yvec{\sgn}{t} \in \saop^{t}(\homog{\atspbal(G)}).$$
We prove this by induction on $t$.

Note that $\yvec{\sgn}{t}_\emptyset =1$ by Definition \ref{def:balanced}.

The induction basis is important, and it follows easily from the
good decomposition property.
In Lemma~\ref{lemma:base} (below) we show that
$\yvec{\emptyset}{0} \in \saop^0(\homog{\atspbal(G)})$. 
We conclude that $\yvec{\sgn}{0}$ satisfies the first two sets of constraints of 
$\atspbal(G)$, since $\yvec{\sgn}{0} \geq \yvec{\emptyset}{0}$
(this follows from Definitions~\ref{def:goodfrac},\ref{def:balanced}, since
$\goodfrac{\sgn}{S} \subseteq \goodfrac{\emptyset}{S}$). As for the balance constraints, it is enough to observe that every vertex of our instance (see Figure~\ref{fig-bal}) is incident to pairs of outgoing and ingoing edges, which due to Definition~\ref{def:balanced} are assigned the same value. Finally, again by Definition~\ref{def:balanced}, and for all edges $e$, we have $0\leq \yvec{\sgn}{0}_e \leq 1$. All the above imply that $\yvec{\sgn}{0} \in \saop^0(\homog{\atspbal(G)}),\;\forall \sgn\subseteq\fracset$, as wanted. 

In the induction step, we assume that
$\yvec{\sgn}{t} \in \saop^t(\homog{\atspbal(G)})$ for some integer $t\ge0$
(the induction hypothesis),
and we apply the recursive definition based
on the shift operator,
namely,
$\yvec{\sgn}{t+1} \in \saop^{t+1}(\homog{\atspbal(G)})$ iff for each $e\in E$
\begin{align}
	e * \yvec{\sgn}{t+1} \in \saop^t(\homog{\atspbal(G)}),
	\label{eq:cond1} \\
	\yvec{\sgn}{t+1} - e*\yvec{\sgn}{t+1} \in
	\saop^t(\homog{\atspbal(G)}).  \label{eq:cond2}
\end{align}
Lemma~\ref{lemma:cond1} (below) proves (\ref{eq:cond1}) and
Lemma~\ref{lemma:cond2} (below) proves (\ref{eq:cond2}).
\end{proof}

We prove that $e*\yvec{\sgn}{t+1}$ is in $\saop^t(\homog{\atspbal(G)})$
by showing that for some edges $e$,
$e*\yvec{\sgn}{t+1}$ is a scalar multiple of  $\yvec{\sgn'}{t}$,
where $\sgn'\supsetneq\sgn$
(see Equation~\eqref{eqn:bal-case1} in Lemma~\ref{lemma:cond1});
thus, the induction hinges on the use of $\sgn$.


Before proving Lemma~\ref{lemma:cond1} and Lemma~\ref{lemma:cond2},
we show that $\yvec{\sgn}{t+1}$, restricted to $\pop_{t+1}$,
can be written as a convex combination
of $\yvec{\sgn}{t}$ and the integral feasible solution $\onevec{E}{t+1}$.
This is used in the proof of Lemma~\ref{lemma:cond1};
for some of the edges $e\in E$,
we show that $e * \yvec{\sgn}{t+1} = \yvec{\sgn}{t+1}$
(see Equation~\eqref{eqn:bal-case1}),
and then we have to show that the latter is in $\saop^t(\homog{\atspbal(G)})$.

\begin{fact}.~ \label{fact:recursive}
Let $t$ be a nonnegative integer and let $\sgn$ be a subset of $\fracset$.
Then for any $S\in\pop_{t+1}$ we have \qquad
$\displaystyle \yvec{\sgn}{t+1}_S ~=~ \frac{t+2}{t+3} \yvec{\sgn}{t}_S +
	\frac{1}{t+3}\onevec{E}{t+1}_S.
$
\end{fact}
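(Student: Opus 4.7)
The plan is a straightforward algebraic verification based on unfolding the definitions. The key observation is that the quantity $\szgoodfrac{\sgn}{S}$ depends only on $S$ and $\sgn$, and not on the level~$t$: it counts the indices $j\in\fracset-\sgn$ such that $E(C_j)\cap S\neq\emptyset$, with no reference to $t$. So the same quantity $\szgoodfrac{\sgn}{S}$ appears in the formulas for both $\yvec{\sgn}{t}_S$ and $\yvec{\sgn}{t+1}_S$ from Definition~\ref{def:balanced}.

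First I would note that for any $S\in\pop_{t+1}$, we have $S\subseteq{E}$ and hence $\onevec{E}{t+1}_S=1$. Writing $f:=\szgoodfrac{\sgn}{S}$ for brevity, Definition~\ref{def:balanced} gives
\[
\yvec{\sgn}{t+1}_S \;=\; \frac{t+3-f}{t+3}, \qquad \yvec{\sgn}{t}_S \;=\; \frac{t+2-f}{t+2}.
\]
Then I would verify the claimed identity by direct computation:
\[
\frac{t+2}{t+3}\,\yvec{\sgn}{t}_S + \frac{1}{t+3}\,\onevec{E}{t+1}_S
\;=\; \frac{t+2}{t+3}\cdot\frac{t+2-f}{t+2} + \frac{1}{t+3}
\;=\; \frac{(t+2-f)+1}{t+3}
\;=\; \frac{t+3-f}{t+3}
\;=\; \yvec{\sgn}{t+1}_S,
\]
which is the desired equality. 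The coefficients $\frac{t+2}{t+3}$ and $\frac{1}{t+3}$ are nonnegative and sum to $1$, so this genuinely expresses $\yvec{\sgn}{t+1}_S$ as a convex combination (of $\yvec{\sgn}{t}_S$ and the integral solution $\onevec{E}{t+1}_S$), matching the informal description preceding the fact.

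There is essentially no obstacle here; the statement is a calculation that is set up so that all the ``hard work'' is pushed into Definition~\ref{def:balanced}, which was deliberately chosen so that increasing~$t$ by~$1$ shifts the numerator and denominator by the same amount. The main thing to be careful about is purely bookkeeping: that the formula for $\yvec{\sgn}{t}_S$ applies to $S\in\pop_{t+1}$ (it does, since $\pop_{t+1}\subseteq\pop_{t+2}$ where $\yvec{\sgn}{t+1}$ is indexed, and $\yvec{\sgn}{t}$ itself is indexed over $\pop_{t+1}$ by Definition~\ref{def:balanced}), so that both sides of the claimed identity make sense as values at the same $S$.
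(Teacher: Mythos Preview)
Your proposal is correct and is essentially identical to the paper's own proof: both observe that $\onevec{E}{t+1}_S=1$ since $S\subseteq E$, and then verify the identity by the same one-line algebraic manipulation using Definition~\ref{def:balanced} (the paper writes the chain starting from $\yvec{\sgn}{t+1}_S$ rather than from the right-hand side, but the computation is the same).
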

\begin{proof}
We have $S\subseteq E$, $|S|\leq t+1$, and
we get $\onevec{E}{t+1}_S=1$ from the definition. Thus,
\[ \yvec{\sgn}{t+1}_S ~=~ \frac{t+3-\szgoodfrac{\sgn}{S}}{t+3}
	~=~ \frac{t+2-\szgoodfrac{\sgn}{S}}{t+3}+\frac{1}{t+3}
	~=~ \frac{t+2}{t+3} \yvec{\sgn}{t}_S+\frac{1}{t+3}\onevec{E}{t+1}_S.
\]
\end{proof}

\begin{lemma}
\label{lemma:cond1}
Suppose that $\yvec{\sgn}{t} \in \saop^t(\homog{\atspbal(G)})$,
for each $\sgn\subseteq\fracset$.
Then for all $e\in E$ and for all $\sgn\subseteq\fracset$ we have
$ \displaystyle
	e*\yvec{\sgn}{t+1}  \in \saop^t(\homog{\atspbal(G)})$
\end{lemma}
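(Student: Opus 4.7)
The plan is to case-split on the dicycle containing $e$. For a fixed $e\in E$ and $\sgn\subseteq\fracset$, let $j=\cindex{e}$; either $j\in\sgn\cup\notfracset$ (call $e$ ``integral'') or $j\in\fracset-\sgn$ (call $e$ ``fractional''). In each case I will rewrite $e*\yvec{\sgn}{t+1}$ as a combination of vectors already known to lie in $\saop^t(\homog{\atspbal(G)})$ by the hypothesis of the lemma.

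In the integral case, the dicycle $C_j$ is not counted by $\szgoodfrac{\sgn}{\cdot}$, so for every $S\in\pop_{t+1}$ we have $\szgoodfrac{\sgn}{S+e}=\szgoodfrac{\sgn}{S}$, and hence $(e*\yvec{\sgn}{t+1})_S=\yvec{\sgn}{t+1}_{S+e}=\yvec{\sgn}{t+1}_S$. Applying Fact~\ref{fact:recursive} yields
\[
  (e*\yvec{\sgn}{t+1})_S \;=\; \frac{t+2}{t+3}\,\yvec{\sgn}{t}_S \;+\; \frac{1}{t+3}\,\onevec{E}{t+1}_S.
\]
By the lemma's hypothesis, $\yvec{\sgn}{t}\in\saop^t(\homog{\atspbal(G)})$. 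The second summand is the moment vector of the integral ``all-edges'' solution; this solution is feasible for $\atspbal(G)$ because the good decomposition realizes $G$ as an edge-disjoint union of dicycles (so indegree equals outdegree at every vertex) and $G$ is strongly connected, hence $\onevec{E}{t+1}\in\saop^t(\homog{\atspbal(G)})$. Being a cone, $\saop^t(\homog{\atspbal(G)})$ is closed under nonnegative combinations, so this case is done.

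In the fractional case, set $\sgn' := \sgn + j\subseteq\fracset$. The key identity is
\[
  \szgoodfrac{\sgn}{S+e} \;=\; \szgoodfrac{\sgn'}{S} + 1 \qquad \text{for all } S\in\pop_{t+1}:
\]
the ``$+1$'' accounts for the dicycle $C_j$, which is counted by $\szgoodfrac{\sgn}{\cdot}$ and is hit by $S+e$ (via $e$), whereas it is excluded from $\szgoodfrac{\sgn'}{\cdot}$ since $j\in\sgn'$; the remaining indices in $\fracset-\sgn'=(\fracset-\sgn)-\{j\}$ contribute the same to both sides, as $e$ belongs to no other dicycle. Substituting into Definition~\ref{def:balanced} gives
\[
  (e*\yvec{\sgn}{t+1})_S \;=\; \frac{t+3-\szgoodfrac{\sgn}{S+e}}{t+3}
    \;=\; \frac{t+2-\szgoodfrac{\sgn'}{S}}{t+3}
    \;=\; \frac{t+2}{t+3}\,\yvec{\sgn'}{t}_S.
\]
Since $\sgn'\subseteq\fracset$, the lemma's hypothesis provides $\yvec{\sgn'}{t}\in\saop^t(\homog{\atspbal(G)})$, and as this is a cone, the positive scalar multiple also belongs to it.

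I expect the main obstacle to be bookkeeping in the fractional case: one must verify carefully that the only dicycle whose intersection count changes when passing from $(\sgn,S+e)$ to $(\sgn',S)$ is $C_j$. This is precisely where the strengthening of the induction---carrying the parameter $\sgn$ through the statement rather than fixing $\sgn=\emptyset$---pays off: the shift along a fractional edge is absorbed by enlarging $\sgn$ to $\sgn+\cindex{e}$, so the recursive structure of $\yvec{\cdot}{\cdot}$ aligns exactly with the recursive characterization of the \iSA\ system in Fact~\ref{fact:recursive-SA}.
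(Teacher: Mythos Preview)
Your argument is correct and follows the same approach as the paper: split on whether $\cindex{e}\in\sgn\cup\notfracset$ or $\cindex{e}\in\fracset-\sgn$, and in the latter case show $e*\yvec{\sgn}{t+1}=\frac{t+2}{t+3}\yvec{\sgn+\cindex{e}}{t}$ while in the former use Fact~\ref{fact:recursive} to write $\yvec{\sgn}{t+1}$ as a convex combination of $\yvec{\sgn}{t}$ and $\onevec{E}{t+1}$. Your fractional case is in fact slightly cleaner than the paper's: you go directly to the unified identity $\szgoodfrac{\sgn}{S+e}=\szgoodfrac{\sgn'}{S}+1$, whereas the paper splits further on whether $C(e)\cap S$ is empty and then recombines, but the content is identical.
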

\begin{proof}
For any $S \in \pop_{t+1}$, the definition of the shift operator gives
$\displaystyle 
	(e*\yvec{\sgn}{t+1})_S = \yvec{\sgn}{t+1}_{S+e}.
$
Let $C(e)$ denote the dicycle containing edge $e$, and recall
that $\cindex{e}$ denotes the index of $C(e)$.

We first show that
\begin{align}
\label{eqn:bal-case1}
 e*\yvec{\sgn}{t+1}_S =
 \begin{cases}
  \frac{t+2}{t+3} \yvec{\sgn+\cindex{e}}{t}_S
	&\text{if } \cindex{e} \in \fracset - \sgn \\
  \yvec{\sgn}{t+1}_S	&\text{otherwise}
 \end{cases}
\end{align}
If $\cindex{e} \in \sgn \cup \notfracset$, that is, the dicycle $C(e)$
is not ``fractional,'' then Definition~\ref{def:balanced} directly
gives $\yvec{\sgn}{t+1}_{S+e} = \yvec{\sgn}{t+1}_S$.
Otherwise, if $\cindex{e} \in \fracset-\sgn$, then
from Definition~\ref{def:balanced} we see that
if $C(e)\cap S\neq \emptyset$, then
$\goodfrac{\sgn}{S+{e}} = \goodfrac{\sgn}{S}$,
and otherwise,
$\szgoodfrac{\sgn}{S+{e}} = \szgoodfrac{\sgn}{S} + 1$.
Hence,

\begin{align}
 (e*\yvec{\sgn}{t+1})_S & =
 \begin{cases}
  \frac{t+3 - \szgoodfrac{\sgn}{S}}{t+3}  &\text{if } C(e)\cap S\neq \emptyset \\
  \frac{t+2 - \szgoodfrac{\sgn}{S}}{t+3}  &\text{if } C(e)\cap S=\emptyset
 \end{cases} \label{eqn:shift} \\
 &= \frac{t+2}{t+3} \yvec{\sgn+\cindex{e}}{t}_S
\end{align}
%
%
where in the last line we use Definition~\ref{def:balanced} to infer
that
 $\szgoodfrac{\sgn+\cindex{e}}{S} = \szgoodfrac{\sgn}{S} - 1$, if
$C(e)\cap S\neq \emptyset$, and
 $\szgoodfrac{\sgn+\cindex{e}}{S} = \szgoodfrac{\sgn}{S}$, otherwise.

Note that Fact~\ref{fact:recursive} along with $\yvec{\sgn}{t} \in
\saop^t(\homog{\atspbal(G)})$ implies that
$\yvec{\sgn}{t+1}$, restricted to $\pop_{t+1}$,
is in $\saop^t(\homog{\atspbal(G)})$ because
it can be written as a convex combination of $\yvec{\sgn}{t}$ and an
integral feasible solution $\onevec{E}{t+1}$.
Equation~\eqref{eqn:bal-case1} proves Lemma \ref{lemma:cond1} because
both $\yvec{\sgn+\cindex{e}}{t}$ and
$\yvec{\sgn}{t+1}$ (restricted to $\pop_{t+1}$) are in
$\saop^t(\homog{\atspbal(G)})$.
 \end{proof}

\begin{lemma}
\label{lemma:cond2}
Suppose that $\yvec{\sgn}{t} \in \saop^t(\homog{\atspbal(G)})$,
for each $\sgn\subseteq\fracset$.
Then for all $e \in E$ and for all $\sgn\subseteq\fracset$ we have
$\yvec{\sgn}{t+1} - e*\yvec{\sgn}{t+1}  \in \saop^t(\homog{\atspbal(G)})$.
\end{lemma}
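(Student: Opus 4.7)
The plan is to split on whether $e$ lies in an \emph{integral} dicycle ($\cindex{e}\in\sgn\cup\notfracset$) or a \emph{fractional} dicycle ($\cindex{e}\in\fracset-\sgn$), and in each case to identify $\yvec{\sgn}{t+1}-e*\yvec{\sgn}{t+1}$ with a nonnegative scalar multiple of a vector already known to lie in $\saop^t(\homog{\atspbal(G)})$; the recursive characterisation of Fact~\ref{fact:recursive-SA}, together with the induction hypothesis, then completes the induction step.

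For the integral case, recall that $\szgoodfrac{\sgn}{\cdot}$ counts only indices in $\fracset-\sgn$, so adding $e$ to any $S$ does not change $\goodfrac{\sgn}{\cdot}$, and hence $\szgoodfrac{\sgn}{S+e}=\szgoodfrac{\sgn}{S}$ for every $S\in\pop_{t+1}$. Definition~\ref{def:balanced} then gives $\yvec{\sgn}{t+1}_{S+e}=\yvec{\sgn}{t+1}_S$, so the difference vector is identically zero, which trivially lies in the cone $\saop^t(\homog{\atspbal(G)})$.

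For the fractional case, a direct calculation entirely analogous to the one inside Lemma~\ref{lemma:cond1} gives $\szgoodfrac{\sgn}{S+e}=\szgoodfrac{\sgn}{S}+1$ when $C(e)\cap S=\emptyset$ and $\szgoodfrac{\sgn}{S+e}=\szgoodfrac{\sgn}{S}$ otherwise. Plugging this into Definition~\ref{def:balanced} and simplifying would yield the clean identity
\[
\yvec{\sgn}{t+1}-e*\yvec{\sgn}{t+1}
~=~
\frac{1}{t+3}\,\onevec{E-C(e)}{t+1},
\]
valid on all of $\pop_{t+1}$.

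It then remains to verify that the right-hand side lies in $\saop^t(\homog{\atspbal(G)})$, and this is exactly where the good decomposition hypothesis enters. Since $\cindex{e}\in\fracset$, the digraph $G-C(e)$ is strongly connected by definition, and because $G$ is the edge-disjoint union of dicycles, removing a single dicycle preserves indegree-equals-outdegree at every vertex. Hence $\chi^{E-C(e)}$ is an integer feasible solution of $\atspbalpolytope(G)$, and its multilinear lift $\onevec{E-C(e)}{t+1}$ lies in every level of $\iSA$ applied to the associated cone; scaling by $\frac{1}{t+3}\ge 0$ preserves membership. I expect the main subtlety to be the bookkeeping that establishes the clean identity above on all of $\pop_{t+1}$ with no sign slip; once that is in hand, the conceptual crux --- invoking the good decomposition property to certify integer feasibility of $\chi^{E-C(e)}$ --- mirrors the base case $t=0$ of Theorem~\ref{thm:balanced}.
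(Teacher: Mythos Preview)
Your proposal is correct and follows essentially the same route as the paper's proof: the same two-case split on whether $\cindex{e}\in\sgn\cup\notfracset$ or $\cindex{e}\in\fracset-\sgn$, the same identity $\yvec{\sgn}{t+1}-e*\yvec{\sgn}{t+1}=\frac{1}{t+3}\onevec{E-C(e)}{t+1}$ in the fractional case, and the same appeal to the good-decomposition property to certify that $\onevec{E-C(e)}{t+1}$ is (the lift of) an integral feasible solution. One small remark: neither your argument nor the paper's actually uses the stated hypothesis $\yvec{\sgn}{t}\in\saop^t(\homog{\atspbal(G)})$, so your mention of the ``induction hypothesis'' in the opening plan is superfluous here---the fractional case rests entirely on the integrality of $\chi^{E-C(e)}$, not on any inductive assumption.
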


\begin{proof}
Let $C(e)$ denote the dicycle containing edge $e$, and recall
that $\cindex{e}$ denotes the index of $C(e)$.
If $\cindex{e}\in \sgn\cup\notfracset$,
then we have $\goodfrac{\sgn}{S+e} = \goodfrac{\sgn}{S}, \forall S\in\pop_{t+1}$,
hence, we have $\yvec{\sgn}{t+1} = e*\yvec{\sgn}{t+1}$, and
the lemma follows.

Otherwise, we have $\cindex{e}\in \fracset-\sgn$.
Then, for any $S\in\pop_{t+1}$, Equation~\eqref{eqn:shift} gives
\begin{align}
 (\yvec{\sgn}{t+1} - e*\yvec{\sgn}{t+1})_S &=
 \begin{cases}
  0 &\text{if } C(e)\cap S\neq \emptyset \\
  \frac{1}{t+3}  &\text{if } C(e)\cap S=\emptyset
 \end{cases}\\
 &= \frac{1}{t+3}\onevec{E-C(e)}{t+1}_S
\end{align}
The good-decomposition property of $G$ implies that
$\onevec{E-C(e)}{t+1}$ is
a feasible integral solution of \\ $\saop^t(\atspbal(G))$.
\end{proof}

\begin{lemma}
\label{lemma:base}
We have \qquad
$\displaystyle 
\yvec{\emptyset}{0} \in \saop^{0}(\homog{\atspbal(G)}).
$
\end{lemma}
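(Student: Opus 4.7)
The plan is to unfold the definitions of $\saop^0$ and of $\yvec{\emptyset}{0}$ and check each constraint of the balanced~LP cone directly. At level $t=0$, Definition~\ref{def:SA-definition} only allows $S=Q=\emptyset$, so $\saop^0(\homog{\atspbal(G)})$ is just the original balanced~LP cone lifted to variables indexed by $\pop_1$: any constraint $\sum_e a_e y_e \ge b$ of $\widehat{\atspbal}(G)$ becomes $\sum_e a_e y_{\{e\}} \ge b\, y_\emptyset$. By Definition~\ref{def:balanced}, $\yvec{\emptyset}{0}_\emptyset = 1$ and $\yvec{\emptyset}{0}_{\{e\}} \in \{\tfrac12, 1\}$ (equal to $\tfrac12$ iff $\cindex{e}\in\fracset$), so the bounds $0\le y_{\{e\}} \le y_\emptyset$ are immediate.

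Next I will verify the balance constraints. Fix a vertex $v\in V$. Since the dicycles $C_1,\dots,C_N$ partition $E$, each dicycle $C_j$ that visits $v$ contributes exactly one edge into $v$ and one edge out of $v$, and by Definition~\ref{def:balanced} both of these edges receive the same $\yvec{\emptyset}{0}$-value (it depends only on $\cindex{\cdot}=j$). Summing over all $j$ whose dicycle passes through $v$ gives $\yvec{\emptyset}{0}(\delta^{in}(\{v\})) = \yvec{\emptyset}{0}(\delta^{out}(\{v\}))$, as required.

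The main step is verifying the cut constraints $\yvec{\emptyset}{0}(\delta^{out}(U)) \ge 1$ (the $\delta^{in}$ version follows by taking complements, since $\delta^{in}(U)=\delta^{out}(V-U)$). This is where I will use the good~decomposition property in a crucial way. I will argue by contradiction: suppose $\yvec{\emptyset}{0}(\delta^{out}(U)) < 1$ for some nonempty proper $U\subset V$. Then $\delta^{out}(U)$ cannot contain any edge of value $1$, so every edge in $\delta^{out}(U)$ lies in some fractional dicycle (index in $\fracset$) and has value $\tfrac12$; moreover $|\delta^{out}(U)|\le 1$. Since $G$ is strongly connected, $|\delta^{out}(U)|\ge 1$, so $\delta^{out}(U)=\{e\}$ for a single edge $e$ with $j:=\cindex{e}\in\fracset$. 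But then $\delta^{out}(U)\subseteq E(C_j)$, so the digraph $G-E(C_j)$ has no edge leaving $U$, contradicting the witness property that $G-E(C_j)$ is strongly connected. Hence every cut constraint holds with value at least $1 = \yvec{\emptyset}{0}_\emptyset$, completing the verification that $\yvec{\emptyset}{0} \in \saop^0(\homog{\atspbal(G)})$.

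The only mildly subtle step is the cut-constraint argument; the balance and bound constraints are essentially bookkeeping given Definitions~\ref{def:goodfrac} and~\ref{def:balanced}. I expect no significant obstacle, since the witness set $\fracset$ is tailored precisely so that removing any single fractional dicycle leaves a strongly connected digraph, which is exactly what rules out the bad case in the contradiction.
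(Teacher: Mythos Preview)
Your proof is correct and follows essentially the same approach as the paper's own proof: both check the bounds and balance constraints via the dicycle partition (all edges of a dicycle share the same value), and both handle the cut constraints by observing that if an outgoing edge lies in a fractional dicycle $C_j$, the strong connectivity of $G-E(C_j)$ forces a second outgoing edge, so $|\delta^{out}(U)|\ge 2$ and the cut has value $\ge 1$. The only cosmetic difference is that you phrase the cut argument as a contradiction while the paper argues directly.
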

\begin{proof}
Observe that $\yvec{\emptyset}{0}$ has $|E|+1$ elements,
and $\yvec{\emptyset}{0}_{\emptyset} =1$
(by Definition~\ref{def:balanced});
the other $|E|$ elements are indexed by
the singleton sets of $E$.
For notational convenience, let $y \in \reals^{E}$ denote
the restriction of $\yvec{\emptyset}{0}$ to
indices that are singleton sets;
thus, $y_e = \yvec{\emptyset}{0}_{\{e\}}, \forall e\in{E}$.
By Definition~\ref{def:balanced},
$y_e =1/2$ if $e\in E(C_j)$ where $j\in\fracset$,
and $y_e=1$, otherwise.
%
%
We claim that $y$ is a feasible solution to $\atspbalpolytope(G)$.

$y$ is clearly in $[0, 1]^{E}$.
Moreover, $y$ satisfies the balance-constraint at each vertex because it
assigns the same value (either $1/2$ or 1) to every edge in a dicycle
$C_j$, $\forall j\in\cindset$.


To show feasibility of the cut-constraints, consider any cut $\emptyset\neq
U\subset V$. Since $\textbf{1}^{E}$ is a feasible solution, there
exists an edge $e\in E$ crossing from $U$ to $V-U$. If $e\in E(C_j),
j\in\notfracset$, then we have $y_e=1$, which implies
$y(\delta^{out}(U))=y(\delta^{in}(U))\geq 1$
(from the balance-constraints at the vertices).
 Otherwise, we have $e\in E(C_j), j\in\fracset$.
Applying the good-decomposition property of $G$,
we see that there exists an edge
$e^{\prime} (\neq e) \in E - E(C_{j}) $ such that $e^{\prime}\in
\delta^{out}(U)$, i.e., $|\delta^{out}(U)|\geq 2$. Since $y_e\geq
\frac{1}{2}$ for each $e\in E$, the  cut-constraints
$y(\delta^{in}(U))=y(\delta^{out}(U))\geq 1$ are satisfied.
\end{proof}

The next result presents our first lower bound on the integrality ratio for
the level~$t$ relaxation of the \sa\ procedure starting with the balanced~LP.
The relevant instance is a simple digraph on $\Theta(t)$ vertices;
see Figure~\ref{fig-bal}.
In the next subsection, we present better integrality ratios
using the CGK construction,
but the CGK digraph is not as simple and it has $\Theta(t^t)$ vertices.

\begin{theorem}\label{thm:sgbalIR}
Let $t$ be a nonnegative integer, and let $\epsilon\in\reals$
satisfy $0<\epsilon\ll{1}$.
There exists a digraph on $\nu=\nu(t,\epsilon)=\Theta(t/\epsilon)$ vertices
such that the integrality ratio for
the level~$t$ tightening of the balanced~LP (Bal~LP)
(by the \sa\ system)
is $\geq 1+\frac{1-\epsilon}{2t+3}$.
\end{theorem}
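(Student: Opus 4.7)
The plan is to construct, for every $k\geq 1$, a simple digraph $G_k$ admitting a good decomposition, apply Corollary~\ref{coro:balanced-empty} to obtain a feasible fractional solution for the level-$t$ \iSA\ tightening, then prove a matching combinatorial lower bound on $\opt(G_k)$ under unit costs. Choosing $k\approx(t+2)/\epsilon$ then delivers the integrality ratio claimed by the theorem on $\Theta(t/\epsilon)$ vertices.

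For the construction I would let $G_k$ consist of a simple dicycle $C_0$ on $2k$ vertices $v_0,\ldots,v_{2k-1}$ (with edges $v_i\to v_{(i+1)\bmod 2k}$), together with $k-1$ extra ``inner middle'' vertices $u_1,\ldots,u_{k-1}$ and the $k$ length-two dicycles $D_1,\ldots,D_k$ comprising the middle-row path $v_0,u_1,\ldots,u_{k-1},v_k$, where $D_j$ is the pair of anti-parallel edges between the $(j-1)$-st and $j$-th vertex of this path. Thus $G_k$ has $3k-1$ vertices, $2k$ ``thick'' edges (on $C_0$), and $2k$ ``thin'' edges (on the $D_j$'s); assign every edge unit cost. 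Verifying the good-decomposition property is immediate: removing $C_0$ isolates $V(C_0)\setminus\{v_0,v_k\}$ from the middle path, so $C_0\notin\fracset$, while removing any $D_j$ leaves $C_0$ intact and the middle path split into two sub-paths that remain attached to $C_0$ through $v_0$ and $v_k$; hence every $D_j$ lies in $\fracset$. Corollary~\ref{coro:balanced-empty} then provides $\yvec{\emptyset}{t}\in\saop^t(\homog{\atspbalpolytope(G_k)})$ with
\[
\sum_{e}c_e\,\yvec{\emptyset}{t}_e \;=\; 2k\cdot 1 + 2k\cdot\tfrac{t+1}{t+2} \;=\; \tfrac{2k(2t+3)}{t+2},
\]
which upper-bounds the level-$t$ \iSA\ optimum on $G_k$.

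The main obstacle is to establish the matching lower bound $\opt(G_k)\geq 4k-2$ on the minimum cost of an Eulerian spanning subdigraph of $G_k$. Each non-corner vertex of $C_0$ has exactly one thick in-edge and one thick out-edge in $G_k$, so the balance constraints propagate a common multiplicity $\alpha\geq 1$ of all thick edges along the ``top'' arc $v_0\to v_1\to\cdots\to v_k$ and another $\beta\geq 1$ along the ``bottom'' arc $v_k\to\cdots\to v_{2k-1}\to v_0$ (where $\geq 1$ is forced by spanning). Writing $p_j,q_j$ for the multiplicities of the two edges of $D_j$, the balance equations at $v_0$, at $v_k$, and at each inner $u_j$ force $p_j-q_j$ to be constant in $j$ and yield $\alpha=\beta$ together with $p_j=q_j$ for all $j$. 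The thick contribution is therefore at least $2k\alpha\geq 2k$, and the thin contribution equals $2\sum_j p_j$. The span constraints $p_j+p_{j+1}\geq 1$ together with weak connectivity (through $C_0$ and the active two-cycles) reduce the thin bound to a covering problem on the middle path, which a short case analysis shows forces the active set $S=\{j:p_j\geq 1\}$ to have $|S|\geq k-1$: intuitively, one may omit at most one $D_j$, since any two omissions disconnect an interior chunk of the middle path from both $v_0$ and $v_k$. This gives thin cost $\geq 2(k-1)$, and combining yields $\opt(G_k)\geq 2k+2(k-1)=4k-2$.

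Putting the two bounds together, the integrality ratio of the level-$t$ \iSA\ tightening on $G_k$ is at least
\[
\frac{4k-2}{2k(2t+3)/(t+2)} \;=\; \frac{(2k-1)(t+2)}{k(2t+3)} \;=\; 1 + \frac{1}{2t+3}\left(1-\tfrac{t+2}{k}\right),
\]
so choosing $k=\lceil(t+2)/\epsilon\rceil$ makes the parenthetical factor at least $1-\epsilon$ and delivers the claimed ratio $1+(1-\epsilon)/(2t+3)$. The vertex count is then $\nu=3k-1=\Theta(t/\epsilon)$, matching the theorem. The hardest step is the $4k-2$ lower bound: balance propagation along $C_0$ is routine, but combining span with weak connectivity on the middle path requires the careful case analysis sketched above.
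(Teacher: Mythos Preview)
Your construction and overall strategy coincide with the paper's: the digraph $G_k$ is (up to a minor reparametrization of the outer dicycle's length) exactly the instance of Figure~\ref{fig-bal}, and the combination of Corollary~\ref{coro:balanced-empty} with a combinatorial lower bound on the minimum-cost Eulerian subdigraph is precisely what the paper does. The paper does not spell out that lower bound; it simply asserts $\opt\geq 4\ell+2$ and refers to an induction as in Cheung~\cite{cheung05}. So you are actually attempting more detail than the paper provides---but part of that detail is wrong.

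The error is in your derivation of $\alpha=\beta$ and $p_j=q_j$. From balance you correctly obtain that $p_j-q_j$ is a constant~$c$ and that $\alpha-\beta=-c$, but nothing forces $c=0$. A concrete counterexample: $\alpha=1$, $\beta=2$, $p_j=1$, $q_j=0$ for all $j$ is balanced, spanning and connected (it decomposes into $C_0$ plus the dicycle ``forward middle path $+$ bottom arc''), and has cost $4k$. Hence your subsequent claims that the thick contribution equals $2k\alpha$ and the thin contribution equals $2\sum_j p_j$ are unjustified. The repair is short: if $c\neq 0$ then $\alpha+\beta\geq 2+|c|$ and $p_j+q_j\geq |c|$, so the total cost is at least $k(2+|c|)+k|c|=2k+2k|c|\geq 4k$; only the case $c=0$ needs your connectivity argument, and there your reasoning (any two omitted $D_j$'s isolate an interior segment) is sound and yields $4k-2$. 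You should also mention, as the paper does, passing to the metric completion and invoking Section~\ref{sec:EVto0} to pad the fractional solution with zeros on the missing edges.
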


\begin{proof}
Let $G$ be the digraph together with the good~decomposition shown
in Figure~\ref{fig-bal}, and
let the cost of each edge in $G$ be $1$.
We call an edge of $G$ a \emph{thin} edge if it is contained
in a dicycle of length~2;
we call the other edges of $G$ the \emph{thick} edges;
see the illustration in Figure~\ref{fig-bal}.
Consider the metric completion $H$ of $G$.
%
%
It can be seen that the optimal value of
an integral solution of ATSP on $H$
(equivalent to the minimum cost Eulerian subdigraph of $G$)
is $\geq 4\ell+2$,
where $\ell$ is the length of the ``middle path.''
(This can be proved by induction on $\ell$, using similar arguments as
in Cheung~\cite[Claim~3~of~Theorem~11]{cheung05}.)

Given $t$ and $\epsilon$, we
fix $\ell= 2(2t+3)/\epsilon$ to get a digraph $G$
(and its edge costs) from the above family.

By Corollary~\ref{coro:balanced-empty}
the fractional solution $\yvec{\emptyset}{t}$
(Definition \ref{def:balanced})
is in $\saop^t(\atspbalpolytope(G))$:
we have
$\yvec{\emptyset}{t}_e=1$ for each thick edge $e$, and
$\yvec{\emptyset}{t}_e=\frac{t+1}{t+2}$ for each thin edge $e$.
By Section \ref{sec:EVto0}, we can extend $\yvec{\emptyset}{t}$ to a
feasible solution of $\saop^t(\atspbalpolytope(H))$.

Hence, the integrality ratio is
%
%
\[\geq \frac{4\ell + 2}{2\ell + 4 + 2\ell \frac{t+1}{t+2}} \ge
	\frac{2(t+2)}{2t+3} - \frac{2}{\ell} \ge
	1+ \frac{1 - \epsilon}{2t+3}.
\]
\end{proof}


\subsection{CGK (Charikar-Goemans-Karloff) construction}

We briefly explain the CGK~\cite{CGK06} construction and show in
Theorem~\ref{thm:CGK} that the resulting digraph has a good
decomposition. This theorem along with a lemma from \cite{CGK06}
shows that the integrality ratio is $\ge1+\frac{1-\epsilon}{t+1}$
for $t$ rounds of the \sa\ procedure
starting with the Balanced~LP,
for any given $0<\epsilon\ll{1}$, see Theorem~\ref{thm:IR-CGK}.

Let $r$ be a fixed positive integer.
Let $G_{0}$ be the digraph with a single vertex.
Let $G_{1}$ consist of a bidirected path
of $r+2$ vertices, starting at the ``source'' $p$ and ending
at the ``sink'' $q$, whose $2(r+1)$ edges have cost $1$ (see
Figure~\ref{fig:G1}).
We call $E(G_1)$ the \textit{external} edge~set of $G_1$
(we use this in the proof of Lemma~\ref{lem:pre_CGK}).

{
\begin{figure}[h!]
\centering
\subfloat[$G_0$]{
\begin{tikzpicture}
        \node at (0 , 0) [graphnode]{};
        \node at (2,0)[]{};
        \node at (-2,0)[]{};
\end{tikzpicture}
}
\subfloat[$G_1$]{
\begin{tikzpicture}
\foreach \x in {0,1,2,3,4}
{
        \node at (2*\x , 0) [graphnode]{};
}
\foreach \x in {1,2,3,4}
{
        \node at (2*\x -1,0 )[]{$C_{\x}$ };
}

\node at (0-0.3 , 0) []{$p$ };
\node at (8+0.3 , 0) []{$q$ };

\foreach \x in {0,2}
{
        \draw [->,very thick] (2*\x +0.1 , 0.1) .. controls (2*\x +1,0.3) ..
(2*\x + 2-0.1 , 0.1);
        \draw [<-,very thick] (2*\x +0.1 , -0.1) .. controls (2*\x +1,-0.3)
.. (2*\x + 2-0.1 , -0.1);
}
\foreach \x in {1,3}
{
        \draw [->] (2*\x +0.1 , 0.1) .. controls (2*\x +1,0.3) .. (2*\x + 2-0.1 , 0.1);
        \draw [<-] (2*\x +0.1 , -0.1) .. controls (2*\x +1,-0.3) .. (2*\x +
2-0.1 , -0.1);
}
\end{tikzpicture}
}
\caption{$G_0$ and $G_{1}$ for $r=3$}
\label{fig:G1}
\end{figure}
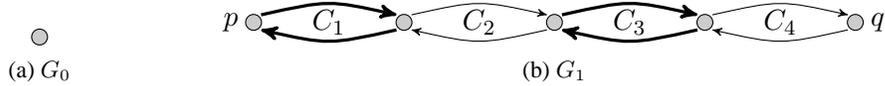

\begin{figure}[h]
\centering
\usetikzlibrary{shapes,snakes}
\tikzstyle{grapheps}=[ellipse, minimum height=.6cm,minimum
width=3.5cm,draw,fill=black!10]
\begin{tikzpicture}[thin,scale=0.8]
\foreach \z in {1,2,3}
{
        \node at (-2 + 5*\z -5, 8.5) [] {$u_\z$};
        \node at (-2 + 5*\z -5, -0.5) [] {$v_\z$};
}
        \draw  [dash pattern=on5pt][->-, very thick] (8 , 8) .. controls (3 ,
10) .. (-2 , 8);
        \draw  [dash pattern=on5pt][->-, very thick] (-2 , 0) .. controls (3
, -2) .. (8 , 0);

\foreach \z in {1,2,3}
{
        \foreach \x in {1,2,...,3}
        {
                \node at (-7 +5*\z, -2*\x + 8) [grapheps]{};
                \node at (-7 +5*\z, -2*\x + 8) []{$G_{k-2}^{(\z,\x)}$};
        }
}

\foreach \z in {0,5,10}
{
        \foreach \x in {1,2,...,3}
        {
                \foreach \y in {0,4}
                {
                        \node at (-\y +\z, 2*\x) [graphnode]{};
                }
        }

        \node at (-2 +\z, 0) [graphnode]{};
        \node at (-2 +\z, 8) [graphnode]{};

        \foreach \x in {1,2}
        {
                \ifthenelse{\z =0 \OR \z= 10}
                {
                        \draw [->-] (0 +\z, 2*\x +0.1 ) to  (0 +\z, 2*\x + 2-0.1 );
                        \draw [->-,very thick]  (-4 +\z, 2*\x + 2-0.1) to (-4 +\z, 2*\x +0.1 );
                }
                {
                        \draw [->-,very thick] (0 +\z, 2*\x +0.1 ) to  (0 +\z, 2*\x + 2-0.1 );
                        \draw [->-]  (-4 +\z, 2*\x + 2-0.1) to (-4 +\z, 2*\x +0.1 );
                }
        }
        \ifthenelse{\z =0 \OR \z= 10}
        {
                \draw [->-,very thick] (-4 +\z, 2) to (-2 +\z, 0) node[pos=0.5,below] {$ $};
                \draw [->-,very thick] (-2 +\z, 8) to (-4 +\z, 6) node[pos=0.5,below] {$ $};
                \draw [->-] (-2 +\z, 0) to (0 +\z, 2*0 + 2-0.1) node[pos=0.5,above] {$ $};
                \draw [->-] (0 +\z, 6) to (-2 +\z, 8) node[pos=0.5,above] {$ $};
        }
        {
                \draw [->-]  (-4 +\z, 2) to  (-2 +\z, 0) node[pos=0.5,below] {$ $};
                \draw [->-]  (-2 +\z, 8) to (-4 +\z, 6) node[pos=0.5,below] {$ $};
                \draw [->-,very thick] (-2 +\z, 0) to (0 +\z, 2*0 + 2-0.1)
node[pos=0.5,above] {$ $};
                \draw [->-,very thick] (0 +\z, 6) to (-2 +\z, 8) node[pos=0.5,above] {$ $};
        }
}
\node at (-6 , 4) [graphnode]{};
\node at (12 , 4) [graphnode]{};
\node at (-6 -0.3, 4) []{$p$ };
\node at (12 +0.3 ,4) []{$q$ };

\draw [->-,very thick] (-6, 4) .. controls (-5,7) .. (-2, 8)
node[pos=0.5,above] {$ $};
\draw [->-,very thick] (-2, 0).. controls (-5,1) .. (-6, 4)
node[pos=0.5,below] {$ $};
\draw [->-] (8, 8) .. controls (11,7) .. (12, 4) node[pos=0.5,above] {$ $};
\draw [->-] (12, 4).. controls (11,1) .. (8, 0) node[pos=0.5,below] {$ $};

\draw [->-] (-2,8) to (3,8) node[pos=0.5,above] {$ $};
\draw [->-] (3,0) to (-2,0) node[pos=0.5,above] {$ $};
\draw [->-,very thick] (3,8) to (8,8) node[pos=0.5,above] {$ $};
\draw [->-,very thick] (8,0) to (3,0) node[pos=0.5,above] {$ $};

\end{tikzpicture}
\caption{$G_{k}$ and $L_k$ for $k\geq 2$ and $r=3$}
\label{fig:G3}
\end{figure}
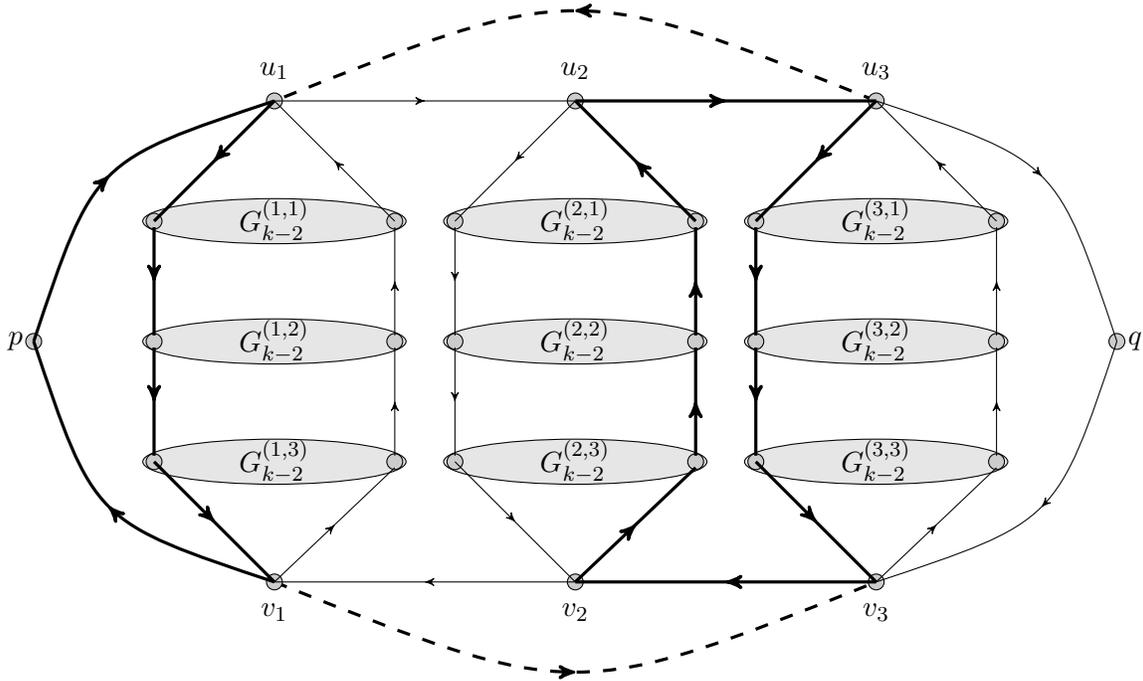
}

For each $k\geq 2$, we construct $G_k$ by taking $r$ copies of
$G_{k-1}$, additional source and sink vertices $p$ and $q$, a dipath
from $p$ to $q$ of $r+1$ edges visiting the sources of the $r$
copies in the order $u_1, u_2, \ldots, u_r$, and another dipath
from $q$ to $p$ of $r+1$ edges visiting the sinks of the $r$ copies
in the order $v_r, v_{r-1}, \ldots, v_1$ where $u_i, v_i$ denote
the source and sink of the $i$-th copy of $G_{k-1}$ (see
Figure~\ref{fig:G3}).
All the new edges have cost $r^{k-1}$. Denote the $i$-th copy of
$G_{k-1}$ by $G_{k-1}^{(i)}$. Let $E_k=E(G_{k})-\cup_{1\leq i\leq r}
E(G_{k-1}^{(i)})$. Let $\{G_{k-2}^{(i, j)}\}_{1\leq j\leq r}$ be the
$r$ copies of $G_{k-2}$ in $G_{k-1}^{(i)}$.
Let $E_{k-1}^{(i)}=E(G_{k-1}^{(i)})-
	\bigcup_{1\leq j\leq r} E(G_{k-2}^{(i,j)})$.
Let $A^{(i)}$ be the dipath from $u_i$ to $v_i$ in $E_{k-1}^{(i)}$
and let $B^{(i)}$ be the dipath from $v_i$ to $u_i$ in $E_{k-1}^{(i)}$.
Let $E^{[r]}_{k-1}=\cup_{1\leq i\leq r} E_{k-1}^{(i)}$.
We call $E_{k}\cup E^{[r]}_{k-1}$ the \textit{external} edge~set of $G_k$.
The other edges form the \textit{internal} edge set of $G_k$.

For each $k\ge2$,
the digraph $L_k$ is constructed from $G_k$ by removing vertices $p$
and $q$, and adding the edges $(u_r,u_1)$ and $(v_1,v_r)$, both of cost
$r^{k-1}$. Let 
$$E^{'}_k=(E_k\cup \{(u_r, u_1), (v_1, v_r)\})-\{ (p,
u_1), (v_1, p), (u_r, q), (q, v_r)\}.$$
 We call $E^{'}_k\cup E^{[r]}_{k-1}$
the \textit{external} edge set of $L_k$. The other edges form the
\textit{internal} edge set of $L_k$.
(Our description of the CGK construction is essentially the same
as in \cite{CGK06}, but they use $s$ and $t$ to denote
the source and sink vertices,
whereas we use $p$ and $q$;
this is to avoid conflict with our symbol $t$
for the number of rounds of the \iSA\ procedure.)

\begin{fact}
\label{fact:Lk-external}
Let $k\ge2$ be a positive integer.
The external edge set of $L_{k}$, i.e., $E_k^{'}\cup E^{[r]}_{k-1}$, can be
partitioned into $r$ dicycles $C_1', \ldots, C_{r}'$ such that
\begin{itemize}
\item[]
	$C_i'=\{(u_{i}, u_{i+1}), (v_{i+1}, v_{i})\}\cup
	B^{(i)}\cup A^{(i+1)}$, for $1\leq i\leq r-1$, and
\item[]
	$C_{r}'=\{(u_{r}, u_1), (v_1, v_{r})\}\cup B^{(r)}\cup A^{(1)}$.
\end{itemize}
Moreover, for each dicycle $C_i'$, $i=1,\dots,r$,
$L_{k}-E(C_i')$ is strongly connected.
\end{fact}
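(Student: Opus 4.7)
The plan is to verify the two claims separately by direct inspection of the recursive CGK construction.

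For the partition claim, I would first confirm that each $C_i'$ is a dicycle by tracing its edges. For $1 \leq i \leq r-1$, starting at $u_i$ we follow the outer edge $(u_i, u_{i+1})$ to $u_{i+1}$, then the dipath $A^{(i+1)}$ to $v_{i+1}$, then the outer edge $(v_{i+1}, v_i)$ to $v_i$, and finally the dipath $B^{(i)}$ back to $u_i$; the case $i = r$ is analogous, using the added wrap-around edges $(u_r, u_1)$ and $(v_1, v_r)$ together with $A^{(1)}$ and $B^{(r)}$. I would then check that $\{C_1', \ldots, C_r'\}$ partitions $E_k' \cup E_{k-1}^{[r]}$ by enumerating edge types: the $r$ outer $u$-edges split one per $C_i'$, the $r$ outer $v$-edges split likewise, each $A^{(j)}$ sits in $C_{j-1}'$ with indices mod $r$ (so $A^{(1)}$ sits in $C_r'$), and each $B^{(j)}$ sits in $C_j'$ (so $B^{(r)}$ sits in $C_r'$). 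Every external edge is thus covered exactly once.

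For strong connectivity of $L_k - E(C_i')$, fix $i$ and observe first that neither $A^{(i)}$ nor $B^{((i \bmod r)+1)}$ lies in $C_i'$, so both survive the deletion. Together with the surviving outer edges they trace a Hamiltonian dicycle on the $2r$ boundary vertices $u_1,\ldots,u_r,v_1,\ldots,v_r$. For $1 \leq i \leq r-1$ this cycle is
\[ u_{i+1} \to u_{i+2} \to \cdots \to u_i \;\xrightarrow{A^{(i)}}\; v_i \to v_{i-1} \to \cdots \to v_{i+1} \;\xrightarrow{B^{(i+1)}}\; u_{i+1}, \]
and the analogous cycle $u_1 \to u_2 \to \cdots \to u_r \xrightarrow{A^{(r)}} v_r \to \cdots \to v_1 \xrightarrow{B^{(1)}} u_1$ handles $i=r$. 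Hence all boundary vertices are mutually reachable inside $L_k - E(C_i')$.

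Next I would attach the internal vertices of each copy $G_{k-1}^{(j)}$ to this backbone. For $j$ outside $\{i,(i \bmod r)+1\}$ the copy is untouched and is itself strongly connected; this auxiliary fact I would prove by a short side induction on $k$, using that each $G_{k-2}^{(j,\ell)}$ is strongly connected (inductive hypothesis) and that $A^{(j)} \cup B^{(j)}$ connects the boundary vertices $u_j,v_j$ to all the internal sources $u^{(j,\ell)}$ and sinks $v^{(j,\ell)}$. For the damaged copy with index $i$, only $B^{(i)}$ has been removed while $A^{(i)}$ and every $G_{k-2}^{(i,\ell)}$ remain; I can reach any internal vertex $w$ from $u_i$ by following $A^{(i)}$ until it visits the appropriate source $u^{(i,\ell)}$ and then using strong connectivity inside $G_{k-2}^{(i,\ell)}$, and I can leave $w$ by routing to some $u^{(i,\ell')}$ and continuing along the surviving tail of $A^{(i)}$ to $v_i$, from which the boundary cycle carries us anywhere. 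The damaged copy with index $(i \bmod r)+1$ is handled by a symmetric argument with $B^{((i \bmod r)+1)}$ and $v_{(i \bmod r)+1}$ playing the roles of $A^{(i)}$ and $u_i$.

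The main obstacle is the analysis of the two damaged copies $G_{k-1}^{(i)}$ and $G_{k-1}^{((i \bmod r)+1)}$: in each, an entire external dipath has been removed, and one must check that the surviving forward (or backward) external dipath, together with the strong connectivity of each next-level piece $G_{k-2}^{(\cdot,\ell)}$, still permits both entry and escape at the relevant boundary vertex. Once this local routing is established, composing it with the Hamiltonian boundary cycle from the previous step yields strong connectivity of $L_k - E(C_i')$ and completes the proof.
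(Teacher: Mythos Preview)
Your proposal is correct. The paper states this as a Fact without proof, leaving the verification to the reader; your direct inspection argument (tracing each $C_i'$ as a dicycle, checking that the four edge types are partitioned, exhibiting the surviving boundary dicycle, and reattaching the untouched and the two damaged copies via the strong connectivity of the $G_{k-2}^{(\cdot,\ell)}$'s) is exactly the kind of verification the paper implicitly has in mind.

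One small remark on phrasing: the dicycle you exhibit after deleting $C_i'$ is not literally a Hamiltonian dicycle on the $2r$ vertices $u_1,\dots,u_r,v_1,\dots,v_r$, since the segments $A^{(i)}$ and $B^{((i \bmod r)+1)}$ pass through the intermediate source and sink vertices of the next level down. What you actually need, and what you use, is only that this dicycle visits all $2r$ boundary vertices; you may want to adjust the wording accordingly. The argument itself is unaffected.
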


We denote the decomposition of the external edge~set of $L_k$
by $\gdecomplong{E_k^{'}\cup E^{[r]}_{k-1}}{L_k}=\{C_1', \ldots, C_{r}'\}$. 

\begin{fact}
\label{fact:Gk-external}
Let $k\ge2$ be a positive integer.
The external edge set of $G_{k}$, i.e., $E_k \cup E^{[r]}_{k-1}$, can be
partitioned into $r+1$ dicycles $C_0, C_1, \ldots, C_{r}$ such that
\begin{itemize}
\item[]
	$C_i=\{(u_{i}, u_{i+1}), (v_{i+1}, v_{i})\}\cup
	B^{(i)}\cup A^{(i+1)}$, for $1\leq i\leq r-1$,
\item[]
	$C_0=\{(p, u_1), (v_1, p)\}\cup A^{(1)}$, and
\item[]
	$C_{r}=\{(u_{r}, q), (q, v_{r})\}\cup B^{(r)}$.
\end{itemize}
Moreover, for each dicycle $C_i$, $i=0,1,\dots,r$,
$G_{k}-E(C_i)$ has two strongly-connected components,
where one contains the source $p$ and
the other one contains the sink $q$.
\end{fact}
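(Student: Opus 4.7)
The plan is to establish the partition and the two-SCC structure of $G_k - E(C_i)$ separately, using direct structural inspection. The strong connectivity of each subcopy $G_{k-1}^{(i)}$, $G_{k-2}^{(i,j)}$, and so on, will be used throughout; this follows by an easy induction on $k$, since $G_0, G_1$ are strongly connected and, at level $k$, the cycle $p \to u_1 \to \cdots \to u_r \to q \to v_r \to \cdots \to v_1 \to p$ from $E_k$ together with the strong connectivity of each $G_{k-1}^{(i)}$ makes $G_k$ strongly connected. Throughout, I write $u^{(i,j)}, v^{(i,j)}$ for the source and sink of $G_{k-2}^{(i,j)}$.

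For the partition claim, each $C_i$ is checked to be a dicycle by tracing its edges; edge-disjointness holds because $A^{(j)}$ occurs only in $C_{j-1}$, $B^{(j)}$ only in $C_j$, and each edge of $E_k$ (namely $(p, u_1), (v_1, p), (u_r, q), (q, v_r)$, and the pairs $(u_i, u_{i+1}), (v_{i+1}, v_i)$) appears in a single $C_i$; summing $|C_0|=|C_r|=r+3$ with $|C_i|=2r+4$ for $1 \le i \le r-1$ yields $2(r+1)^2 = |E_k|+|E^{[r]}_{k-1}|$, confirming completeness. For the two-SCC claim, for each $i$ I identify a vertex cut $(S_i, V \setminus S_i)$ with $p \in S_i$ and $q \notin S_i$: take $S_0 = \{p\}$, $S_r = V(G_k) \setminus \{q\}$, and for $1 \le i \le r-1$
\[
S_i = \{p\} \cup \bigcup_{j=1}^{i} \bigl(\{u_j, v_j\} \cup V(G_{k-1}^{(j)})\bigr).
\]
An edge-by-edge check shows the only crossing edges of $G_k$ across this cut are $\{(p, u_1), (v_1, p)\}$ when $i=0$, $\{(u_r, q), (q, v_r)\}$ when $i=r$, and $\{(u_i, u_{i+1}), (v_{i+1}, v_i)\}$ for intermediate $i$, all of which lie in $E(C_i)$; hence $p$ and $q$ lie in different SCCs of $G_k - E(C_i)$.

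The remaining task is to prove that the induced subgraphs on $S_i$ and on $V(G_k) \setminus S_i$ in $G_k - E(C_i)$ are each strongly connected. For $i = 0$, $\{p\}$ is a singleton SCC, and in $V \setminus S_0$ the cycle $u_1 \to \cdots \to u_r \to q \to v_r \to \cdots \to v_1 \to B^{(1)} \to u_1$ serves as a backbone that visits every top-level vertex; each intact $G_{k-1}^{(j)}$ (for $j \ge 2$) attaches via $u_j, v_j$, and inside the punctured $G_{k-1}^{(1)} - A^{(1)}$ every sub-subcopy $G_{k-2}^{(1,j)}$ is reachable from $v_1$ through its sink $v^{(1,j)}$ along $B^{(1)}$ and returns to $u_1$ via the tail of $B^{(1)}$. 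The case $i = r$ is symmetric, and for intermediate $i$ the backbone inside $G_k[S_i]$ becomes $p \to u_1 \to \cdots \to u_i \to A^{(i)} \to v_i \to v_{i-1} \to \cdots \to v_1 \to p$, with the analysis in $G_{k-1}^{(i)} - B^{(i)}$ mirroring the previous case (with the roles of $A$ and $B$ swapped) and the side $V \setminus S_i$ handled symmetrically. The main obstacle is the bookkeeping in the punctured subcopies: for example, $G_{k-1}^{(i)} - B^{(i)}$ is not strongly connected on its own (the sink $v_i$ loses all its internal outgoing edges), but once it is attached to the backbone of $G_k[S_i]$, every interior vertex can route through the strongly connected sub-subcopy $G_{k-2}^{(i,j)}$ containing it to reach $u^{(i,j)}$, which lies on $A^{(i)}$ and hence on the backbone, so no vertex is stranded and the induced subgraph is strongly connected.
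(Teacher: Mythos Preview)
Your argument is correct. The paper states this result as a ``Fact'' and provides no proof, leaving the verification to the reader; your direct structural check (trace each $C_i$ as a dicycle, count edges to confirm the partition, exhibit the vertex cut $(S_i,V\setminus S_i)$, and then build backbones to certify strong connectivity on each side) is exactly the natural way to carry out that verification, and all the details you give are sound. One cosmetic remark: in your definition of $S_i$ the set $\{u_j,v_j\}$ is already contained in $V(G_{k-1}^{(j)})$, so that union is redundant, but this does no harm.
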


We denote the decomposition of the external edge~set of $G_k$
by $\gdecomplong{E_k\cup{E^{[r]}_{k-1}}}{G_k}=\{C_0,C_1, \ldots, C_{r}\}$. 
Next we identify a structural property 
that will allow us to prove that $L_k$ has a good decomposition. 

\begin{definition}\label{def: pq good decomposition}
We say that $G_k$ has a $p,q$ good decomposition, if the edge~set of $G_{k}$ can be partitioned into
dicycles $C_1, C_2, \ldots, C_N$ such that
for each $1\leq i\leq N$, either
\begin{enumerate}
\item[(1)] $C_i$ consists of external edges, 
and moreover,
$G_{k}-E(C_i)$ has two strongly connected components, one containing
the source $p$ and the other one containing the sink $q$.
\item[(2)] $C_i$ consists of internal edges of $G_k$, and moreover,
$G_{k}-E(C_i)$ is strongly connected.
\end{enumerate}
\end{definition}

\begin{lemma}
\label{lem:pre_CGK}
For all $k\ge1$, $G_k$ has a $p,q$ good decomposition. 
\end{lemma}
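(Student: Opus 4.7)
The plan is to induct strongly on $k$, using the external-edge decomposition of $G_k$ from Fact~\ref{fact:Gk-external} for the top-level edges, and the inductive decompositions of the copies of $G_{k-2}$ sitting two levels deeper to handle the internal edges.

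For the base case $k=1$, the digraph $G_1$ is a bidirected dipath from $p$ to $q$ whose entire edge set is external by convention. Its edges partition into the $r+1$ length-$2$ anti-parallel dicycles $C_1,\ldots,C_{r+1}$ shown in Figure~\ref{fig:G1}; removing any one of them splits $G_1$ into two subdipaths, one containing $p$ and the other containing $q$, so every dicycle meets condition~(1) of Definition~\ref{def: pq good decomposition}.

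For the inductive step, I fix $k\ge 2$ and assume the statement for all smaller indices (with the convention that $G_0$, a single vertex, has the empty decomposition). I first take the $r+1$ dicycles $C_0,\ldots,C_r$ partitioning the external edge set $E_k \cup E^{[r]}_{k-1}$ of $G_k$ supplied by Fact~\ref{fact:Gk-external}; by the ``moreover'' clause of that fact, each satisfies condition~(1), so they are type-(1) dicycles of $G_k$. By the recursive construction, the remaining edges of $G_k$ are exactly $\bigcup_{1\le i,j \le r} E(G_{k-2}^{(i,j)})$, so for each of the $r^2$ copies $G_{k-2}^{(i,j)}$ I invoke the induction hypothesis to obtain a $p,q$ good decomposition of $E(G_{k-2}^{(i,j)})$ and append these dicycles to the list. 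Edge-disjointness and coverage of $E(G_k)$ are then immediate from the recursive construction.

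The substantive step is to argue that every dicycle $D$ arising from the inductive decomposition of some $G_{k-2}^{(i,j)}$ is of type~(2) when viewed in $G_k$. Its edges lie in $E(G_{k-2}^{(i,j)})$ and hence are internal to $G_k$. For the strong connectivity of $G_k - E(D)$, the key observation is that $E(D)\subseteq E(G_{k-2}^{(i,j)})$, so the ``outer'' subdigraph obtained from $G_k$ by deleting the entire copy $G_{k-2}^{(i,j)}$ survives inside $G_k - E(D)$; this outer subdigraph is itself strongly connected because the dipaths $A^{(i)},B^{(i)}$ together with the top-level circulation through $p,q,u_i,v_i$ provide both $u^{(i,j)}\to v^{(i,j)}$ and $v^{(i,j)}\to u^{(i,j)}$ paths that completely avoid $G_{k-2}^{(i,j)}$. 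By the induction hypothesis, $G_{k-2}^{(i,j)}-E(D)$ is either strongly connected (if $D$ was type-(2) in $G_{k-2}^{(i,j)}$) or splits into two strongly connected components, one containing $u^{(i,j)}$ and one containing $v^{(i,j)}$ (if $D$ was type-(1)); in either case every internal vertex of the copy is linked within the copy to at least one of $u^{(i,j)},v^{(i,j)}$, and the outer subdigraph then links these two ``ports'' to every other vertex of $G_k$. Combining, $G_k - E(D)$ is strongly connected, closing the induction.

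The main obstacle I anticipate is precisely this last strong-connectivity verification: one must accommodate both possible types of $D$ inside the copy and appeal to the bypass furnished by the external structure of $G_k$ through $p$ and $q$. The edge-counting step and the application of Fact~\ref{fact:Gk-external} are comparatively routine bookkeeping, and the careful distinction between ``external to $G_k$'', ``external to the copy $G_{k-2}^{(i,j)}$'', and ``internal to $G_k$'' only needs to be set out once at the beginning.
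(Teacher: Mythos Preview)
Your proof is correct and follows essentially the same strong-induction strategy as the paper's: decompose the external edges of $G_k$ via Fact~\ref{fact:Gk-external} and recurse into the copies $G_{k-2}^{(i,j)}$, then verify strong connectivity of $G_k-E(D)$ by splitting into the two inductive cases and using the external scaffolding to link $p^{(i,j)}$ and $q^{(i,j)}$. The only cosmetic difference is that the paper treats both $k=1$ and $k=2$ as explicit base cases, whereas you fold $k=2$ into the inductive step via the convention that $G_0$ carries the empty decomposition; either organization works.
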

\begin{proof}
We prove the result by strong induction on $k$.
For the base cases, consider $G_{1}$ and $G_{2}$.
For $G_{1}$, we take the dicycles $C_1,\dots,C_N$
to be the length~2 dicycles formed by two anti-parallel edges;
thus, $N={r+1}$ (see Figure \ref{fig:G1}).
For $G_{2}$, we use the decomposition of the external edge~set given
by Fact~\ref{fact:Gk-external}.

For the induction step, we have $k\ge3$;
we assume that the statement holds for $1,2,\dots,k-1$ and
prove that it holds for $k$.
By the induction hypothesis, for each $1\leq{i,j}\leq{r}$,
we know that $G^{(i, j)}_{k-2}$ has
a $p,q$~good decomposition
$\gdecomp{E(G^{(i,j)}_{k-2})}=\{C^{(i,j)}_1, C^{(i,j)}_2, \ldots,
	C^{(i,j)}_{N_{(i,j)}}\}$.
Consider the decomposition of $E(G_k)$ into edge-disjoint dicycles
given by
$\gdecompsymbol=\gdecomplong{E_k\cup{E^{[r]}_{k-1}}}{G_k}\cup
	\bigcup_{1\le{i,j}\le{r}} \gdecomp{E(G^{(i,j)}_{k-2})}$.
We claim that $\gdecompsymbol$ is a $p,q$~good decomposition of $G_k$.
Clearly, for $C\in \gdecompsymbol$ such that
$E(C)\subseteq E_k\cup{E^{[r]}_{k-1}}$,
we are done by Fact~\ref{fact:Gk-external}.
Now, consider one of the other dicycles $C\in \gdecompsymbol$;
thus $C$ consists of some internal edges of $G_k$.
Then, there exists an $i$ and $j$ ($1\leq{i,j}\leq{r}$)
such that $C\in \gdecomp{E(G^{(i,j)}_{k-2})}$.
We have two cases, since either condition~(1) or~(2)
of $p,q$~good decomposition of $G^{(i,j)}_{k-2}$ applies to $C$.
In the first case,
$G^{(i, j)}_{k-2}-E(C)$ has two strongly connected components,
where one contains the source $p^{(i, j)}$ of
$G^{(i, j)}_{k-2}$ and
the other one contains the sink $q^{(i, j)}$ of
$G^{(i, j)}_{k-2}$.
Note that the external edge~set of $G_k$ ``strongly connects''
$p^{(i, j)}$ and $q^{(i, j)}$, hence, $G_k-E(C)$ is strongly connected.
In the second case, $G^{(i,j)}_{k-2}-E(C)$ is strongly connected;
then clearly, $G_k-E(C)$ is strongly connected.
Thus $\gdecompsymbol$ is a $p,q$~good decomposition of $G_k$.
\end{proof}

\begin{theorem}
\label{thm:CGK}
For $k\geq 2$,
$L_{k}$ has a good decomposition
with witness~set $\fracset$ such that $\fracset=\cindset$,
i.e. every edge in any cycle in the decomposition can be assigned a fractional value. 
%
\end{theorem}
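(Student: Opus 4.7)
The plan is to combine the external-edge decomposition of $L_k$ from Fact~\ref{fact:Lk-external} with the $p,q$~good decompositions of all the smallest-scale copies $G_{k-2}^{(i,j)}$ guaranteed by Lemma~\ref{lem:pre_CGK}, and then verify that removing any one dicycle of this combined decomposition leaves $L_k$ strongly connected.

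More precisely, first I would define
\[
\gdecompsymbol \;=\; \gdecomplong{E_k' \cup E_{k-1}^{[r]}}{L_k} \;\cup\; \bigcup_{1 \le i,j \le r} \gdecomp{E(G_{k-2}^{(i,j)})},
\]
where the first set is the external decomposition $\{C_1',\dots,C_r'\}$ from Fact~\ref{fact:Lk-external} and each $\gdecomp{E(G_{k-2}^{(i,j)})}$ is a $p,q$~good decomposition of the small copy $G_{k-2}^{(i,j)}$ (which exists by Lemma~\ref{lem:pre_CGK}). Since these edge~sets partition $E(L_k)$ and each member of $\gdecompsymbol$ is a dicycle, condition~(i) in the definition of good decomposition is met. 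It remains to show condition~(ii) with $\fracset = \cindset$: for every dicycle $C \in \gdecompsymbol$, the digraph $L_k - E(C)$ is strongly connected.

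I would split the verification into three cases. Case~1: $C = C_i'$ for some $i \in \{1,\dots,r\}$; strong connectivity of $L_k - E(C_i')$ is exactly the second clause of Fact~\ref{fact:Lk-external}. Case~2: $C$ lies in $\gdecomp{E(G_{k-2}^{(i,j)})}$ for some $i,j$, and condition~(2) of Definition~\ref{def: pq good decomposition} applies to it, so $G_{k-2}^{(i,j)} - E(C)$ is already strongly connected; then $L_k - E(C)$ is strongly connected because the rest of $L_k$ only adds edges (and the outer structure of $L_k$ is itself strongly connected). Case~3 is the main case: $C$ lies in $\gdecomp{E(G_{k-2}^{(i,j)})}$ and condition~(1) applies, so $G_{k-2}^{(i,j)} - E(C)$ falls apart into two strongly connected components containing its own source $p^{(i,j)}$ and its own sink $q^{(i,j)}$ respectively. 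Here I would exhibit, inside the external edges of $L_k$ (which are untouched by removing $C$), a directed cycle through $p^{(i,j)}$ and $q^{(i,j)}$, using the fact that these two vertices are precisely $u_i$ (or a vertex reachable along $A^{(i)}$) and $v_i$ for the surrounding copy $G_{k-1}^{(i)}$, and that the cycles $C_{i-1}'$ and $C_i'$ pass from $u_i$ to $v_i$ via $A^{(i)}$ and back via $B^{(i)}$ together with the outer path $\{(u_i,u_{i+1}),\dots,(v_{i+1},v_i)\}$. This directed walk merges the two components of $G_{k-2}^{(i,j)} - E(C)$ with the rest of $L_k$ into a single strongly connected digraph.

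The bookkeeping in Case~3 is the main obstacle: one must confirm that even after deleting $C$ (which is confined to a single small copy), the relevant $A^{(i)}$/$B^{(i)}$ paths, the circumferential connectors $(u_i,u_{i+1}),(v_{i+1},v_i)$, and the added shortcuts $(u_r,u_1),(v_1,v_r)$ together provide both a $p^{(i,j)}\!\to\!q^{(i,j)}$ walk and a $q^{(i,j)}\!\to\!p^{(i,j)}$ walk within $L_k - E(C)$. Once that tracing is done, every cycle in $\gdecompsymbol$ certifies the good-decomposition condition, so $\fracset = \cindset$ as claimed.
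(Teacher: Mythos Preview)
Your proposal is correct and follows essentially the same route as the paper: the paper defines exactly the decomposition $\gdecompsymbol = \gdecomplong{E_k' \cup E_{k-1}^{[r]}}{L_k} \cup \bigcup_{i,j} \gdecomp{E(G_{k-2}^{(i,j)})}$ and then appeals to ``similar arguments as in the proof of Lemma~\ref{lem:pre_CGK}'' for the three-case verification you spell out. One small slip to fix in your Case~3 write-up: the source $p^{(i,j)}$ and sink $q^{(i,j)}$ of $G_{k-2}^{(i,j)}$ are \emph{internal} vertices of the paths $A^{(i)}$ and $B^{(i)}$ respectively (not $u_i$ and $v_i$ themselves), so the external dicycle $A^{(i)} \cup B^{(i)}$ already passes through both of them and supplies the required $p^{(i,j)}\leftrightarrow q^{(i,j)}$ walks---no further tracing through $C_{i-1}'$, $C_i'$, or the shortcut edges is needed.
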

\begin{proof}
Let $\gdecomplong{E_k^{'}\cup E^{[r]}_{k-1}}{L_k}$
be the decomposition of the external edge~set of $L_{k}$
given by Fact~\ref{fact:Lk-external}.
If $k=2$, then we are done (we have a good decomposition of $L_{k}$
with $\fracset=\cindset$).
Otherwise, we use the decomposition
$\gdecompsymbol=\gdecomplong{E_k^{'}\cup{E^{[r]}_{k-1}}}{L_{k}} \cup
	\bigcup_{1\le{i,j}\le{r}} \gdecomp{E(G^{(i,j)}_{k-2})}$,
where $\gdecomp{E(G^{(i,j)}_{k-2})}$ is a
$p,q$~good decomposition of $G^{(i,j)}_{k-2}$.
Using similar arguments as in the proof of Lemma~\ref{lem:pre_CGK},
it can be seen that $\gdecompsymbol$ is a good decomposition with
$\fracset=\cindset$.
\end{proof}

\begin{lemma}[Lemma~3.2\cite{CGK06}]
\label{lem:CGK}
%
For $k\geq 2$ and $r\geq 3$, the minimum cost of
the Eulerian subdigraph of $L_k$ is $\geq (2k-1)(r-1)r^{k-1}$.
\end{lemma}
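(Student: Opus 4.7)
The plan is strong induction on $k\ge 2$. Any Eulerian subdigraph $T$ of $L_k$ must visit every vertex while respecting indegree equals outdegree at each vertex, and the edges of $L_k$ have highly non-uniform costs: external ring edges at level $k$ cost $r^{k-1}$, external edges inside each $G_{k-1}^{(i)}$ cost $r^{k-2}$, and so on. The inductive bookkeeping will separate the cost contributed by the current level's ring from the cost charged to the $r$ subcopies of $G_{k-1}$.

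For the base case $k=2$, I would directly analyze $L_2$. It consists of $r$ copies $G_1^{(1)},\ldots,G_1^{(r)}$ (each a bidirected path on $r+2$ vertices with unit-cost edges) glued into a ring via two antiparallel dicycles of length $r$ on the $u_i$'s and $v_i$'s, with ring edges of cost $r$. To visit all $r+2$ vertices of each $G_1^{(i)}$, $T$ must traverse each internal bidirected path; Euler balance at $u_i$ and $v_i$ then forces the two ring dicycles to be used in a coordinated way. A direct count, showing the internal cost is at least $2(r+1)$ per copy and the ring cost is at least $2(r-1)\cdot r$ in total, yields $c(T) \ge 3(r-1)r = (2\cdot 2 - 1)(r-1)r^{2-1}$.

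For the inductive step ($k \ge 3$), let $T$ be a minimum Eulerian subdigraph of $L_k$ and set $T^{(i)} := T\cap G_{k-1}^{(i)}$. Because $T$ is Eulerian, $T^{(i)}$ is Euler-balanced at every internal vertex of $G_{k-1}^{(i)}$; its imbalance at the two boundary vertices $u_i$ and $v_i$ is exactly matched by $T$'s use of the level-$k$ ring edges incident to those vertices. I would convert each $T^{(i)}$, by appending suitable virtual edges of cost at most $O(r^{k-1})$ per unit of boundary flow, into an Eulerian subdigraph $\widetilde{T}^{(i)}$ of a copy of $L_{k-1}$. The inductive hypothesis gives $c(\widetilde{T}^{(i)}) \ge (2k-3)(r-1)r^{k-2}$, so summing over the $r$ copies yields $\sum_i c(T^{(i)}) \ge (2k-3)(r-1)r^{k-1}$ minus a virtual-edge overhead. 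This overhead is then absorbed by the genuine cost of the level-$k$ ring edges in $T$, which must be used at least $2(r-1)$ times overall to connect the $r$ copies in an Euler-balanced fashion, contributing at least $2(r-1)r^{k-1}$. Adding: $(2k-3)(r-1)r^{k-1} + 2(r-1)r^{k-1} = (2k-1)(r-1)r^{k-1}$.

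The main obstacle is making the reduction $T^{(i)}\mapsto \widetilde{T}^{(i)}$ tight, i.e., ensuring the virtual edges we append to produce an Eulerian subdigraph of $L_{k-1}$ are not charged more than the level-$k$ ring edges of $T$ we implicitly use to pay for them. I anticipate the cleanest route is to strengthen the inductive hypothesis by proving a lower bound parameterized by the boundary flow at $u$ and $v$ (i.e., a bound for subdigraphs of $G_{k-1}$ that are Euler-balanced except for prescribed imbalance at source and sink), so that the recursion becomes a single inequality solvable inductively without the lossy round-trip through $L_{k-1}$ as a black box.
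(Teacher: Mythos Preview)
The paper does not prove this lemma at all: it is quoted verbatim as Lemma~3.2 of Charikar--Goemans--Karloff~\cite{CGK06} and used as a black box in the integrality-ratio calculation of Theorem~\ref{thm:IR-CGK}. So there is no ``paper's own proof'' to compare your proposal against.

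That said, your outline is the natural approach and is in the spirit of the argument in~\cite{CGK06}: induct on $k$, split the cost of a minimum Eulerian subdigraph between the level-$k$ ring and the $r$ copies of $G_{k-1}$, and apply the induction hypothesis to each copy. You also correctly identify the real difficulty: the restriction $T^{(i)}$ is not itself Eulerian in $L_{k-1}$, so one must either patch it (and account for the patch cost against the ring edges) or, more cleanly, carry a stronger statement through the induction---a lower bound for $G_{k-1}$ parameterized by the boundary imbalance at $u_i,v_i$. Your base-case arithmetic is loose (the two separate lower bounds you state for internal and ring cost do not simply add to $3(r-1)r$; some of the internal path edges can be skipped when ring edges compensate), so even for $k=2$ a more careful case analysis or the parameterized formulation is needed. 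As written this is a plausible plan with the key gap honestly flagged, not yet a proof.
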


\begin{theorem}
\label{thm:IR-CGK}
Let $t$ be a nonnegative integer, and let $\epsilon\in\reals$
satisfy $0<\epsilon\ll{1}$.
There exists a digraph on
$\nu=\nu(t,\epsilon)=O((t/\epsilon)^{(t/\epsilon)})$
vertices such that the integrality ratio for
the level~$t$ tightening of the balanced~LP for ATSP (Bal~LP)
(by the \sa\ system)
is $\geq 1+\frac{1-\epsilon}{t+1}$.
\end{theorem}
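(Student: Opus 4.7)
The plan is to combine Theorem~\ref{thm:CGK}, Corollary~\ref{coro:balanced-empty}, and Lemma~\ref{lem:CGK} on the digraph $L_k$ from the CGK construction, choosing the parameters $k$ and $r$ as functions of $t$ and $\epsilon$. Given $t,\epsilon$, fix $r=k=\lceil c(t+2)/\epsilon\rceil$ for a suitable absolute constant $c$. By Theorem~\ref{thm:CGK}, $L_k$ admits a good decomposition with witness set $\fracset=\cindset$. Corollary~\ref{coro:balanced-empty} then produces a feasible solution $\yvec{\emptyset}{t}\in\saop^t(\homog{\atspbalpolytope(L_k)})$ with $\yvec{\emptyset}{t}_e=\tfrac{t+1}{t+2}$ for \emph{every} edge $e\in E(L_k)$. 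We then extend this vector to a feasible level-$t$ solution on the metric completion $H$ of $L_k$ by the zero-padding argument of Section~\ref{sec:EVto0}.

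Next I would bound the two sides of the integrality ratio. For the fractional objective, I would compute $c(E(L_k))$ recursively: the external edges of $L_k$ contribute $4r^k+2r^{k-1}$ (namely $2r$ edges of cost $r^{k-1}$ from $E'_k$, and $r$ copies of $E_{k-1}$ of total cost $2(r+1)r^{k-1}$), while the internal edges are the disjoint union of $r^2$ copies of $G_{k-2}$. An easy induction on the relation $c(E(G_k))=2(r+1)r^{k-1}+r\cdot c(E(G_{k-1}))$ with base $c(E(G_1))=2(r+1)$ yields $c(E(G_k))=2k(r+1)r^{k-1}$, hence
\[
c(E(L_k))\;=\;4r^k+2r^{k-1}+r^2\,c(E(G_{k-2}))\;=\;2r^{k-1}\bigl(kr+k-1\bigr).
\]
The fractional LP cost is therefore $\tfrac{t+1}{t+2}\cdot 2r^{k-1}(kr+k-1)$. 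For the integer side, Lemma~\ref{lem:CGK} gives that the minimum cost of an Eulerian subdigraph of $L_k$ (equivalently, the optimum ATSP value on $H$) is at least $(2k-1)(r-1)r^{k-1}$.

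Dividing the two estimates, the integrality ratio is at least
\[
\frac{t+2}{t+1}\cdot\frac{(2k-1)(r-1)}{2(kr+k-1)}
\;=\;\frac{t+2}{t+1}\Bigl(1-\frac{4k+r-3}{2(kr+k-1)}\Bigr).
\]
With $k=r=\Theta(t/\epsilon)$, the second factor is $1-O(1/k)=1-O(\epsilon/(t+2))$, so the whole expression is at least $\tfrac{t+2-\epsilon}{t+1}=1+\tfrac{1-\epsilon}{t+1}$ once the constant in the choice of $k$ is large enough. This is the required lower bound.

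Finally I would verify the vertex count. From $|V(G_k)|=2+r\,|V(G_{k-1})|$ and $|V(G_1)|=r+2$ one gets $|V(G_k)|=\Theta(r^k)$, hence $|V(L_k)|=|V(G_k)|-2=\Theta(r^k)$; with $k=r=\Theta(t/\epsilon)$ this is $O((t/\epsilon)^{t/\epsilon})$, matching the statement. The main (and essentially only) obstacle is bookkeeping: evaluating $c(E(L_k))$ exactly enough to compare it against the integer lower bound $(2k-1)(r-1)r^{k-1}$ and extracting the right dependence on $k$ and $r$ so that the leading $\tfrac{t+2}{t+1}$ factor is only diminished by a factor $1-\epsilon/(t+2)$; everything else is an immediate invocation of the previously established results.
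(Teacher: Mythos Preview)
Your proposal is correct and follows essentially the same route as the paper: choose $k=r=\Theta(t/\epsilon)$, invoke Theorem~\ref{thm:CGK} and Corollary~\ref{coro:balanced-empty} to get the all-$\frac{t+1}{t+2}$ feasible solution on $L_k$, zero-pad to the metric completion, and compare the resulting fractional cost against the integral lower bound $(2k-1)(r-1)r^{k-1}$ from Lemma~\ref{lem:CGK}. The only cosmetic difference is that you compute $c(E(L_k))=2r^{k-1}(kr+k-1)$ exactly via the recursion, whereas the paper simply quotes the (slightly weaker) upper bound $c(E(L_k))\le 2k(r+1)r^{k-1}$ from \cite{CGK06} and uses the specific choice $k=r=5(t+1)/\epsilon$; either estimate yields the same final bound after the analogous algebra.
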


\begin{proof}
Given $t$ and $\epsilon$, we apply the CGK construction
with $k=r= 5(t+1)/\epsilon$ to get the digraph $L_k$
and its edge costs. Let $H_k$ be the metric completion of $L_k$.

We know from CGK \cite{CGK06} that
the total cost of the edges in $L_k$ is $\leq 2k(r+1)r^{k-1}$.
By Theorem~\ref{thm:CGK}, $L_k$ has a good decomposition $C_1,\dots,C_N$
such that each of the dicycles $C_j$ has its index in the witness
set $\fracset$ (informally, each edge is assigned to a fractional
dicycle).
Hence, Corollary~\ref{coro:balanced-empty}
implies that the fractional solution that
assigns the value $\frac{t+1}{t+2}$ to (the variable of) each edge
is feasible for $\saop^t(\homog{\atspbalpolytope(L_k)})$. By Section \ref{sec:EVto0}, this feasible solution can
be extended to a feasible solution in $\saop^t(\homog{\atspbalpolytope(H_k)})$.

Then, using Lemma~\ref{lem:CGK},
we see that the integrality ratio of
$\saop^t(\homog{\atspbalpolytope(H_k)})$ is
\begin{align*}
\geq &
\frac{(2k-1)(r-1)r^{k-1}}{(\frac{t+1}{t+2})2k(r+1)r^{k-1}}
%
%
	= 1+\frac{1}{t+1}-\frac{5r-1}{\frac{t+1}{t+2}(r+1)(2r)}
	\geq 1+\frac{1}{t+1}-\frac{5}{\frac{t+1}{t+2}}\frac{1}{(2r)}
\\
\geq & 1+ \frac{1-\epsilon}{t+1}. \nonumber
\end{align*}
\end{proof}


\section{\iSA\ applied to the standard (DFJ~LP) relaxation of ATSP}

Let $G = (V,E)$ be a strongly connected digraph
that has a good decomposition, and moreover,
has both indegree and outdegree $\le2$ for every vertex.
We use the same notation as in Section~\ref{section:balanced},
i.e., $C_1, C_2, \ldots, C_N$ denote the
edge disjoint dicycles of the decomposition,
and there exists $\fracset\subseteq\cindset=\{1,\dots,N\}$
such that $\fracset$ is nonempty and
$G-E(C_j)$ is strongly connected for all $j\in\fracset$.

We define a splitting operation that splits every vertex that
has indegree~2 (and outdegree~2)
into two vertices (along with some edges);
our definition depends on the given good decomposition of the digraph.
The purpose of the splitting operation will be
clear from Fact \ref{fact:CorrectDeg}.

\textbf{Splitting Operation:} Let  $v\in V(G)$ whose indegree and
outdegree is $2$. Suppose $C_i, C_j$ are the dicycles in the
good decomposition going through $v$. Let $e_{i1}=(v_{i1}, v),
e_{j1}=(v_{j1}, v)$ and $e_{i2}=(v, v_{i2}), e_{j2}=(v, v_{j2})$
be the edges in $\delta^{in}(v)$, $\delta^{out}(v)$, respectively,
where $e_{i1}, e_{i2}\in C_i$ and $e_{j1}, e_{j2}\in C_j$.  We split
$v$ into $v^{u}, v^{b}$ as follows:

\begin{itemize}
\item Replace $e_{i1}, e_{j2}$ by $e_{i1}^{new}=(v_{i1},v^{u}),
	e_{j2}^{new}=(v^{u}, v_{j2})$ (the new edges are called solid edges)
\item Replace $e_{i2}, e_{j1}$ by $e_{i2}^{new}=(v^{b},v_{i2}),
	e_{j1}^{new}=(v_{j1}, v^{b})$ (the new edges are called solid edges)
\item Add the auxiliary edges (also called dashed edges)
	$e_{0}=(v^{b}, v^{u}), e'_{0}=(v^{u}, v^{b})$.
\end{itemize}

See Figure~\ref{fig:vertexsplit} for an illustration.

We obtain $G^{new}=(V^{new}, E^{new})$ from $G$ by applying the splitting operation to
every vertex in $G$ whose indegree and outdegree is $2$.
We map each dicycle $C_j$, $j\in\cindset$, of $G$
to a set of edges of $G^{new}$ that we call a cycle
and that we will (temporarily) denote by $C_j^{new}$.
We define $C_j^{new}$ to be the following set of edges:
for every edge of $C_j$, its image (in $G^{new}$) is in $C_j^{new}$;
moreover, for every splitted vertex $v$ of $G$ incident to $C_j$,
note that one of $v^{u}$ or $v^{b}$ (the two images of $v$)
is the head of one of the two edges of $C_j^{new}$ incident to $\{v^{u},v^{b}\}$,
and one of the two auxiliary edges $e_{0},e'_{0}$
has its head at the same vertex;
we place this auxiliary edge also in $C_j^{new}$.
For example, in Figure~\ref{fig:vertexsplit},
the cycle $C_i^{new}$ contains the edges
$e_{i1}^{new}$ (image of $e_{i1}$),
$e_{i2}^{new}$ (image of $e_{i2}$), and the auxiliary edge $e_{0}$,
whereas the cycle $C_j^{new}$ contains the edges
$e_{j1}^{new}$, $e_{j2}^{new}$, and
the auxiliary edge $e'_{0}$.

In what follows,
we simplify the notation for the cycles of $G^{new}$
to $C_j$ (rather than $C_j^{new}$);
there is some danger of ambiguity, but the context will resolve this.
We denote the set of auxiliary edges (also called the dashed edges)
of a cycle $C_j=C_j^{new}$ by $D(C_j)$,
and we denote the set of remaining edges of $C_j=C_j^{new}$ by $E(C_j)$.
Note that 
$E^{new} = E(G^{new}) =
\bigcup_{j\in \cindset} (E(C_j) \cup D(C_j))$.
Clearly, there is a bijection between the edges of $E(C_j)=E(C_j^{new})$
in $G^{new}$ and the edges of $E(C_j)$ in $G$.
Also, observe that in $G^{new}$,
the dashed edges are partitioned among the cycles $C_j^{new},j\in\cindset$.


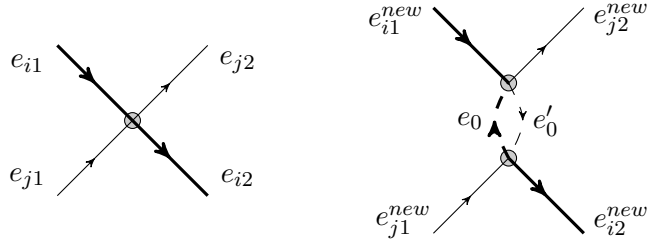
\begin{figure}
\begin{center}
\begin{tikzpicture}
\node at (0,0) [graphnode]{$ $};
\draw [->-, very thick] (-1,1) to node[above]{$e_{i1}\hspace{1.8cm}$} (0,0);
\draw [->-, very thick] (0,0) to node[below]{$\hspace{1.8cm}e_{i2}$} (1,-1);
\draw [->-] (-1,-1) to node[below]{$e_{j1}\hspace{1.8cm}$} (0,0);
\draw [->-] (0,0) to node[above]{$\hspace{1.8cm}e_{j2}$} (1,1);

\node at (5,0.5) [graphnode]{};
\node at (5,-0.5) [graphnode]{};
\draw [->-, very thick] (4,1.5) to node[above]{$e_{i1}^{new}\hspace{2cm}$} (5,0.5);
\draw [->-, very thick] (5,-0.5) to node[below]{\hspace{2cm}$e_{i2}^{new}$} (6,-1.5);
\draw [->-] (4,-1.5) to node[below]{$e_{j1}^{new} \hspace{1.8cm}$} (5,-0.5);
\draw [->-] (5,0.5) to node[above]{$\hspace{2cm}e_{j2}^{new}$} (6,1.5);

\draw [dash pattern=on5pt][->-, very thick] (5,-0.5) .. controls (4.75,0) .. (5, 0.5) node[pos=0.5,left] {$e_{0}$};
\draw [dash pattern=on5pt][->-] (5,0.5) .. controls (5.25,0) .. (5,-0.5) node[pos=0.5,right] {$e'_{0}$};

\end{tikzpicture}
\end{center}
\caption{ An illustration of the vertex splitting operation used for
	mapping $G$ to $G^{new}$.}
\label{fig:vertexsplit}
\end{figure}

\begin{fact}\label{fact:CorrectDeg}
Consider a digraph $G=(V,E)$ that has a good decomposition, and consider
$x\in \reals^{E}$ such that
\textbf{(1)}~$0 \leq x\leq 1$, \quad
\textbf{(2)}~for every dicycle $C_j$, $j\in\cindset$,
	$x_e$ is the same for all edges $e$ of $C_j$, and \quad
\textbf{(3)}~for every vertex $v$ with indegree = 1 = outdegree,
$x(\delta^{in}(v))=x(\delta^{out}(v))=1$.
Then, for the digraph $G^{new}=(V^{new},E^{new})$ obtained by
applying the splitting operations, there exists
$x^{new}\in \reals^{E^{new}}$ such that
$0 \leq x^{new}\leq 1$, and
$x^{new}(\delta^{in}(v))=x^{new}(\delta^{out}(v))=1, \forall v\in V^{new}$.
\end{fact}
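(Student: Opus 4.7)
The plan is to construct $x^{new}$ by handling the solid edges (images of edges of $G$) and the auxiliary (dashed) edges separately. For each $e \in E$, I set $x^{new}_{e^{new}} := x_e$; by condition~(2), for each dicycle $C_j$ with common value $\alpha_j := x_e$ (for $e \in C_j$), every solid edge in $C_j^{new}$ receives $\alpha_j$. It remains to assign values to the dashed edges, which are introduced only at the split vertices.

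Consider a split vertex $v \in V$, through which cycles $C_i$ and $C_j$ pass. Using the labels from the splitting operation, at $v^u$ the solid incoming edge is $e_{i1}^{new}$ (value $\alpha_i$) and the solid outgoing edge is $e_{j2}^{new}$ (value $\alpha_j$); at $v^b$ the solid incoming edge is $e_{j1}^{new}$ (value $\alpha_j$) and the solid outgoing edge is $e_{i2}^{new}$ (value $\alpha_i$). To restore unit in- and out-degree, I set $x^{new}_{e_0} := 1 - \alpha_i$ for the dashed edge $e_0 = (v^b,v^u)$ and $x^{new}_{e'_0} := 1 - \alpha_j$ for $e'_0 = (v^u,v^b)$. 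A direct check then yields $x^{new}(\delta^{in}(v^u)) = \alpha_i + (1-\alpha_i) = 1$, $x^{new}(\delta^{out}(v^u)) = \alpha_j + (1-\alpha_j) = 1$, and symmetrically $x^{new}(\delta^{in}(v^b)) = \alpha_j + (1-\alpha_j) = 1$, $x^{new}(\delta^{out}(v^b)) = \alpha_i + (1-\alpha_i) = 1$. Since $\alpha_i,\alpha_j \in [0,1]$ by condition~(1), the dashed-edge values $1-\alpha_i$ and $1-\alpha_j$ also lie in $[0,1]$.

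For a vertex $v \in V$ that is not split (indegree $=$ outdegree $= 1$ in $G$), its single incoming and outgoing edges are preserved as solid edges of $G^{new}$, so $x^{new}(\delta^{in}(v)) = x(\delta^{in}(v)) = 1$ and similarly for the outdegree, by condition~(3). This completes the construction, and no serious obstacle arises: the verification is a purely local computation at each split vertex. The role of condition~(2) is precisely to make the cycle values $\alpha_i$ and $\alpha_j$ feeding into $1-\alpha_i$ and $1-\alpha_j$ well-defined, and since distinct split vertices are handled independently, no global consistency requirement needs to be addressed.
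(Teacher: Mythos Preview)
Your proof is correct and follows essentially the same approach as the paper: assign $x^{new}_e=\alpha_j$ to each solid edge $e\in E(C_j^{new})$ and $x^{new}_e=1-\alpha_j$ to each dashed edge $e\in D(C_j^{new})$, which is exactly the assignment you arrive at (since $e_0\in D(C_i)$ and $e'_0\in D(C_j)$). If anything, your write-up is more explicit than the paper's, which simply states the assignment and leaves the degree verification to the reader.
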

\begin{proof}
For each $j\in\cindset$, we consider the dicycle $C_j$.
Let $\alpha_j$ be the $x$-value associated with the dicycle $C_j$ of $G$,
i.e., $x_e=\alpha_j, \forall e\in E(C_j)$.
Then, in $x^{new}$ and $G^{new}$,
we fix $x_e=\alpha_j, \forall e\in E(C_j)=E(C_j^{new})$, and
we fix $x_e=(1-\alpha_j), \forall e\in D(C_j)=D(C_j^{new})$.
It can be seen that $x^{new}$ satisfies the given conditions.
\end{proof}

\begin{definition}
Consider the digraph $G^{new}$.
For any $j\in \fracset$, let $\tour(j) :=
	D(C_j) \cup \bigcup_{i\in (\cindset-j)} E(C_i)$.
\end{definition}

Thus $\tour(j)$
consists of all the solid edges except those in $C_j$ together with
all the dashed edges of $C_j$.
Note that each vertex in $G^{new}$ has exactly one incoming edge and
exactly one outgoing edge in $\tour(j)$. Thus $\tour(j)$ forms a set of
vertex-disjoint dicycles that partition $V^{new}$.

\begin{definition}
\label{def:nicedigraph}
Let $G$ be a digraph with indegree and outdegree $\leq2$
at every vertex, and suppose that $G$ has a good decomposition
with witness set $\fracset$.
Let $G^{new}$ be the digraph obtained by applying splitting operations
to $G$ and its good decomposition.
Then $G$ is said to have the \textit{good tours} property if
$\tour(j)$ is connected
(i.e., $\tour(j)$ forms a Hamiltonian dicycle of $G^{new}$)
for each $j\in\fracset$.
\end{definition}


{
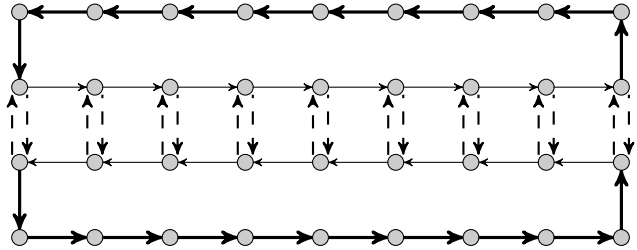
\begin{figure}[htb]
\centering
\begin{tikzpicture}
\foreach \x in {0,1,2,...,8}
{
	\node at (\x , 0) [graphnode]{};
	\node at (\x , 1) [graphnode]{};
	\node at (\x , 2) [graphnode]{};
	\node at (\x , 3) [graphnode]{};
}
\foreach \x in {0,1,2,...,7}
{
	\draw [->, very thick] (\x +0.1 , 0) to node[auto]{$ $} (\x + 1-0.1 , 0);
	\draw [<-] (\x +0.1, 1) to node[auto]{$ $} (\x + 1-0.1 , 1);
	\draw [->] (\x +0.1, 2) to node[auto]{$ $} (\x + 1-0.1 , 2);
	\draw [<-, very thick] (\x +0.1, 3) to node[auto]{$ $} (\x + 1-0.1 , 3);
}

\foreach \y in {0,2}
{
	\draw [<-, very thick] (0 , \y +0.1) to node[auto]{$ $} (0 , \y +1-0.1);
	\draw [->, very thick] (8 , \y +0.1) to node[auto]{$ $} (8 , \y +1-0.1);
}

\foreach \x in {0,1,2,...,7,8}
{
	\draw[dash pattern=on5pt] [->, thick] (\x -0.1 , 1+0.1) to node[auto]{$ $} (\x -0.1, 2-0.1);
	\draw [dash pattern=on5pt][<-, thick] (\x +0.1, 1+0.1) to node[auto]{$ $} (\x +0.1, 2-0.1);
}

\end{tikzpicture}
\caption{Digraph from Figure~\ref{fig-bal} after the splitting operation}
\label{fig-graph}
\end{figure}

\begin{figure}[htb]

\begin{center}
\subfloat[ ]{

\begin{tikzpicture}
\node at (0,0.5) [graphnode]{};
\node at (1,0.5) [graphnode]{};
\node at (3,0)[]{};
\node at (-2,1)[]{};
\draw [->-] (0 +0.1 , 0.5+0.1) .. controls(0 + 0.4, 0.75) .. node[auto]{$ $} (0 + 1-0.1 , 0.5+0.1);
\draw [->-] (0 + 1-0.1 , 0.5-0.1) .. controls(0 + 0.4, 0.25) .. node[auto]{$ $} (0 +0.1 , 0.5-0.1);
\end{tikzpicture}
}
\subfloat[ ]{
\begin{tikzpicture}
\node at (3,0) [graphnode]{};
\node at (3,1)[graphnode]{};
\node at (4,0)[graphnode]{};
\node at (4,1)[graphnode]{};
\draw [<-] (3 +0.1, 0) to node[below]{$ $} (4 -0.1 , 0);
\draw [->] (3 +0.1, 1) to node[auto]{$ $} (4 -0.1 , 1);
\draw[dash pattern=on5pt] [->, thick] (4 -0.1 , 0+0.1) to node[right]{$ $} (4 -0.1, 1-0.1);
\draw [dash pattern=on5pt][<-, thick] (3 +0.1, 0+0.1) to node[auto]{$ $} (3 +0.1, 1-0.1);
\node at (5,0)[]{};
\node at (2,0)[]{};
\end{tikzpicture}
}
\caption{ Transforming a dicycle $C_j$ formed by
an anti-parallel pair of thin edges in Figure~\ref{fig-bal} to
$C_j^{new}$ by the splitting operation.}
\label{fig-square}
\end{center}
\end{figure}
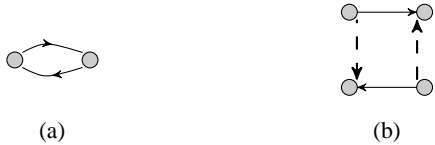

\begin{figure}[htb]
\centering
\begin{tikzpicture}
\foreach \x in {0,1,2,...,8}
{
	\node at (\x , 0) [graphnode]{};
	\node at (\x , 1) [graphnode]{};
	\node at (\x , 2) [graphnode]{};
	\node at (\x , 3) [graphnode]{};
}
\foreach \x in {0,1,2,...,7}
{
	\draw [->, very thick] (\x +0.1 , 0) to node[auto]{$ $} (\x + 1-0.1 , 0);
	\draw [<-, very thick] (\x +0.1, 3) to node[auto]{$ $} (\x + 1-0.1 , 3);
}
\foreach \x in {0,1,2,3,5,6,7}
{
	\draw [<-] (\x +0.1, 1) to node[auto]{$ $} (\x + 1-0.1 , 1);
	\draw [->] (\x +0.1, 2) to node[auto]{$ $} (\x + 1-0.1 , 2);
}
\foreach \y in {0,2}
{
	\draw [<-, very thick] (0 , \y +0.1) to node[auto]{$ $} (0 , \y +1-0.1);
	\draw [->, very thick] (8 , \y +0.1) to node[auto]{$ $} (8 , \y +1-0.1);
}
	\draw [dash pattern=on5pt] [<-, thick] (4 , 1 +0.1) to node[auto]{$e$} (4 , 2-0.1);
	\draw [dash pattern=on5pt] [->, thick] (5 , 1 +0.1) to node[auto]{$ $} (5 , 2-0.1);

\end{tikzpicture}
\caption{$tour(e)$}
\label{fig-tour}
\end{figure}
}

\subsection{Certifying a feasible solution}

In what follows, we assume that $G$ is a digraph
that satisfies the conditions stated in
Definition~\ref{def:nicedigraph}.
We focus on the digraph
$G^{new}$ obtained by applying splitting operations to $G$;
observe that $G^{new}$ depends on $G$ as well as on
the given good decomposition of $G$. Let $\atspdfj(G^{new})$ be the associated cone of $\atspdfjpolytope(G^{new})$.

Let $E$ denote the set of images of the edges of $G$ (the solid edges),
and let $D$ denote the set of auxiliary edges (the dashed edges).
Given $S\subseteq E$ and $\sgn \subseteq \fracset$,
let $\goodfrac{\sgn}{S}$ denote the
set of indices $j\in \fracset - \sgn$ such that
$E(C_j)$ intersects $S$,
and let $\szgoodfrac{\sgn}{S}$ denote the size of this set;
thus, $\szgoodfrac{\sgn}{S}$ denotes the number of ``fractional cycles''
that intersect $S$ in the solid edges.

Note that each (solid or dashed) edge $e$ is in a unique cyle $C(e)$;
let $\cindex{e}$ denote the index of $C(e)$ in $\cindset$;
if $\cindex{e}\in\fracset-\sgn$, then we use $\tour(e)$ to
denote $\tour(\cindex{e})$.



Let $t$ be a nonnegative integer.
We define the feasible solution $\zveconly$ for
the level~$t$ tightening of the DFJ-LP
(of ATSP, by the \iSA\ system)
as follows:
\begin{definition}\label{def:dfj-feasol}
For a nonnegative integer $t$ and
for any subset $\sgn$ of $\fracset$,
let $\yvec{\sgn}{t}$ be a vector indexed by the elements of
$\mathcal{P}_{t+1}$ and defined as follows:
\label{def:dfj-zvec}
\begin{align}
(\zvec{\sgn}{t})_S =
 \begin{cases}
	\frac{t+2-\szgoodfrac{\sgn}{S}}{t+2} & \textup{if~} S\cap D = \emptyset
		\qquad \textup{($S$ has no dashed edges)} \\
	\frac{1}{t+2} & \textup{if~} S\cap D \not= \emptyset \textup{~and~}
		\exists i\in \fracset - \sgn ~:~ \tour(i) \supseteq S \\
		& \hspace{-0.7cm}
		\qquad
 \textup{\textit{($S$ contains some dashed edges and is contained in a tour)}} \\
	0 & \textup{otherwise}
 \end{cases}
\end{align}
\end{definition}

Observe that the second case applies when
the set $S$ has one or more dashed edges,
and moreover, $S$ is contained in a
$\tour(i)$, $i\in\fracset - \sgn$;
also, observe that there is at most one tour that contains $S$,
because the dashed edges are partitioned among the cycles $C_j,j\in\cindset$,
so each dashed edge in $S$ belongs to a unique tour.

\begin{theorem}
\label{thm:dfj-feasible}
Let $G=(V,E)$ be a strongly connected digraph that
has a good decomposition with witness set~$\fracset$,
and moreover,
has (i)~both indegree and outdegree $\le2$ for every vertex,
and (ii)~satisfies the ``good tours'' property.
Then, for any nonnegative integer $t$, and
any $\sgn\subseteq\fracset$ with $|\sgn|\le|\fracset| - (t+2)$,
we have
\[ \zvec{\sgn}{t} \in \saop^t(\homog{\atspdfjpolytope(G^{new})}).
\]
\end{theorem}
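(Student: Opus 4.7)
The plan is to mirror the induction used in the proof of Theorem~\ref{thm:balanced}, but with an enlarged case analysis that accounts for the dashed (auxiliary) edges introduced by the splitting operation, and to exploit the good tours property precisely where the dashed edges contribute.

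I would proceed by induction on $t$, keeping $\sgn$ as a free parameter (subject to $|\sgn|\le|\fracset|-(t+2)$) since the induction step will force $\sgn$ to grow. For the base case $t=0$, I would show that $\zvec{\sgn}{0}$ is a feasible point of $\atspdfjpolytope(G^{new})$. The box constraints $0\le y\le1$ are immediate from Definition~\ref{def:dfj-feasol}. For the degree constraints, I would invoke Fact~\ref{fact:CorrectDeg} applied to the vector that assigns $\frac{t+1}{t+2}=\frac12$ to solid edges of fractional cycles and $1$ to solid edges of integral cycles: at each split vertex $v^u$ (resp.\ $v^b$), the two incoming (resp.\ outgoing) edges belong to a common cycle $C_j$, with one solid and one dashed; the values $\frac12+\frac12=1$ or $1+0=1$ both yield balance~$1$. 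At a non-split vertex $v$, both of its incident edges lie in a single cycle $C_j$, so the strong-connectivity condition $G-E(C_j)$ strongly connected would fail, forcing $j\in\notfracset$; hence the solid edges at $v$ receive value $1$. For the cut constraints, since $|\sgn|\le|\fracset|-2$ we may pick any $j\in\fracset-\sgn$, and the good-decomposition property guarantees that every cut contains an edge outside $E(C_j)$, so the argument of Lemma~\ref{lemma:base} carries over.

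For the induction step, I would apply Fact~\ref{fact:recursive-SA} and reduce to showing $e*\zvec{\sgn}{t+1}\in\saop^t(\homog{\atspdfj(G^{new})})$ and $\zvec{\sgn}{t+1}-e*\zvec{\sgn}{t+1}\in\saop^t(\homog{\atspdfj(G^{new})})$ for every edge $e$ of $G^{new}$. First I would establish the decomposition
\[
\zvec{\sgn}{t+1}_S \;=\; \tfrac{t+2}{t+3}\,\zvec{\sgn}{t}_S \;+\; \tfrac{1}{t+3}\,\onevec{E}{t+1}_S,\qquad \forall S\in\pop_{t+1},
\]
where $\onevec{E}{t+1}$ denotes the $0/1$ indicator corresponding to ``all solid, no dashed'': one checks that $\onevec{E}{t+1}$ is integral feasible for $\atspdfjpolytope(G^{new})$ because each split vertex has exactly one solid in-edge and one solid out-edge, and the solid-only subgraph is a strongly connected Eulerian spanning subgraph (inherited from strong connectivity of $G$). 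Then I would case-split on $e$: (i)~$e$ solid with $\cindex{e}\in\fracset-\sgn$: argue $e*\zvec{\sgn}{t+1}=\frac{t+2}{t+3}\zvec{\sgn+\cindex{e}}{t}$, which lies in $\saop^t$ by the inductive hypothesis (the size condition $|\sgn+\cindex{e}|\le|\fracset|-(t+2)$ holds); (ii)~$e$ solid with $\cindex{e}\in\sgn\cup\notfracset$: argue $e*\zvec{\sgn}{t+1}=\zvec{\sgn}{t+1}\big|_{\pop_{t+1}}$, which is in $\saop^t$ by the displayed recursion; (iii)~$e$ dashed with $\cindex{e}\in\fracset-\sgn$: since $e\in D(C_{\cindex{e}})$ lies only in $\tour(\cindex{e})$, the set $S+e$ is contained in a tour iff $S\subseteq\tour(\cindex{e})$, giving $e*\zvec{\sgn}{t+1}=\frac{1}{t+3}\,\onevec{\tour(\cindex{e})}{t+1}$; the good tours property of Definition~\ref{def:nicedigraph} ensures that $\tour(\cindex{e})$ is a Hamiltonian dicycle of $G^{new}$, so $\onevec{\tour(\cindex{e})}{t+1}$ is integral feasible; (iv)~$e$ dashed with $\cindex{e}\in\sgn\cup\notfracset$: the tour constraint can never be met, so $e*\zvec{\sgn}{t+1}=0\in\saop^t$. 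The vector $\zvec{\sgn}{t+1}-e*\zvec{\sgn}{t+1}$ is then handled by parallel arguments: in case~(i) it simplifies to a scalar multiple of an indicator of an integral feasible tour supported on $E^{new}-E(C_{\cindex{e}})$, mirroring Lemma~\ref{lemma:cond2}; in case~(ii) it reduces to $\frac{1}{t+3}\onevec{E}{t+1}$; in case~(iii) it is a combination of $\zvec{\sgn}{t+1}|_{\pop_{t+1}}$ minus the tour indicator, which can be rewritten via the displayed recursion and the definition of $\tour$; case~(iv) gives the vector $\zvec{\sgn}{t+1}$ itself.

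The hard part will be case~(iii) for both conditions, i.e., the interaction between dashed edges and the tour structure. The value assignment in Definition~\ref{def:dfj-feasol} gives $\frac{1}{t+2}$ on sets containing a dashed edge only when the whole set fits inside some $\tour(i)$, and keeping this bookkeeping consistent after a shift by a dashed edge requires that $e$ belong to a unique tour (it does, since dashed edges are partitioned among the cycles $C_j^{new}$) and that this tour genuinely be a Hamiltonian dicycle of $G^{new}$ (which is where the good tours hypothesis is essential). A second subtlety is the ``room'' inequality $|\sgn|\le|\fracset|-(t+2)$: this is exactly what permits case~(i) of the inductive step to advance to $\sgn+\cindex{e}$ while remaining within the inductive hypothesis's quota, and it also guarantees that the base case has a fractional cycle available to supply the cut-edge in the cut-constraint argument.
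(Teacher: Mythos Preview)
Your inductive skeleton is right, and your case analysis for $e*\zvec{\sgn}{t+1}$ is essentially what the paper does. But the key recursion you propose,
\[
\zvec{\sgn}{t+1}_S \;=\; \tfrac{t+2}{t+3}\,\zvec{\sgn}{t}_S \;+\; \tfrac{1}{t+3}\,\onevec{E}{t+1}_S,
\]
while arithmetically correct, is useless for the argument, because $\onevec{E}{t+1}$ is \emph{not} a feasible solution of $\atspdfjpolytope(G^{new})$ in general. Concretely, on the paper's own instance (Figure~\ref{fig-graph}), the solid-only subgraph consists of two vertex-disjoint dicycles (the lower half: row~0 rightward, up at $x=8$, row~1 leftward, down at $x=0$; and the symmetric upper half). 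So the cut constraints fail for $\onevec{E}{1}$, and hence $\onevec{E}{t+1}\notin\saop^t(\atspdfj(G^{new}))$. Your claim that ``the solid-only subgraph is a strongly connected Eulerian spanning subgraph (inherited from strong connectivity of $G$)'' is simply false: the splitting operation routes $C_i\to C_j$ and $C_j\to C_i$ at every split vertex, which can (and here does) disconnect the solid subgraph. This kills your handling of case~(ii), and your treatment of the difference vector in case~(iii) (``rewritten via the displayed recursion'') also collapses, since $\onevec{E}{t+1}-\onevec{\tour(\cindex{e})}{t+1}$ is not a $0/1$ indicator of anything feasible.

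The paper's fix is to use a \emph{different} recursion (Lemma~\ref{lemma:dfj-recursive}): for any $h\in\fracset-\sgn$,
\[
\zvec{\sgn}{t+1}_S \;=\; \tfrac{t+2}{t+3}\,\zvec{\sgn+h}{t}_S \;+\; \tfrac{1}{t+3}\,\onevec{\tour(h)}{t+1}_S,
\]
where now the integral part $\onevec{\tour(h)}{t+1}$ \emph{is} feasible, precisely because the good tours hypothesis guarantees $\tour(h)$ is a Hamiltonian dicycle of $G^{new}$. Note the first term is $\zvec{\sgn+h}{t}$, not $\zvec{\sgn}{t}$: enlarging $\sgn$ is what makes the identity come out, and this is exactly why the size condition $|\sgn|\le|\fracset|-(t+3)$ must leave room for one more index. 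With this recursion in hand, case~(ii) gives $e*\zvec{\sgn}{t+1}=\zvec{\sgn}{t+1}|_{\pop_{t+1}}\in\saop^t$ directly, and cases~(i) and~(iii) for the difference vector become $\frac{1}{t+3}\onevec{\tour(\cindex{e})}{t+1}$ and $\frac{t+2}{t+3}\zvec{\sgn+\cindex{e}}{t}$ respectively (take $h=\cindex{e}$). Your base-case cut argument is also too quick: invoking Lemma~\ref{lemma:base} (which is about $G$, not $G^{new}$) does not suffice; the paper's Lemma~\ref{lemma:dfj-basecase} instead picks two indices $i,j\in\fracset-\sgn$ and uses that both $\tour(i)$ and $\tour(j)$ are Hamiltonian, together with a short case analysis on whether the crossing edge is solid or dashed.
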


\begin{proof}
Note that $\zvec{\sgn}{t}_\emptyset=1$ by Definition~\ref{def:dfj-feasol}.
Thus, we only need to prove $\zvec{\sgn}{t} \in \saop^t(\homog{\atspdfj(G^{new})})$.
The proof is by induction on $t$.
The base case is important, and it follows easily from the
good decomposition property and the ``good tours'' property of $G$.
This is done in Lemma~\ref{lemma:dfj-basecase} below, where we show that
%
%
$\zvec{\sgn}{0} \in \saop^0(\atspdfj(G^{new})),
	\;\forall \sgn\subseteq\fracset, |\sgn|\le |\fracset|-2$.

In the induction step, we assume that
$\zvec{\sgn}{t} \in \saop^t(\atspdfj(G^{new}))$ for some integer $t\ge0$
(the induction hypothesis),
and we apply the recursive definition based
on the shift operator, namely,
$\zvec{\sgn}{t+1} \in \saop^{t+1}(\atspdfj(G^{new}))$ iff for each $e\in E^{new}$
\begin{align}
 e * \zvec{\sgn}{t+1} \in \saop^{t}(\atspdfj(G^{new})), \label{eq:dfj-cond1} \\
 \zvec{\sgn}{t+1} - e*\zvec{\sgn}{t+1} \in \saop^{t}(\atspdfj(G^{new})).  \label{eq:dfj-cond2}
\end{align}
Lemma~\ref{lemma:dfj-cond1} (below) proves (\ref{eq:dfj-cond1}) and
Lemma~\ref{lemma:dfj-cond2} (below) proves (\ref{eq:dfj-cond2}).
\end{proof}

The next lemma proves the base case for the induction;
it follows from the ``good tours'' property of the digraph.

\begin{lemma} \label{lemma:dfj-basecase}
\[
	\zvec{\sgn}{0} \in \saop^{0}(\atspdfj(G^{new})), \quad
	\forall \sgn\subseteq\fracset, |\sgn|\le |\fracset|-2
\]
\end{lemma}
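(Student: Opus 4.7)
The plan is to exploit that level-$0$ of the \iSA\ system is trivial: $\saop^0(\atspdfj(G^{new}))$ is just the cone itself, so (using $\zvec{\sgn}{0}_\emptyset = 1$) I only need to show that the restriction $z \in \reals^{E^{new}}$ of $\zvec{\sgn}{0}$ to singleton coordinates is a feasible point of $\atspdfjpolytope(G^{new})$. Unpacking Definition~\ref{def:dfj-feasol} at $t=0$ and writing $F := \fracset - \sgn$, the vector $z$ equals $\frac{1}{2}$ on every solid edge of an $F$-cycle and on every dashed edge of an $F$-cycle, equals $1$ on every solid edge of a cycle indexed by $\sgn \cup \notfracset$, and equals $0$ elsewhere. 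All of these values lie in $[0,1]$.

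Next I would verify the degree constraints $z(\delta^{in}(v)) = z(\delta^{out}(v)) = 1$ at every $v \in V^{new}$. For a non-split vertex $v$ (which has indegree and outdegree both $1$ in $G$), only one dicycle $C_j$ runs through $v$; removing $C_j$ isolates $v$, forcing $j \in \notfracset$, so the two incident solid edges each contribute value $1$. For a split image $v^u$, the splitting construction places both incoming edges into a common dicycle $C_i$ (one solid, one dashed) and both outgoing edges into a common dicycle $C_j$ (one solid, one dashed); a case check then gives $\frac{1}{2}+\frac{1}{2}=1$ when the relevant index lies in $F$ and $1+0=1$ when it lies in $\sgn \cup \notfracset$. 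The analysis for $v^b$ is symmetric.

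For the cut constraints, summing the degree equations over $v \in U$ and cancelling the edges internal to $U$ yields $z(\delta^{in}(U))=z(\delta^{out}(U))$, so it suffices to show $z(\delta^{out}(U)) \geq 1$ for every nontrivial $U \subsetneq V^{new}$. Distinguish two cases. If $\delta^{out}(U)$ contains a solid edge of a dicycle $C_j$ with $j \in \sgn \cup \notfracset$, that single edge alone contributes $1$. Otherwise, writing $a_U$ (resp.\ $b_U$) for the number of solid (resp.\ dashed) edges of $F$-cycles in $\delta^{out}(U)$, we have $z(\delta^{out}(U)) = (a_U+b_U)/2$, and I want $a_U + b_U \geq 2$. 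Here the good-tours property is essential: for each $i \in F$, the set $\tour(i)$ is a Hamiltonian dicycle of $G^{new}$, so $\tour(i) \cap \delta^{out}(U)$ is nonempty, giving $m_i(U) + \sum_{j \neq i} n_j(U) \geq 1$, where $n_j, m_j$ count solid, respectively dashed, edges of $C_j$ in $\delta^{out}(U)$. A short case analysis then finishes: if $a_U=0$ each tour inequality reduces to $m_i(U)\geq 1$, so $b_U \geq |F|\geq 2$; if $a_U=1$ the tour inequality for the unique cycle $C_{i^*}$ holding the solid edge forces $m_{i^*}(U)\geq 1$, so $b_U \geq 1$; and $a_U \geq 2$ is immediate.

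The main obstacle is this last step, which hinges on the hypothesis $|\sgn| \leq |\fracset| - 2$ (i.e., $|F|\geq 2$): without it a cut supported only on the dashed edges of a lone fractional tour would have $z$-value just $\frac{1}{2}$, and the inequality would fail. The proof combines the good-decomposition property (which forces every degree-$1$ vertex onto a $\notfracset$-cycle and so underpins the degree analysis) with the good-tours property (which supplies the cut bound whenever all solid out-edges lie in $F$-cycles).
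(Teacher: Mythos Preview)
Your proof is correct and follows essentially the same approach as the paper: both reduce to checking that the singleton vector $z$ lies in $\atspdfjpolytope(G^{new})$, both read off the same values from Definition~\ref{def:dfj-feasol}, and both verify the cut constraints by using that $|\fracset-\sgn|\ge2$ together with the good-tours property to exhibit two distinct edges of value $\ge\tfrac12$ crossing any nontrivial cut. You give a more careful treatment of the degree constraints (which the paper merely asserts) and organize the cut argument as a count $a_U+b_U\ge2$ rather than the paper's edge-by-edge selection, but these are cosmetic differences in presentation, not in substance.
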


\begin{proof}
Note that $\zvec{\sgn}{0}_{\emptyset}=1$.
Let $z$ be the subvector of $\zvec{\sgn}{0}$
on the singleton sets $\{e_i\}$.
We need to prove that $z$ is a feasible
solution of the DFJ~LP.
It can be seen that $z$ is as follows:
if $\cindex{e}\in\fracset-\sgn$, then $z_e=\frac12$,
otherwise, if $e\in E$ ($e$ is a solid edge), then $z_e=1$,
otherwise, if $e\in D$ ($e$ is a dashed edge), then $z_e=0$.
Clearly, $z$ is in $[0, 1]^{E^{new}}$ and satisfies the degree constraints.
Now, we need to verify that $z$ satisfies the cut constraints
in the digraph $G^{new}$.
Consider any nonempty set of vertices $U\not=V$, and the cut $\delta^{out}(U)$.

Observe that $|\fracset-\sgn| \ge2$, hence, there are at least two indices
$i,j$ such that $i,j\in \fracset-\sgn$.
Hence, both $\tour(i)$ and $\tour(j)$ exist;
moreover, every edge $e$ (either solid or dashed) in either
$\tour(i)$ or $\tour(j)$ has $z_e\ge\frac12$.
Clearly, each of $\tour(i)$ and $\tour(j)$ has
at least one edge in $\delta^{out}(U)$.
Let $e_j$ be an edge of $\tour(j)$ that is in $\delta^{out}(U)$.
If $z_{e_j}=1$, then we are done,
since we have $z(\delta^{out}(U))\ge z_{e_j}=1$.
Thus, we may assume $z_{e_j}=\frac12$.
Now, we have two cases.

First, suppose that $e_j$ is a dashed edge.
Then, note that the edge of $\tour(i)$ in $\delta^{out}(U)$,
call it $e_i$, is distinct from $e_j$
(since the tours are disjoint on the dashed edges),
and again we are done, since $z(\delta^{out}(U))\ge z_{e_i}+z_{e_j}\ge1$.

In the remaining case, $e_j\in\tour(j)$ is a solid edge
and $z_{e_j}=\frac12$.
Then, $\cindex{e_j}\in \fracset-\sgn$, and so 
$\tour(e_j)$
exists and it has at least one edge $e'$ in $\delta^{out}(U)$;
moreover, $e'\not=e_j$ because 
$\tour(e_j)$
contains none of the solid edges of the cycle $C_{\cindex{e_j}}$.
Thus, we are done, since $z(\delta^{out}(U))\ge z_{e_j}+z_{e'}\ge1$.
It follows that $z$ staisfies all of the cut constraints.

\end{proof}

The following fact summarizes some easy observations;
this fact is  used in the next lemma.

\begin{fact}
\label{fact:set-intour}
Let $\sgn$ be a subset of $\fracset$.
Suppose that $S$ is not contained in any
$\tour(j)$, $j\in\fracset - \sgn$.
\textbf{(1)}
Then, for any edge $e$,
$S+e$ is also not contained in any
$\tour(j)$, $j\in\fracset - \sgn$.
\textbf{(2)}
Similary, for any index $\indfrac \in\fracset$,
$S$ is not contained in any
$\tour(j)$, $j\in\fracset - (\sgn + \indfrac)$.
\end{fact}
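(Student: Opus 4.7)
The plan is to observe that both parts of the fact follow immediately from monotonicity-type arguments about set containment, so essentially no real work is needed beyond unpacking definitions.

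For part~(1), the key observation is that if $S \subseteq \tour(j)$ held for some $j \in \fracset - \sgn$, then a fortiori every subset of $\tour(j)$ — in particular $S$ itself — would be contained in $\tour(j)$; contrapositively, since $S+e \supseteq S$, if $S+e \subseteq \tour(j)$ for some $j\in\fracset-\sgn$ then $S\subseteq\tour(j)$ as well, contradicting the hypothesis. So the statement for $S+e$ follows by this one-line contrapositive.

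For part~(2), I would simply note the inclusion of index sets: $\fracset - (\sgn + \indfrac) \subseteq \fracset - \sgn$. Hence the set of tours $\{\tour(j) : j \in \fracset - (\sgn + \indfrac)\}$ is a subfamily of $\{\tour(j) : j \in \fracset - \sgn\}$. If $S$ is contained in none of the tours in the larger family, it is certainly contained in none of the tours in the smaller subfamily.

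There is no significant obstacle here; both claims are purely set-theoretic and do not use any property of the good decomposition or of the splitting operation beyond the definition of $\tour(j)$. The proof fits comfortably in a couple of sentences, and its role in the paper is just to streamline the case analysis in the induction step that follows (Lemma using Definition~\ref{def:dfj-feasol}), where we repeatedly need to argue that the ``otherwise'' branch of the piecewise definition of $\zvec{\sgn}{t}$ is preserved under the shift operator and under enlarging $\sgn$ by one index.
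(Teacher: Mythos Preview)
Your proposal is correct and matches the paper's treatment: the paper states this as a Fact with no proof, calling it a summary of ``easy observations,'' and your two one-line arguments (monotonicity of containment under adding an element, and shrinking the index set) are exactly the trivial set-theoretic reasons it holds.
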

%

\begin{lemma}
\label{lemma:dfj-cond1}
Suppose that for any nonnegative integer $t$ and
any $\sgn'\subseteq \fracset$ with $|\sgn'|\leq |\fracset| - (t+2)$,
we have $\zvec{\sgn'}{t} \in \saop^t(\atspdfj(G^{new}))$. Then for any $\sgn\subseteq \fracset$ with $|\sgn|\leq |\fracset| - (t+3)$,
\begin{align}
e*\zvec{\sgn}{t+1} \in \saop^t(\atspdfj(G^{new})), \quad \forall e\in E^{new}. \nonumber
\end{align}
\end{lemma}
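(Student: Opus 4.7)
The plan mirrors the structure of Lemma~\ref{lemma:cond1}: for each edge $e \in E^{new}$ I would directly evaluate $e*\zvec{\sgn}{t+1}$ on $\pop_{t+1}$ via Definition~\ref{def:dfj-feasol}, split into four cases according to whether $e$ is solid or dashed and whether $\cindex{e}\in \fracset-\sgn$, and in each case exhibit the resulting vector as a nonnegative combination of vectors that are known (or easily verified) to lie in $\saop^t(\atspdfj(G^{new}))$.

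Three of the four cases can be dispatched quickly. If $e$ is a solid edge with $h := \cindex{e} \in \fracset - \sgn$, then mimicking the balanced computation (using that $e$ already belongs to every $\tour(i)$ with $i \in \fracset - (\sgn + h)$, and that adding $e$ increases $\szgoodfrac{\sgn}{\cdot}$ by $1$ exactly when $C_h \cap S = \emptyset$) yields $e*\zvec{\sgn}{t+1} = \frac{t+2}{t+3}\,\zvec{\sgn + h}{t}$, which lies in $\saop^t$ by the induction hypothesis since $|\sgn + h| \le |\fracset| - (t+2)$. If $e$ is a dashed edge with $\cindex{e} \in \sgn \cup \notfracset$, then $e$ belongs to no tour $\tour(i)$ with $i \in \fracset - \sgn$ (dashed edges are partitioned by cycles), so $\zvec{\sgn}{t+1}_{S+e} = 0$ for every $S$ and $e*\zvec{\sgn}{t+1} = 0$. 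If $e$ is a dashed edge with $h := \cindex{e} \in \fracset - \sgn$, then $\tour(h)$ is the unique tour containing $e$, so $\zvec{\sgn}{t+1}_{S+e}$ equals $\frac{1}{t+3}$ when $S \subseteq \tour(h)$ and $0$ otherwise, giving $e*\zvec{\sgn}{t+1} = \frac{1}{t+3}\,\onevec{\tour(h)}{t+1}$; by the good tours property $\tour(h)$ is a Hamiltonian dicycle of $G^{new}$, so its indicator is an integer feasible solution of $\atspdfj(G^{new})$.

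The main obstacle is the remaining case: $e$ solid with $\cindex{e} \in \sgn \cup \notfracset$. Here $e$ lies in every $\tour(i)$ for $i \in \fracset - \sgn$ and $\cindex{e} \notin \fracset - \sgn$, so adding $e$ neither changes $\szgoodfrac{\sgn}{\cdot}$ nor removes a set from any tour, and one obtains $e*\zvec{\sgn}{t+1} = \zvec{\sgn}{t+1}\big|_{\pop_{t+1}}$. Unlike the balanced setting, one cannot decompose this vector as $\frac{t+2}{t+3}\,\zvec{\sgn}{t} + \frac{1}{t+3}\,\onevec{E}{t+1}$, because in the DFJ LP on $G^{new}$ the indicator of the solid edges is generally infeasible (the solid subgraph splits into several vertex-disjoint dicycles, violating cut constraints). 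The key step of the proof is the following DFJ analogue of Fact~\ref{fact:recursive}: for any $h' \in \fracset - \sgn$ (which exists because $|\sgn| \le |\fracset| - (t+3) < |\fracset|$),
\[
  \zvec{\sgn}{t+1}\big|_{\pop_{t+1}} \;=\; \frac{t+2}{t+3}\,\zvec{\sgn + h'}{t} \;+\; \frac{1}{t+3}\,\onevec{\tour(h')}{t+1}.
\]
I would prove this identity by checking each $S \in \pop_{t+1}$ in the three regimes of Definition~\ref{def:dfj-feasol}; the verification hinges on the observation that any dashed edge in $S$ belongs to a unique cycle, so any tour containing $S$ is uniquely determined by $S \cap D$, yielding a clean dichotomy depending on whether this unique tour is $\tour(h')$ or not. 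Once the identity is established, $\zvec{\sgn + h'}{t}$ lies in $\saop^t$ by the hypothesis ($|\sgn + h'| \le |\fracset| - (t+2)$) and $\onevec{\tour(h')}{t+1}$ is an integer feasible solution, so their convex combination $\zvec{\sgn}{t+1}|_{\pop_{t+1}}$ lies in $\saop^t(\atspdfj(G^{new}))$, completing the hard case and the lemma.
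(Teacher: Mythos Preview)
Your proposal is correct and follows essentially the same route as the paper: the same four-way split on (solid/dashed) $\times$ (fractional/integral), the same identities $e*\zvec{\sgn}{t+1}=\frac{t+2}{t+3}\zvec{\sgn+\cindex{e}}{t}$, $0$, $\frac{1}{t+3}\onevec{\tour(e)}{t+1}$, and $\zvec{\sgn}{t+1}|_{\pop_{t+1}}$ in the respective cases, and the same key decomposition $\zvec{\sgn}{t+1}|_{\pop_{t+1}}=\frac{t+2}{t+3}\zvec{\sgn+h'}{t}+\frac{1}{t+3}\onevec{\tour(h')}{t+1}$ for any $h'\in\fracset-\sgn$ (which the paper isolates as Lemma~\ref{lemma:dfj-recursive}). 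Your diagnosis of why the balanced-LP trick with $\onevec{E}{t+1}$ fails here and must be replaced by a Hamiltonian tour is exactly the point of that lemma.
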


\begin{proof}
For any edge $e$ and any $S \in \pop_{t+1}$,
the definition of the shift operator gives
\[ (e * \zvec{\sgn}{t+1})_S ~=~ \zvec{\sgn}{t+1}_{S+e}
\]

Let $C(e)$ denote the cycle containing edge $e$,
and let $\cindex{e}$ denote the index of $C(e)$ in $\cindset$.

We will show that
\begin{align}
\label{eqn:dfj-case1}
 (e * \zvec{\sgn}{t+1})_S ~=~
 \begin{cases}
  \zvec{\sgn}{t+1}_S	& \textup{if~} e\in E(C_j)
		\textup{~where~} j\in \sgn\cup\notfracset \\
		&
		\qquad \textup{\textit{($e$ is a solid, integral edge)}} \\
  0		& \textup{if~} e\in D(C_j)
		\textup{~where~} j\in \sgn\cup\notfracset \\
		&
		\qquad \textup{\textit{($e$ is a dashed, integral edge)}} \\
  \frac{t+2}{t+3}\zvec{\sgn+\cindex{e}}{t}_S	& \textup{if~} e\in E(C_j)
		\textup{~where~} j\in \fracset - \sgn \\
		&
		\qquad \textup{\textit{($e$ is a solid, fractional edge)}} \\
  \frac{1}{t+3}\onevec{\tour(e)}{t+1}_S	& \textup{if~} e\in D(C_j)
		\textup{~where~} j\in \fracset - \sgn \\
		&
		\qquad \textup{\textit{($e$ is a dashed, fractional edge)}} \\
 \end{cases}
\end{align}
Lemma \ref{lemma:dfj-recursive} (below) shows that
\[
\zvec{\sgn}{t+1}_S=\frac{t+2}{t+3} \zvec{\sgn+\indfrac}{t}_S +
	\frac{1}{t+3}\onevec{\tour(\indfrac)}{t+1}_S, \qquad
		\forall h\in \fracset - \sgn.
\]
Hence, for every edge $e$ (i.e., in every case), 
	$e*\zvec{\sgn}{t+1}$ is in $\saop^t(\atspdfj(G^{new}))$.
\begin{description}{
\item[Case~1.]
$e\in E(C_j) \textup{~where~} j\in \sgn\cup\notfracset$
{($e$ is a solid, integral edge)}.
We apply Definition~\ref{def:dfj-zvec} (the definition of $\zveconly$), and
consider the three cases in it:
\\
\begin{description}{
\item[Subcase~1.1.]
$S\cap D=\emptyset$.
Then we have $(S+e)\cap D=\emptyset$, and moreover,
we have $\szgoodfrac{\sgn}{S} = \szgoodfrac{\sgn}{S+e}$
(the number of ``fractional cycles'' intersecting $S\cap{E}$ and $(S+e)\cap{E}$
is the same, since $e$ is a non-fractional edge).
Hence, $\zvec{\sgn}{t+1}_{(S+e)} = \zvec{\sgn}{t+1}_{S}$.
\\
\item[Subcase~1.2.]
$S\cap D\not=\emptyset$ and
$\exists i\in\fracset-\sgn:\tour(i)\supseteq{S}$.
Then it is clear that $(S+e)\cap D\not=\emptyset$ and $\tour(i)\supseteq{S+e}$,
because $\tour(i)$ contains every solid edge except those in
the fractional cycle $C_i$.
Hence, $\zvec{\sgn}{t+1}_{S+e} = \frac{1}{t+3} = \zvec{\sgn}{t+1}_{S}$.
\\
\item[Subase~1.3.]
$S\cap D\not=\emptyset$ and
$\forall j\in\fracset-\sgn:\tour(j)\not\supseteq{S}$.
Then it is easily seen that both conditions apply to $S+e$ (rather than $S$).
Hence, $\zvec{\sgn}{t+1}_{S+e} = 0 = \zvec{\sgn}{t+1}_{S}$.
}\end{description}

\item[Case~2.]
We have
$e\in D(C_j) \textup{~where~} j\in \sgn\cup\notfracset$
{($e$ is a dashed, integral edge)}.
We apply Definition~\ref{def:dfj-zvec}, noting that
$(S+e)\cap D\not=\emptyset$ and
there exists no index $i\in \fracset-\sgn$ such that $\tour(i)\supseteq{S+e}$
(no ``valid tour'' contains a dashed, integral edge),
hence, $\zvec{\sgn}{t+1}_{S+e} = 0$.

\item[Case~3.]
We have
$e\in E(C_j) \textup{~where~} j\in \fracset-\sgn$
{($e$ is a solid, fractional edge)}.
We apply Definition~\ref{def:dfj-zvec}.
We have two subcases, either
$S\cap D=\emptyset$, or not.
\\
\begin{description}{
\item[Subcase~3.1.]
If $S\cap D=\emptyset$, then $(S+e)\cap D=\emptyset$.
Thus, the analysis is the same as in the previous section; in particular,
see Equation~\eqref{eqn:bal-case1} in the proof of Lemma~\ref{lemma:cond1}.
Hence, we have
$\zvec{\sgn}{t+1}_{S+e} = \frac{t+2}{t+3}\zvec{\sgn+\cindex{e}}{t}_S$.

\item[Subcase~3.2.]
Otherwise, $S\cap D\not=\emptyset$.
Then we have two further subcases:
either there is an $i\in\fracset-\sgn$ with $\tour(i)\supseteq{S}$
or not.
\\
\begin{description}{
\item[Subcase~3.2.1]
Consider the first subcase;
thus, $S\subseteq\tour(i)$ where $i\in\fracset-\sgn$.
Note that $S$ is not contained
in other tours since $S\cap D\not=\emptyset$.
We have two further subcases, either $e\in E(C_i)$ or not.

\begin{description}{
\item[Subcase~3.2.1.1.]
If $e\in E(C_i)$, then $\tour(i)\not\supseteq{(S+e)}$,
hence, $\zvec{\sgn}{t+1}_{S+e} = 0$
(by the last case in the definition of $\zveconly$);
moreover, note that $\tour(i)$ is the unique tour containing $S$
but it is \emph{not} a ``valid tour'' w.r.t.\ $\sgn+\cindex{e}$,
hence, $\zvec{\sgn+\cindex{e}}{t}_S=0$
(by the last case in Definition~\ref{def:dfj-zvec}).

\item[Subcase~3.2.1.2.]
Otherwise, if $e\not\in E(C_i)$, then
$\tour(i)\supseteq{(S+e)}$, and moreover,
$\tour(i)$ \emph{is} a ``valid tour'' w.r.t.\ $\sgn+\cindex{e}$
(since $i\not\in\sgn$ and $i\not=\cindex{e}$),
%
%
hence, we have
$\zvec{\sgn}{t+1}_{S+e} = \frac{1}{t+3} =
  \frac{t+2}{t+3} \frac{1}{t+2} = \frac{t+2}{t+3} \zvec{\sgn+\cindex{e}}{t}_S$
(by the second case in Definition~\ref{def:dfj-zvec},
for both LHS and RHS).
}\end{description}

\item[Subcase~3.2.2.]
Consider the last subcase;
thus, $S\not\subseteq\tour(i)$ for all $i\in\fracset-\sgn$.
Then by Fact~\ref{fact:set-intour}, the same assertion
holds w.r.t.\ $(S+e)$ (rather than $S$),
as well as w.r.t.\ $(\sgn+\cindex{e})$ (rather than $\sgn$).
Hence, we have
$\zvec{\sgn}{t+1}_{S+e} = 0 = \frac{t+2}{t+3} \zvec{\sgn+\cindex{e}}{t}_S$
(by the last case in Definition~\ref{def:dfj-zvec},
for both LHS and RHS).
}\end{description}
} \end{description}

\item[Case~4.]
We have
$e\in D(C_j) \textup{~where~} j\in \fracset-\sgn$
{($e$ is a dashed, fractional edge)}.
We apply Definition~\ref{def:dfj-zvec}, noting that
$(S+e)\cap D\not=\emptyset$.
We have two subcases, either
$\tour(e)\supseteq{S}$, or not.
If $\tour(e)\supseteq{S}$, then the second case of
Definition~\ref{def:dfj-zvec}
together with the fourth case of Equation~\eqref{eqn:dfj-case1}
(the definition of $e*\zveconly$) gives
$\zvec{\sgn}{t+1}_{S+e} = \frac{1}{t+3} =
	\frac{1}{t+3} \onevec{\tour(e)}{t+1}_S$.
Otherwise, $\tour(e)\not\supseteq{S}$, and then we have
$\zvec{\sgn}{t+1}_{S+e} = 0 =
	\frac{1}{t+3} \onevec{\tour(e)}{t+1}_S$;
note that the last case of Definition~\ref{def:dfj-zvec}
applies because $\tour(e)$ is
the unique ``valid tour'' that could contain $e$.
}\end{description}
\end{proof}

Lemma~\ref{lemma:dfj-recursive} shows that $\zvec{\sgn}{t+1}$,
restricted to $\pop_{t+1}$,
is in $\saop^t(\atspdfj(G^{new}))$;
this is used in Lemma~\ref{lemma:dfj-cond1} to show that $e * \zvec{\sgn}{t+1}$
is in $\saop^t(\atspdfj(G^{new}))$.
%

\begin{lemma}
\label{lemma:dfj-recursive}
For any nonnegative integer $t$,
any $S\in\pop_{t+1}$,
any $\sgn\subseteq \fracset$ with $|\sgn|\leq |\fracset| - (t+3)$,
and any $\indfrac\in\fracset-\sgn$,
we have
\begin{align}
\zvec{\sgn}{t+1}_S ~=~ \frac{t+2}{t+3} \zvec{\sgn+\indfrac}{t}_S +
	\frac{1}{t+3}\onevec{\tour(\indfrac)}{t+1}_S
\end{align}
\end{lemma}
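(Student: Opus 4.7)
The plan is to verify the identity directly by case analysis on which of the three branches of Definition~\ref{def:dfj-feasol} applies to $S$, and to exploit the key structural fact that when $S\cap D\neq\emptyset$, the dashed edge in $S$ belongs to a \emph{unique} cycle $C_i$, and hence the only tour that can possibly contain $S$ is $\tour(i)$ (because the dashed edges are partitioned among the $C_j$'s, as noted right after the splitting construction).

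\textbf{Case 1: $S\cap D=\emptyset$.} Then $\zvec{\sgn}{t+1}_S=\frac{t+3-\szgoodfrac{\sgn}{S}}{t+3}$, while $\zvec{\sgn+\indfrac}{t}_S=\frac{t+2-\szgoodfrac{\sgn+\indfrac}{S}}{t+2}$. I split on whether $E(C_\indfrac)\cap S=\emptyset$. If yes, then $S\subseteq\tour(\indfrac)$ (since $\tour(\indfrac)$ contains every solid edge outside $C_\indfrac$), so $\onevec{\tour(\indfrac)}{t+1}_S=1$, and also $\szgoodfrac{\sgn+\indfrac}{S}=\szgoodfrac{\sgn}{S}$; the RHS becomes $\frac{t+2-\szgoodfrac{\sgn}{S}}{t+3}+\frac{1}{t+3}$, matching the LHS. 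If no, then $\onevec{\tour(\indfrac)}{t+1}_S=0$ and $\szgoodfrac{\sgn+\indfrac}{S}=\szgoodfrac{\sgn}{S}-1$, and again the RHS telescopes to the LHS.

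\textbf{Case 2: $S\cap D\neq\emptyset$ and there exists $i\in\fracset-\sgn$ with $\tour(i)\supseteq S$.} Then $\zvec{\sgn}{t+1}_S=\frac{1}{t+3}$, and $i$ is uniquely determined by any dashed edge of $S$. I split on whether $i=\indfrac$. If $i=\indfrac$, then $\onevec{\tour(\indfrac)}{t+1}_S=1$; and for $\zvec{\sgn+\indfrac}{t}_S$, since the unique candidate tour $\tour(\indfrac)$ is no longer ``valid'' (its index lies in $\sgn+\indfrac$), the third branch of Definition~\ref{def:dfj-feasol} gives $\zvec{\sgn+\indfrac}{t}_S=0$, so RHS $=\frac{1}{t+3}$. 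If $i\neq\indfrac$, then $i\in\fracset-(\sgn+\indfrac)$ so $\tour(i)$ is still a valid tour containing $S$, hence $\zvec{\sgn+\indfrac}{t}_S=\frac{1}{t+2}$; meanwhile the dashed edge in $S$ lies in $D(C_i)$, not in $D(C_\indfrac)$, so $S\not\subseteq\tour(\indfrac)$ and $\onevec{\tour(\indfrac)}{t+1}_S=0$. The RHS is $\frac{t+2}{t+3}\cdot\frac{1}{t+2}=\frac{1}{t+3}$, again matching.

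\textbf{Case 3: $S\cap D\neq\emptyset$ and no valid tour contains $S$.} Then $\zvec{\sgn}{t+1}_S=0$. Since $\sgn+\indfrac\supseteq\sgn$, no valid tour for $\sgn+\indfrac$ can contain $S$ either, so $\zvec{\sgn+\indfrac}{t}_S=0$; and $\tour(\indfrac)\not\supseteq S$ by hypothesis (applied to $j=\indfrac\in\fracset-\sgn$), so $\onevec{\tour(\indfrac)}{t+1}_S=0$. The RHS vanishes.

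There is no real obstacle; the only thing to be careful about is the uniqueness of the tour in Case~2, which justifies treating $i=\indfrac$ and $i\neq\indfrac$ as the only possibilities, and the observation in Case~2 (sub-case $i\neq\indfrac$) that a dashed edge of $C_i$ is not contained in $\tour(\indfrac)$. The hypothesis $|\sgn|\le|\fracset|-(t+3)$ is used only to guarantee that $\zvec{\sgn+\indfrac}{t}$ is well-defined via Theorem~\ref{thm:dfj-feasible} (it gives $|\sgn+\indfrac|\le|\fracset|-(t+2)$); the algebraic identity itself holds pointwise in $S$ regardless.
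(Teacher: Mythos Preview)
Your proof is correct and follows essentially the same three-case analysis as the paper, with the same sub-splits in each case (in Case~1 you split on $E(C_\indfrac)\cap S=\emptyset$, which is equivalent to the paper's split on $\tour(\indfrac)\supseteq S$ since $S$ consists only of solid edges). One minor remark: your closing comment that the size hypothesis on $\sgn$ is needed for $\zvec{\sgn+\indfrac}{t}$ to be ``well-defined via Theorem~\ref{thm:dfj-feasible}'' is slightly off---Definition~\ref{def:dfj-feasol} defines $\zvec{\sgn'}{t}$ for arbitrary $\sgn'\subseteq\fracset$, so the identity holds regardless; the hypothesis is present only so that the lemma slots into the induction in Theorem~\ref{thm:dfj-feasible}.
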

\begin{proof}
%
We have $S\subseteq D\cup{E}, |S|\leq{t+1}$.

We apply Definition~\ref{def:dfj-zvec} (the definition of $\zveconly$)
to $\zvec{\sgn}{t+1}$, and we have three cases.

\begin{description}{
\item[Case~1.]
$S\cap{D}=\emptyset$.
Then $\zvec{\sgn}{t+1}_S = \frac{(t+3)-\szgoodfrac{\sgn}{S}} {t+3}$.
For the RHS, we have two subcases, either $\tour(\indfrac)\supseteq{S}$ or not.
In the first subcase, we have $S\cap E(C_{\indfrac})=\emptyset$
(since $\tour(\indfrac)$ contains none of the solid edges of $C_{\indfrac}$),
hence, $\szgoodfrac{\sgn+\indfrac}{S} = \szgoodfrac{\sgn}{S}$,
consequently, the RHS is
$ \frac{t+2}{t+3} \frac{(t+2)-\szgoodfrac{\sgn}{S}} {t+2} +
	\frac{1}{t+3}$,
which is the same as the LHS.
In the other subcase, $\tour(\indfrac)\not\supseteq{S}$.
Then, we have $S\cap{E(C_{\indfrac})}\not=\emptyset$
(because $S\subseteq{E}$ and $\tour(\indfrac)$ contains all solid edges
except those in $C_{\indfrac}$), hence,
$\szgoodfrac{\sgn+\indfrac}{S} = \szgoodfrac{\sgn}{S} - 1$,
and consequently, the RHS is
$ \frac{t+2}{t+3} \frac{(t+3)-\szgoodfrac{\sgn}{S}} {t+2} + 0 =
	\frac{(t+3)-\szgoodfrac{\sgn}{S}} {t+3}$,
which is the same as the LHS.

\item[Case~2.]
$S\cap{D}\not=\emptyset$ and
there exists $j\in \fracset-\sgn$ such that $\tour(j)\supseteq{S}$.
Then $\zvec{\sgn}{t+1}_S = \frac{1} {t+3}$,
by Definition~\ref{def:dfj-zvec}.
For the RHS, we have two subcases, either $j=\indfrac$ or not.
In the first subcase, we have $\zvec{\sgn+\indfrac}{t}_S = 0$,
because $\tour(\indfrac)$ is the unique tour containing $S$
but it is \emph{not} a ``valid tour'' w.r.t.\ $\sgn+\indfrac$,
hence, the last case in Definition~\ref{def:dfj-zvec} applies.
Thus, the RHS is
$ 0 + \frac{1}{t+3} \onevec{\tour(\indfrac)}{t+1}_S = \frac{1}{t+3}$,
which is the same as the LHS.
In the second subcase, $j\not=\indfrac$.
Then, in the RHS, $\zvec{\sgn+\indfrac}{t}_S = \frac{1} {t+2}$,
because $j\in\fracset-(\sgn+\indfrac)$ and $\tour(j)\supseteq{S}$
so the second case in Definition~\ref{def:dfj-zvec} applies.
Moreover, $\onevec{\tour(\indfrac)}{t+1}_S = 0$,
because $j\not=\indfrac$, and $\tour(j)$ is the unique tour containing $S$,
so $\tour(\indfrac)\not\supseteq{S}$.
Thus, the RHS is
$ \frac{t+2}{t+3} \frac{1} {t+2} + 0 =
	\frac{1}{t+3}$,
which is the same as the LHS.

\item[Case~3.]
$S\cap{D}\not=\emptyset$ and
$\tour(j)\not\supseteq{S}$, $\forall j\in \fracset-\sgn$.
Then $\zvec{\sgn}{t+1}_S = 0$.
In the RHS, $\zvec{\sgn+\indfrac}{t}_S = 0$,
by the third case in Definition~\ref{def:dfj-zvec},
since the relevant conditions hold (by Fact~\ref{fact:set-intour}).
Moreover, $\onevec{\tour(\indfrac)}{t+1}_S = 0$,
because $\indfrac\in\fracset-\sgn$ and $\tour(\indfrac)\not\supseteq{S}$.
Thus, the RHS is
$0$,
which is the same as the LHS.
}\end{description}
This completes the proof of the lemma.
\end{proof}

\begin{lemma}
\label{lemma:dfj-cond2}
Suppose that for any nonnegative integer $t$ and
any $\sgn'\subseteq \fracset$ with $|\sgn'|\leq |\fracset| - (t+2)$,
we have $\zvec{\sgn'}{t} \in \saop^t(\atspdfj(G^{new}))$. Then for any $\sgn\subseteq \fracset$ with $|\sgn|\leq |\fracset| - (t+3)$,
\begin{align}
\zvec{\sgn}{t+1}-e*\zvec{\sgn}{t+1} \in \saop^t(\atspdfj(G^{new})), \quad \forall e\in E^{new} \nonumber
\end{align}
\end{lemma}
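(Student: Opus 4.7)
My plan is to mirror the structure of Lemma~\ref{lemma:cond2} from the balanced~LP case, but adapted to the four types of edges that arise in $G^{new}$: (i) solid integral, (ii) dashed integral, (iii) solid fractional, (iv) dashed fractional. In every case, I will show that $\zvec{\sgn}{t+1}-e*\zvec{\sgn}{t+1}$ equals one of: the zero vector, a scalar multiple of some $\zvec{\sgn'}{t}$ with $\sgn\subsetneq\sgn'\subseteq\fracset$ (which lies in $\saop^t(\atspdfj(G^{new}))$ by the hypothesis, since $|\sgn'|\le|\sgn|+1\le|\fracset|-(t+2)$), or a scalar multiple of $\onevec{\tour(\indfrac)}{t+1}$ for some $\indfrac\in\fracset-\sgn$ (which lies in $\saop^t(\atspdfj(G^{new}))$ since the ``good tours'' property guarantees $\tour(\indfrac)$ is a Hamiltonian dicycle of $G^{new}$, i.e., an integral feasible solution to the DFJ~LP).

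The main tool is Lemma~\ref{lemma:dfj-recursive}, which for any $\indfrac\in\fracset-\sgn$ provides the convex-combination identity
\[
\zvec{\sgn}{t+1}_S \;=\; \frac{t+2}{t+3}\,\zvec{\sgn+\indfrac}{t}_S+\frac{1}{t+3}\,\onevec{\tour(\indfrac)}{t+1}_S,
\]
and the case-by-case formula \eqref{eqn:dfj-case1} from Lemma~\ref{lemma:dfj-cond1} for $e*\zvec{\sgn}{t+1}_S$. Subtracting these gives the following case analysis. In Case~1 (solid integral $e$), formula \eqref{eqn:dfj-case1} yields $e*\zvec{\sgn}{t+1}=\zvec{\sgn}{t+1}$, so the difference is~$\mathbf{0}$. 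In Case~2 (dashed integral $e$), formula \eqref{eqn:dfj-case1} yields $e*\zvec{\sgn}{t+1}=\mathbf{0}$, so the difference is $\zvec{\sgn}{t+1}$ itself; since $|\fracset-\sgn|\ge t+3\ge 1$, pick any $\indfrac\in\fracset-\sgn$ and invoke Lemma~\ref{lemma:dfj-recursive} to write $\zvec{\sgn}{t+1}$ as a convex combination of two vectors, each in $\saop^t(\atspdfj(G^{new}))$.

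In Case~3 (solid fractional $e$ with $\cindex{e}\in\fracset-\sgn$), I use Lemma~\ref{lemma:dfj-recursive} with $\indfrac=\cindex{e}$ and subtract the corresponding formula of \eqref{eqn:dfj-case1}; the $\frac{t+2}{t+3}\zvec{\sgn+\cindex{e}}{t}$ terms cancel, leaving $\frac{1}{t+3}\onevec{\tour(\cindex{e})}{t+1}$. In Case~4 (dashed fractional $e$), again applying Lemma~\ref{lemma:dfj-recursive} with $\indfrac=\cindex{e}$ and using $\tour(e)=\tour(\cindex{e})$, the $\frac{1}{t+3}\onevec{\tour(\cindex{e})}{t+1}$ terms cancel, leaving $\frac{t+2}{t+3}\zvec{\sgn+\cindex{e}}{t}$. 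In each case the residue is a nonnegative scalar multiple of a vector in $\saop^t(\atspdfj(G^{new}))$, as required.

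The step that demands the most care is verifying that the four cases of \eqref{eqn:dfj-case1} precisely align with the two-term decomposition of Lemma~\ref{lemma:dfj-recursive} when the subtraction is performed, so that exactly one of the two terms survives in Cases~3 and~4 (and both survive together in Case~2). I expect this to be routine bookkeeping that simply falls out of the identity $\tour(e)=\tour(\cindex{e})$ and the fact that $\cindex{e}\in\fracset-\sgn$ in the fractional cases. The hypothesis $|\sgn|\le|\fracset|-(t+3)$ is used only to guarantee that (a)~some $\indfrac\in\fracset-\sgn$ exists in Case~2, and (b)~in Cases~3 and~4, $|\sgn+\cindex{e}|\le|\fracset|-(t+2)$, so the induction hypothesis applies to $\zvec{\sgn+\cindex{e}}{t}$.
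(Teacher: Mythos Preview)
Your proposal is correct and takes essentially the same approach as the paper: the paper's proof of Lemma~\ref{lemma:dfj-cond2} simply subtracts the four cases of Equation~\eqref{eqn:dfj-case1} from the decomposition in Lemma~\ref{lemma:dfj-recursive} (taking $\indfrac=\cindex{e}$ in the fractional cases) to obtain exactly the four residues $0$, $\zvec{\sgn}{t+1}_S$, $\frac{1}{t+3}\onevec{\tour(e)}{t+1}_S$, and $\frac{t+2}{t+3}\zvec{\sgn+\cindex{e}}{t}_S$ that you describe. Your write-up is in fact more explicit than the paper's about why each residue lies in $\saop^t(\atspdfj(G^{new}))$, particularly the use of Lemma~\ref{lemma:dfj-recursive} to handle Case~2.
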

\begin{proof}By Lemma~\ref{lemma:dfj-cond1} and
Lemma~\ref{lemma:dfj-recursive}, we have for each $e\in E^{new}=E\cup{D}$ and
any $S\in\pop_{t+1}$,
\begin{align}
  ( \zvec{\sgn}{t+1} - e * \zvec{\sgn}{t+1} )_S ~=~
 \begin{cases}
  0		& \textup{if~} e\in E(C_j)
		\textup{~where~} j\in \sgn\cup\notfracset \\
		&
		\qquad \textup{\textit{($e$ is a solid, integral edge)}} \\
  \zvec{\sgn}{t+1}_S
		& \textup{if~} e\in D(C_j)
		\textup{~where~} j\in \sgn\cup\notfracset \\
		&
		\qquad \textup{\textit{($e$ is a dashed, integral edge)}} \\
  \frac{1}{t+3}\onevec{\tour(e)}{t+1}_S
		& \textup{if~} e\in E(C_j)
		\textup{~where~} j\in \fracset - \sgn \\
		&
		\qquad \textup{\textit{($e$ is a solid, fractional edge)}} \\
  \frac{t+2}{t+3}\zvec{\sgn+\cindex{e}}{t}_S
		& \textup{if~} e\in D(C_j)
		\textup{~where~} j\in \fracset - \sgn \\
		&
		\qquad \textup{\textit{($e$ is a dashed, fractional edge)}} \\
 \end{cases}
\end{align}
Hence, in every case, $\zvec{\sgn}{t+1}-e*\zvec{\sgn}{t+1} \in \saop^t(\atspdfj(G^{new}))$
\end{proof}

\begin{theorem}\label{thm:dfj-sgIR}
Let $t$ be a nonnegative integer, and let $\epsilon\in\reals$
satisfy $0<\epsilon\ll{1}$.
There exists a digraph on $\nu=\nu(t,\epsilon)=\Theta(t/\epsilon)$ vertices
such that the integrality ratio for
the level~$t$ tightening of the standard~LP (DFJ~LP)
(for ATSP, by the \sa\ procedure)
is $\geq 1+\frac{1-\epsilon}{2t+3}$.
\end{theorem}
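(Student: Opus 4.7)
The plan is to adapt the construction of Theorem~\ref{thm:sgbalIR} by applying the splitting operation of this section to the digraph $G$ of Figure~\ref{fig-bal}, thereby obtaining the digraph $G^{new}$ depicted in Figure~\ref{fig-graph}. Each solid edge of $G^{new}$ will carry cost~$1$ and each dashed edge cost~$0$, and the instance of ATSP under consideration will be the metric completion of $G^{new}$.

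First I would verify that $G$, together with its good decomposition (the single thick dicycle as the sole member of $\notfracset$, and the $\ell$ length-two thin dicycles $S_0,\dots,S_{\ell-1}$ as the members of $\fracset$), satisfies the hypotheses of Theorem~\ref{thm:dfj-feasible}. The bound of $2$ on indegrees and outdegrees is immediate from the figure, and the good decomposition was already established in the proof of Theorem~\ref{thm:sgbalIR}. What remains is the \emph{good tours} property of Definition~\ref{def:nicedigraph}. I would fix the labeling $C_i, C_j$ at each split vertex as follows: at each interior middle vertex $m_k$ take $C_i=S_{k-1}$ and $C_j=S_k$; at each corner middle vertex take $C_i$ to be the thick cycle and $C_j$ the incident thin cycle. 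Then for a representative $S_k\in\fracset$ I would trace $\tour(S_k)= D(C_{S_k})\cup\bigcup_{i\neq S_k}E(C_i)$ explicitly in $G^{new}$---leaving $m_k^u$ via a dashed edge of $S_k$, walking leftward along the ``bottom copies'' of the middle row through the solid edges of $S_{k-1},\dots,S_0$, traversing the bottom row and the right column, returning leftward through the solid edges of $S_{\ell-1},\dots,S_{k+1}$ to use the second dashed edge of $S_k$, and then closing via the ``top copies'' of the middle row, the top row, and the left column---and verify by a routine case analysis (with small variations for the extreme cycles $S_0$ and $S_{\ell-1}$) that $\tour(S_k)$ is indeed a single Hamiltonian dicycle of $G^{new}$.

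Granted these hypotheses, Theorem~\ref{thm:dfj-feasible} applied with $\sgn=\emptyset$ (valid as soon as $\ell\ge t+2$) yields $\zvec{\emptyset}{t}\in\saop^t(\homog{\atspdfjpolytope(G^{new})})$. From Definition~\ref{def:dfj-feasol}, the singleton values are $1$ on every solid thick edge, $\frac{t+1}{t+2}$ on every solid thin edge, $\frac{1}{t+2}$ on every thin-cycle dashed edge, and $0$ on every thick-cycle dashed edge. Since the dashed edges carry cost~$0$, they contribute nothing to the objective, so the LP value is $(2\ell+4)+2\ell\cdot\frac{t+1}{t+2}$, exactly the value from the balanced-LP proof; Section~\ref{sec:EVto0} handles the extension of this feasible solution to the metric completion.

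For the integer lower bound, I would argue that any Eulerian subdigraph $F$ of $G^{new}$ collapses to an Eulerian subdigraph $F'$ of $G$ by identifying $v^u$ and $v^b$ for each split vertex $v$: the dashed edges of $F$ become self-loops at $v$, preserving both degree balance and weak connectivity, while the solid-edge cost of $F$ equals the cost of $F'$. Consequently the minimum Eulerian cost on $G^{new}$ is at least the minimum Eulerian cost on $G$, which is $\ge 4\ell+2$ by the bound already used in Theorem~\ref{thm:sgbalIR} (ultimately following Cheung~\cite[Claim~3~of~Theorem~11]{cheung05}). The integrality ratio is therefore at least $\frac{4\ell+2}{(2\ell+4)+2\ell(t+1)/(t+2)}$, and the choice $\ell=2(2t+3)/\epsilon$ used in the balanced case yields a ratio of at least $1+\frac{1-\epsilon}{2t+3}$ on a digraph with $\nu=4\ell+4=\Theta(t/\epsilon)$ vertices. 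The main technical obstacle is the case-by-case verification of the good-tours property; once that single structural fact is in place, the rest of the argument is either a direct application of Theorem~\ref{thm:dfj-feasible} or a computation that mirrors the balanced-LP case.
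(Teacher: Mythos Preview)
Your proposal is correct and follows essentially the same route as the paper: apply the splitting operation to the digraph of Figure~\ref{fig-bal}, invoke Theorem~\ref{thm:dfj-feasible} with $\sgn=\emptyset$ to get the fractional solution in $\saop^t(\homog{\atspdfjpolytope(G^{new})})$, extend to the metric completion via Section~\ref{sec:EVto0}, and lower-bound the integral optimum by collapsing an Eulerian subdigraph of $G^{new}$ back to one of $G$. The only notable difference is that you explicitly outline the verification of the good-tours property (which the paper leaves to Figures~\ref{fig-graph}--\ref{fig-tour}), and your collapse argument identifies $v^u,v^b$ rather than contracting dashed edges---these are equivalent.
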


\begin{proof}
Given $t$ and $\epsilon$, we
fix $\ell= 2 (2t+3)/\epsilon$ to get a digraph $G$ shown in
Figure~\ref{fig-bal} where $\ell$ is the length of the ``middle path''.
Let the cost of each  edge in $G$ be~$1$.
Then we construct $G^{new}$ from $G$.
We keep the cost of edges in $G$ to be~$1$ and fix
the cost of new edges to be~$0$.
See Figure~\ref{fig-graph}; each solid edge has cost~$1$ and
each dashed edge has cost~$0$.
In the proof of Theorem \ref{thm:sgbalIR}, we claimed that
the minimum cost of an Eulerian subdigraph of $G$ is $\geq 4\ell+2$.
It can be seen that the minimum cost of an Eulerian subdigraph of
$G^{new}$ is $\geq 4\ell+2$.
(To see this, take an Eulerian subdigraph of $G^{new}$,
then contract all dashed edges contained in it,
to get an Eulerian subdigraph of $G$ of the same cost.)
Let $H$ be the metric completion of $G^{new}$.
Then, the optimal value of the integral solution in
$\saop^t(\atspdfjpolytope(H))$ is $\geq 4\ell+2$.
%
%

Now we invoke Theorem~\ref{thm:dfj-feasible}, according to which
the fractional solution $\yvec{\emptyset}{t} $
(Definition \ref{def:dfj-feasol}) is in $\saop^t(\atspdfjpolytope(G^{new}))$;
see Figure~\ref{fig-graph};
we have
$\yvec{\emptyset}{t}_e=1$ for each solid, thick edge $e$
(the solid edges of the outer cycle),
$\yvec{\emptyset}{t}_e=\frac{t+1}{t+2}$ for each solid, thin edge $e$
(the solid edges of the middle paths), 
while the value of the dashed edges do not contribute to the value of the objective.
By Section \ref{sec:EVto0}, this feasible solution can
be extended to a feasible solution in $\saop^t(\atspdfjpolytope(H))$.


Hence, the integrality ratio of $\saop^t(\atspdfjpolytope(H))$ is
%
%
\[\geq \frac{4\ell + 2}{2\ell + 4 + 2\ell \frac{t+1}{t+2}} \ge
	\frac{2(t+2)}{2t+3} - \frac{2}{\ell} \ge
	1+ \frac{1 - \epsilon}{2t+3}.
\]
\end{proof}


\section{Path ATSP}

Let $G=(V,E)$ be a digraph with nonnegative edge costs $c$, and let $p$
and $q$ be two distinguished vertices. We define
$\PATSPpolytope_{p,q}(G)$ to be the polytope of the following LP that has a variable $x_e$ for each edge $e$ of $G$:

\begin{align}
\textup{minimize} & \sum_{e} c_e x_e & \notag \\
\textup{subject to} && \notag \\
  & \xin{S} \geq 1,& \quad \forall S:~\emptyset \subset S \subseteq V-\{p\} \notag  \\
  & \xout{S} \geq 1,& \quad \forall S:~\emptyset \subset S \subseteq V-\{q\} \notag  \\
  & \xin{\{v\}}  = 1,& \quad \forall v \in V-\{p\} \notag  \\
  & \xout{\{v\}} = 1,& \quad \forall v \in V-\{q\} \notag  \\
  & \xin{\{p\}}  = 0,& \notag  \\
  & \xout{\{q\}}  = 0,& \notag  \\
  & \textbf{0} \leq \bx \leq \textbf{1} & \notag
\end{align}

In particular, when $G$ is a complete digraph with metric costs, the
above LP is the standard relaxation
for the $p$-$q$ path~ATSP, which is to compute a Hamiltonian (or,
spanning) dipath from $p$ to $q$ with minimum cost in the complete
digraph with metric costs. For  $\PATSPpolytope_{p,q}(G)$, we denote the
associated cone by $\PATSP_{p, q}(G)$.

(In the literature, the notation for the two distinguished vertices is
$s,t$, but we use $p,q$ to avoid conflict with our symbol $t$ for the
number of rounds of the \iSA\ procedure.)

An $(p, q)$\textit{-Eulerian subdigraph} $\overline G$ of $G$ is $V$ together
with a collection of edges of $G$ with multiplicities such that (i) for
any $v\in V-\{p, q\}$, the indegree of $v$ equals its outdegree and
(ii) the outdegree of $p$ is larger than its indegree by $1$ and the
indegree of $q$ is larger than its outdegree by $1$ and (iii) $\overline G$
is weakly connected (i.e., the underlying undirected graph is
connected). The $p$-$q$ path~ATSP on the metric completion $H$ of $G$
is equivalent to finding a minimum cost $(p, q)$-Eulerian subdigraph of
$G$.

For any subset $V^{\prime}$ of $V$, we use $G(V^{\prime})$ to denote
the subdigraph of $G$ induced by $V^{\prime}$. As before, we use $\pop_t$ to denote $\pop_t(E)$ (for the groundset $E$). Also, by the \emph{restriction} of $\zveconly$ on $E^{\prime}\subseteq E$ we mean the vector
$\zveconly|_{E^{\prime}}\in \reals^{\pop_{t+1}(E^{\prime})}$ that
is given by
$(\zveconly|_{E^{\prime}})_S=\zveconly_S$
for all $S \in \pop_{t+1}(E^{\prime})$.

\begin{lemma}
\label{reduction}
Let $t$ be a nonnegative integer.
Let $\zveconly\in \saop^t(\homog{\atspdfjpolytope(G)})$.
Suppose that there exists a dipath $Q\subseteq E$
from some vertex $q$ to another vertex $p$
such that
$\zveconly_e=1$ for each $e\in Q$.
Let $V_Q$ denote the set of internal vertices of the dipath $Q$, and
let $G^{\prime} = G(V - V_Q) = G - V_Q$.
Then,
\[
\zveconly|_{E(G^{\prime})}\in \saop^t(\homog {\PATSPpolytope_{p,q}(G^{\prime})}).
\]
\end{lemma}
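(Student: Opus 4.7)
The plan is to proceed by induction on $t$, with the inductive step driven by Fact~\ref{fact:recursive-SA}. Since the shift operator naturally produces cone vectors, I would actually prove the cone version of the lemma: if $\zveconly \in \saop^t(\atspdfj(G))$ satisfies $\zveconly_e = \zveconly_\emptyset$ for every $e \in Q$, then $\zveconly|_{E(G')} \in \saop^t(\PATSP_{p,q}(G'))$. The statement of the lemma is the specialization $\zveconly_\emptyset = 1$.

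Before doing the induction, I would isolate a ``tightness-propagates'' claim: under the hypothesis above, $\zveconly_{S+e'} = \zveconly_S$ for every $e' \in Q$ and every $S$ of size at most $t$. By Fact~\ref{fact:recursive-SA} the vector $\zveconly - e' * \zveconly$ lies in $\saop^{t-1}(\atspdfj(G))$, and its $\emptyset$-entry is $\zveconly_\emptyset - \zveconly_{e'} = 0$; since every SA cone vector $y$ satisfies $y_\emptyset \geq y_S \geq 0$ (these inequalities are immediate consequences of applying the SA definition to the constraints $0 \leq y_i \leq 1$), the entire vector must be zero, giving the claim.

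For the base case $t=0$ I would check directly that the restricted LP vector is feasible for $\PATSPpolytope_{p,q}(G')$. The critical observation is that each internal vertex $v\in V_Q$ has $\xin{\{v\}}=\xout{\{v\}}=1$ entirely realized by the two edges of $Q$ through $v$, so every other edge incident to $V_Q$ carries weight $0$ in $\zveconly$; hence removing $V_Q$ discards exactly the edges of $Q$ together with zero-weight edges. This immediately yields $\xin{\{v\}}=\xout{\{v\}}=1$ at every $v\in V'-\{p,q\}$; produces $\xin{\{p\}}=0$ and $\xout{\{q\}}=0$ because the unique positive-weight in-edge of $p$ and out-edge of $q$ lie on $Q$ and are removed; preserves $\xout{\{p\}}=1$ and $\xin{\{q\}}=1$; and preserves every cut constraint, since for a nonempty $S\subseteq V'-\{p\}$ no edge of $Q$ enters $S$ (its head vertices all lie in $V_Q\cup\{p\}$, none of which is in $S$), so $\xin{S}$ in $G'$ equals $\xin{S}$ in $G$ and is thus $\geq 1$. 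The $\xout$ cut constraints are symmetric.

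For the inductive step, Fact~\ref{fact:recursive-SA} reduces the task to showing, for every $e\in E(G')$, that $e*(\zveconly|_{E(G')})$ and $\zveconly|_{E(G')} - e*(\zveconly|_{E(G')})$ both lie in $\saop^{t-1}(\PATSP_{p,q}(G'))$. Restriction commutes with shift and difference by definition, so it suffices to verify the inductive hypothesis for $e*\zveconly$ and $\zveconly - e*\zveconly$, which lie in $\saop^{t-1}(\atspdfj(G))$. The tightness-propagates claim, applied with $S=\{e\}$, gives $(e*\zveconly)_{e'} = \zveconly_{\{e,e'\}} = \zveconly_e = (e*\zveconly)_\emptyset$ and $(\zveconly - e*\zveconly)_{e'} = \zveconly_\emptyset - \zveconly_e = (\zveconly - e*\zveconly)_\emptyset$ for every $e'\in Q$, so the cone-tightness hypothesis on $Q$ is preserved in both vectors and the induction hypothesis applies. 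The only delicate point is the tightness-propagates claim; once that is in hand, the rest is essentially bookkeeping about how restriction interacts with the recursive characterization of $\saop^t$.
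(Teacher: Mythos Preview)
Your proposal is correct and follows essentially the same approach as the paper: induction on $t$ via Fact~\ref{fact:recursive-SA}, with the key step being that the tightness condition on $Q$ is inherited by $e*\zveconly$ and $\zveconly-e*\zveconly$. Your version is slightly cleaner in two places: you state the induction for the cone (so you avoid the paper's case split on whether $(e*\zveconly)_\emptyset=0$, handled instead by your $0\le \zveconly_S\le \zveconly_\emptyset$ observation), and you spell out explicitly why $\zveconly_{\{e,e'\}}=\zveconly_e$ for $e'\in Q$, which the paper asserts ``by the definition of the \iSA\ procedure'' without elaboration.
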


\begin{proof}
Let $V^{\prime}=V - V_Q$ and let
$E^{\prime}=E(G^{\prime})$, i.e., $G^{\prime}=(V^{\prime}, E^{\prime})$.
The proof is by induction on $t$.
Denote $\zveconly|_{E^{\prime}}$ by $\zveconly^{\prime}$ for short.
Clearly, $\zveconly^{\prime}_{\emptyset}=1$.
Thus, we only need to prove
$\zveconly^{\prime}\in \saop^t(\homog {\PATSP_{p,q}(G^{\prime})})$.

\noindent \textbf{Base case:}
$t=0$.
Let $z$ be the subvector of $\zveconly$ on the singleton sets
$\{e_i\}$, and let $z^{\prime}$ be the subvector of
$\zveconly^{\prime}$ on the singleton sets.

We have to prove that $z^{\prime}$ is a feasible solution of
$\PATSPpolytope_{p, q}(G^{\prime})$. It is easy to see that
$z^{\prime}$ is in $[0, 1]^{E^{\prime}}$ and it satisfies the
degree~constraints. Thus, we are left with the verification of the cut
constraints.  Observe that each positive edge (on which $z$ is
positive) of $G$ with its head (tail) in $V_Q$ has its tail (head) in
$V_Q+q$ ($V_Q+p$).
Let $\emptyset\neq U\subseteq V^{\prime}$.
If $U\subseteq V^{\prime} - \{q\}$, then
observe that every edge in $\delta^{out}_{G}(U)$ has
its head in $V - V_Q - U = V'-U$, hence, we have
$z^{\prime}(\delta^{out}_{G^{\prime}}(U)) = z(\delta^{out}_{G}(U))\geq{1}$.
Similarly, if $U\subseteq V^{\prime} - \{p\}$, then we have
$z^{\prime}(\delta^{in}_{G^{\prime}}(U)) = z(\delta^{in}_{G}(U))\geq1$;
the equation holds because every edge in $\delta^{in}_{G}(U)$ has
its tail in $V - V_Q - U = V'-U$.

\noindent \textbf{Induction Step:}
For $t\geq 0$, we know
$\zveconly^{\prime}\in\saop^{t+1}(\homog{\PATSP_{p,q}(G^{\prime})})$
if and only if for any $e\in E^{\prime}$,
\begin{align}
\label{indStep2}
e*\zveconly^{\prime}	&\in \saop^{t}(\homog{\PATSP_{p,q}(G^{\prime})}) \\\nonumber
\zveconly^{\prime} - e*\zveconly^{\prime}	&\in
	\saop^{t}(\homog{\PATSP_{p,q}(G^{\prime})})\nonumber
\end{align}

Since $\zveconly$ is a feasible solution in $\saop^{t+1}(\homog{\atspdfj(G)})$, we
have
\begin{align}
e*\zveconly	&\in \saop^{t}(\homog{\atspdfj(G)}) \\\nonumber
\zveconly- e*\zveconly	&\in \saop^{t}(\homog{\atspdfj(G)})\nonumber
\end{align}

Note that $e\in E^{\prime}$. For any $S\subseteq E^{\prime}$ such that
$|S|\leq t+1$, we have $(e*\zveconly^{\prime})_S=\zveconly^{\prime}_{S\cup
\{e\}}=\zveconly_{S\cup \{e\}}=(e*\zveconly)_S$. Thus,
$e*\zveconly^{\prime}=(e*\zveconly)|_{E^{\prime}}$.
Similarly, we have  $\zveconly^{\prime} -
e*\zveconly^{\prime}=(\zveconly - e*\zveconly)|_{E^{\prime}}$.
For any $e^{i}\in Q$, since $\zveconly_{e_i}=1$, we have
$\zveconly_{\{e,e_i\}}=\zveconly_e$ (by the definition of the \iSA\ procedure),
hence, we have
\[
(e*\zveconly)_{\{e^{i}\}}=\zveconly_{\{e,e_i\}}=\zveconly_e=(e*\zveconly)_\emptyset.
\]
Similarly,
\[
(\zveconly-e*\zveconly)_{\{e^{i}\}}=\zveconly_{e^i}-\zveconly_{\{e,e_i\}}=1-\zveconly_e=(\zveconly-e*\zveconly)_\emptyset.
\]
\noindent \textbf{Case 1:} $(e*\zveconly)_{\emptyset}=0$.
In this case, all items in $e*\zveconly$ are zero. This implies $e*\zveconly^{\prime}
\in \saop^{t}(\homog{\PATSP_{p, q}(G^{\prime})})$.

\noindent \textbf{Case 2:} $(e*\zveconly)_\emptyset>0$.
In this case, we consider $\frac{e*\zveconly}{(e*\zveconly)_\emptyset}$. Note that
$(\frac{e*\zveconly}{(e*\zveconly)_\emptyset})_{\{e_i\}}=1$ for any $e_i\in Q$ and
$\frac{e*\zveconly}{(e*\zveconly)_\emptyset}\in \saop^{t}(\homog{\atspdfj(G)})$ with value $1$
at the item indexed by $\emptyset$.
By the inductive hypothesis, we have
$\frac{e*\zveconly}{(e*\zveconly)_\emptyset}|_{E^{\prime}}\in
\saop^{t}(\homog{\PATSP_{p, q}(G^{\prime})})$, i.e.,
$\frac{e*\zveconly^{\prime}}{(e*\zveconly^{\prime})_\emptyset}\in
\saop^{t}(\homog{\PATSP_{p, q}(G^{\prime})})$. Thus, $e*\zveconly^{\prime} \in
\saop^{t}(\homog{\PATSP_{p, q}(G^{\prime})})$.
\\
Similarly, we have $\zveconly^{\prime} - e*\zveconly^{\prime}\in
\saop^{t}(\homog{\PATSP_{p, q}(G^{\prime})})$. This completes the proof.
\end{proof}

From the last section, we know that $\zvec{\emptyset}{t}$
(Definition~\ref{def:dfj-feasol}) is in
$\saop^{t}(\homog{\atspdfj(G)})$, where $G$ is defined in
Figure~\ref{fig-graph};
note that $G$ is obtained from the
digraph and the good decomposition given in Figure~\ref{fig-bal}.
The solid edges in $G$ have cost~$1$ and the
dashed edges in $G$ have cost~$0$.

Let $q$ be the right-most vertex in the second row
	(incident to two dashed edges),
let $p$ be the  left-most vertex in the second row
	(incident to two dashed edges),
and let $Q$ be the dipath of solid edges from $q$ to $p$.
%
%
By the definition of $\zvec{\emptyset}{t}$, we have
$\zvec{\emptyset}{t}_{e_i}=1$ for each $e_i\in Q$. Let
$G^{\prime}=G(V^{\prime})$ where $V^{\prime}=V - V_Q$ where
$V_Q$ is the set of internal vertices of the dipath $Q$. The next
result is a direct corollary of Lemma~\ref{reduction}.

\begin{corollary}\label{thm:PATSP-feasol}
We have 
\[
\zvec{\emptyset}{t}|_{E(G^{\prime})}\in
	\saop^t(\homog{\PATSPpolytope_{p, q}(G^{\prime})}), \quad \forall t \in \integers_+.
\]
\end{corollary}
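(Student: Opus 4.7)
The plan is to invoke Lemma~\ref{reduction} directly, with $G$ taken to be the digraph of Figure~\ref{fig-graph} and $Q$ the stipulated dipath of solid edges from $q$ to $p$. To apply the lemma I need to verify its two hypotheses: (a) $\zvec{\emptyset}{t} \in \saop^t(\homog{\atspdfjpolytope(G)})$, and (b) $\zvec{\emptyset}{t}_e = 1$ for every edge $e$ on $Q$.

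For (a), I would appeal to Theorem~\ref{thm:dfj-feasible} with the choice $\sgn = \emptyset$. Its side-condition $|\sgn| \le |\fracset| - (t+2)$ reduces to $|\fracset| \ge t+2$, which holds because the anti-parallel length-$2$ dicycles in the underlying graph of Figure~\ref{fig-bal} can be made as numerous as desired by choosing the middle path long enough (as in the proof of Theorem~\ref{thm:dfj-sgIR}, where $\ell = 2(2t+3)/\epsilon$ was picked). The digraph $G$ of Figure~\ref{fig-graph} also satisfies the remaining hypotheses of Theorem~\ref{thm:dfj-feasible}: it arises by applying the splitting operation to the graph of Figure~\ref{fig-bal}, it inherits a good decomposition with $\fracset$ equal to the set of indices of the length-$2$ (thin) dicycles, every vertex has in- and outdegree at most $2$, and the good-tours property is verified by direct inspection of the $\tour(j)$'s in Figure~\ref{fig-graph}. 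All of this was already exploited in the proof of Theorem~\ref{thm:dfj-sgIR}.

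For (b), the key observation is that $Q$ consists exclusively of \emph{thick} solid edges. Indeed, the solid edges of the second row point uniformly in one horizontal direction, so the only way to form a dipath of solid edges from $q$ (rightmost) to $p$ (leftmost) is to leave the second row via a thick vertical edge, traverse the outer thick frame of $G$, and return. Every thick edge $e$ lies in the outer dicycle of the good decomposition, which is a member of $\notfracset$; hence for $S = \{e\}$ we have $S \cap D = \emptyset$ and $\szgoodfrac{\emptyset}{\{e\}} = 0$, so the first case of Definition~\ref{def:dfj-feasol} gives $\zvec{\emptyset}{t}_e = (t+2)/(t+2) = 1$, as required.

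With both hypotheses verified, Lemma~\ref{reduction} immediately yields
\[
\zvec{\emptyset}{t}|_{E(G^{\prime})} \in \saop^t(\homog{\PATSPpolytope_{p,q}(G^{\prime})}),
\]
which is the statement of the corollary. There is really no obstacle: all the substantive work has been done in Lemma~\ref{reduction} and Theorem~\ref{thm:dfj-feasible}; the role of this corollary is simply to package that work together with the specific structural observation about the outer cycle of Figure~\ref{fig-graph}.
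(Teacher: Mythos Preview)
Your proposal is correct and matches the paper's approach exactly: the paper states that the result ``is a direct corollary of Lemma~\ref{reduction}'' after noting in the preceding paragraph that $\zvec{\emptyset}{t}\in\saop^t(\homog{\atspdfjpolytope(G)})$ (from the previous section, i.e., Theorem~\ref{thm:dfj-feasible}) and that $\zvec{\emptyset}{t}_{e_i}=1$ for each $e_i\in Q$ (since $Q$ consists of thick solid edges lying in the outer dicycle, which is in $\notfracset$). Your verification of hypotheses (a) and (b) is precisely this argument, spelled out in a bit more detail.
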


The proof of the next lemma follows from arguments similar to
those in the proof of Theorem \ref{thm:dfj-sgIR}.
\begin{lemma}\label{PATSP:is-lowerbound}
The minimum cost of a $(p, q)$-Eulerian subdigraph of $G^{\prime}$ is
$\geq 3\ell$, where $\ell$ is the number of edges in
the middle path in $G$.
\end{lemma}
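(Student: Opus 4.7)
The plan is a direct structural argument on $G'$: I show that each of the $\ell$ solid edges along each of the three outer rows of $G'$ (the top, the second-from-top, and the bottom in Figure~\ref{fig-graph}) must be used at least once by any $(p,q)$-Eulerian subdigraph of $G'$, so the total cost is at least $3\ell$.

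First I would describe $G'$ explicitly. The dipath $Q$ consists of the solid (leftward) edges of the row containing $p$ and $q$, so $V_Q$ is exactly the set of interior vertices of that row. Therefore $G'$ retains the top row, the third row (with the rightward solid thin edges), and the bottom row in full, while only $p$ and $q$ remain from the second row. All dashed edges that used to connect the second and third rows at interior columns are destroyed together with $V_Q$, and the solid vertical edges between adjacent rows appear only at the leftmost and rightmost columns.

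The key structural observation is then the following: for every interior column and every one of the three outer rows, the corresponding vertex has exactly two incident edges in $G'$, namely the two row edges entering and leaving it. Since a $(p,q)$-Eulerian subdigraph has vertex set $V'$ and its underlying undirected graph must be connected, no interior row vertex may be isolated in the subdigraph, so both of its incident edges must appear at least once (the balance condition at the vertex is then automatic). Summing over the three rows yields at least $3\ell$ solid edges of unit cost, while the dashed edges contribute cost $0$; this gives the bound $\geq 3\ell$.

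I do not anticipate a serious obstacle: the proof reduces to verifying the list of edges incident to each interior row vertex in $G'$ (a direct inspection of Figure~\ref{fig-graph}) and invoking the connectivity requirement in the definition of a $(p,q)$-Eulerian subdigraph. An alternative would be to mimic the induction on $\ell$ used for the closed-tour bound in the proof of Theorem~\ref{thm:dfj-sgIR}, but the direct connectivity argument seems much cleaner here.
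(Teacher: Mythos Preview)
Your description of $G'$ is incorrect, and the error propagates to the core of the argument.

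The dipath $Q$ does \emph{not} lie in the row containing $p$ and $q$. Recall that Lemma~\ref{reduction} requires $\zvec{\emptyset}{t}_e=1$ for every $e\in Q$, and the only solid edges with value~$1$ are the thick edges of the outer cycle. The vertices $p,q$ sit in row~$y=2$ (the second row from the top), whose solid edges are thin and rightward; the path $Q$ goes $q=(8,2)\to(8,3)\to(7,3)\to\cdots\to(0,3)\to(0,2)=p$ through the \emph{top} (thick, leftward) row. Hence $V_Q$ is exactly the set of all top-row vertices, and $G'$ is obtained from $G^{new}$ by deleting row~$y=3$. All three remaining rows $y=0,1,2$ are kept in full, and \emph{every} dashed edge survives, since dashed edges join rows $1$ and $2$.

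Consequently, interior vertices of rows $1$ and $2$ have four incident edges in $G'$ (two solid row-edges and two dashed edges), not two. Your degree-$2$ connectivity argument is valid only for the interior of row~$0$, which yields merely $\ge\ell$. To get to $3\ell$ one has to work harder: for instance, after contracting the dashed edges (as in the proof of Theorem~\ref{thm:dfj-sgIR}) one obtains a two-row digraph $G^-$ on which the column-cut flow equations give $b_k=a+c_k-1$ with $a\ge1$, and a strong-connectivity argument (every interior middle vertex must be reachable from $p$ along edges of the subdigraph) shows that at most one $b_k=c_k$ can vanish when $a=1$; combining the cases $a=1$ and $a\ge2$ then gives $\ge3\ell$. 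This is essentially the ``arguments similar to Theorem~\ref{thm:dfj-sgIR}'' (ultimately an induction on $\ell$ in the style of Cheung) that the paper points to, and it is not a one-line degree count.
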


\begin{theorem}
Let $t$ be a nonnegative integer, and let $\epsilon\in\reals$
satisfy $0<\epsilon\ll{1}$.
There exists a digraph on $\nu=\nu(t,\epsilon)=\Theta(t/\epsilon)$ vertices
such that the integrality ratio for
the level~$t$ tightening of $\PATSP$ by the \sa\ procedure is
$\ge 1 + \frac{2-\epsilon}{3t+4}$.
\end{theorem}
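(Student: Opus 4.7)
The plan is straightforward given the work that precedes this theorem. Corollary~\ref{thm:PATSP-feasol} supplies the fractional feasible solution $\zvec{\emptyset}{t}|_{E(G')}\in\saop^t(\homog{\PATSPpolytope_{p,q}(G')})$, which (by Section~\ref{sec:EVto0}) extends with zeros to the \iSA\ tightening on the metric completion $H$ of $G'$, and Lemma~\ref{PATSP:is-lowerbound} supplies the integer lower bound $\ge 3\ell$. The remaining task is to evaluate the LP objective of $\zvec{\emptyset}{t}$ on $G'$ and to select $\ell$ appropriately as a function of $t$ and $\epsilon$.

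To evaluate the LP objective, I would sum the contributions of the edges of $G'$ that survive the deletion of $V_Q$ (the interior of the dipath $Q$, which runs from $q$ up one column, across the outer boundary, and down the opposite column to $p$). What remains are: the $\ell$ thick solid edges of the opposite outer row together with the two thick column edges incident to $p$ and $q$, each carrying $\zvec{\emptyset}{t}=1$; the $2\ell$ thin solid edges of the two middle-row dipaths (both of which survive, because $V_Q$ lives on the outer row, not on either middle row), each carrying $\zvec{\emptyset}{t}=(t+1)/(t+2)$; and the dashed edges, whose cost is $0$. Summing gives LP value $(\ell+2)\cdot 1 + 2\ell\cdot\tfrac{t+1}{t+2} = \ell\cdot\tfrac{3t+4}{t+2}+2$.

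Dividing the integer lower bound by the LP value yields
\[
\frac{3\ell}{\ell(3t+4)/(t+2)+2} \;=\; \frac{3(t+2)}{(3t+4)+2(t+2)/\ell},
\]
which tends to $1+\tfrac{2}{3t+4}$ as $\ell\to\infty$. A short calculation shows that taking $\ell=\left\lceil 2(t+2)(3t+6)/(\epsilon(3t+4))\right\rceil=\Theta(t/\epsilon)$ shrinks the correction $2(t+2)/\ell$ enough to absorb into $\epsilon$, giving an integrality ratio of at least $1+\tfrac{2-\epsilon}{3t+4}$. The vertex count of $G'$ is $|V(G)|-|V_Q|=4(\ell+1)-(\ell+1)=3(\ell+1)=\Theta(t/\epsilon)$, as required.

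The only step that requires a small amount of care is the precise structure of $V_Q$. Because Lemma~\ref{reduction} insists that $\zvec{\emptyset}{t}_e=1$ on every edge $e\in Q$, the dipath $Q$ is forced to use only thick outer edges, and hence deleting $V_Q$ removes an entire outer row of the figure rather than a middle-row path. Once that is recognized, both middle-row paths are present in $G'$ and the LP-value computation above goes through; the remainder is routine arithmetic paralleling the proofs of Theorem~\ref{thm:sgbalIR} and Theorem~\ref{thm:dfj-sgIR}.
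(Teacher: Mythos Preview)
Your proposal is correct and follows exactly the paper's approach: invoke Corollary~\ref{thm:PATSP-feasol}, extend to the metric completion via Section~\ref{sec:EVto0}, bound the integer optimum by Lemma~\ref{PATSP:is-lowerbound}, and compare; the paper's denominator $\tfrac{t+1}{t+2}\cdot 2\ell + \ell + 2$ is precisely your $(\ell+2)+2\ell\tfrac{t+1}{t+2}$, and your choice of $\ell$ (though slightly different from the paper's $\ell=2(3t+4)/\epsilon$) is equally valid. One small imprecision: the two surviving thick column edges are not incident to $p$ and $q$ themselves (those column edges are the ones deleted with $V_Q$) but rather to the row-1 split vertices at the opposite boundary; your edge count $\ell+2$ is nonetheless correct.
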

\begin{proof}
Given $t$ and $\epsilon$, we
fix $\ell= 2 (3t+4)/\epsilon$.
Consider the metric completion $H$ of $G^{\prime}$. By Section
\ref{sec:EVto0}, we can extend the feasible solution from 
Corollary~\ref{thm:PATSP-feasol} to a feasible solution to
$\saop^t(\homog{\PATSPpolytope_{p, q}(H)})$. This gives an upper bound
on the optimal value of a fractional feasible solution to
$\saop^t(\homog{\PATSPpolytope_{p,q}(H)})$. On the other hand,
Lemma \ref{PATSP:is-lowerbound} gives a lower bound on
the optimal value of an integral solution.
Thus, the integrality ratio is at least

\[\frac{3\ell}{\frac{t+1}{t+2}2\ell + l + 2} \ge
	1+\frac{2}{3t+4} - \frac{2}{\ell} \ge
	1+ \frac{2 - \epsilon}{3t+4}.
\]
\end{proof}

\paragraph {Acknowledgements:}
We thank a number of colleagues for useful discussions.
We are grateful to Sylvia Boyd and Paul Elliott-Magwood
for help with ATSP integrality gaps, and to
Levent Tun\c{c}el for sharing his knowledge of the area.


\bibliographystyle{abbrv}
\bibliography{atspbib}

%
%


%
%

\end{document}